\newskip\subfigcapskip	\subfigcapskip	= 2ex
\begin{document}


	\numberofauthors{2} 
	\author{
	Chao Li$^\dagger$, Michael Hay$^\dagger$, Vibhor Rastogi$^\ddagger$, Gerome Miklau$^\dagger$, Andrew McGregor$^\dagger$
	\and
	\alignauthor
	       \affaddr{$\dagger$University of Massachusetts Amherst,\\Amherst, Massachusetts, USA}\\
	       \affaddr{ \{chaoli,mhay,miklau,mcgregor\}@cs.umass.edu}
	\alignauthor
	       \affaddr{$\ddagger$University of Washington, \\Seattle, Washington, USA}\\
	       \affaddr{vibhor@cs.washington.edu}
	}

\twocolumn
\setlength{\columnsep}{.7cm}
	
\title{Optimizing Linear Counting Queries \\ Under Differential Privacy}

\maketitle{}

\pagestyle{empty}

\floatname{algorithm}{Program}

\newcommand{\reals}{R}
\newcommand{\vol}{\textup{Vol}}
\newcommand{\convex}{\textup{Convex}}

\newcommand{\minerror}{\mbox{\sc MinError}}
\newcommand{\minsens}{\mbox{\sc MinSensitivity}}

\newcommand{\vect}[1]{\mathbf{#1}}
\newcommand{\sens}[1]{\Delta_{#1}}
\newcommand{\inv}[1]{{#1}^{-1}}
\newcommand{\ep}[1]{\inv{({#1}^t{#1})}}

\def\alg{\mathcal{K}}  
\def\LM{\mathcal{L}}	
\def\MM{\mathcal{M}}	

\def\tr{\mbox{trace}}
\def\var{\mbox{Var}}
\newcommand{\error}[2]{\mbox{\sc Error}_{#1}( #2 )}
\newcommand{\totalerror}[2]{\mbox{\sc TotalError}_{#1}( #2 )}
\newcommand{\maxerror}[2]{\mbox{\sc MaxError}_{#1}( #2 )}

\def\aa{\mathbb{A}}  
\def\bb{\mathbb{B}}  

\def\plus{{\!+}}
\def\b{\vect{\tilde{b}}}  
\def\x{\vect{x}}  
\def\estx{\vect{\hat x}}
\def\y{\vect{y}}
\def\q{\vect{q}}  
\def\w{\vect{w}} 
\def\v{\vect{v}}  
\def\estq{\vect{\hat q}}
\def\A{\vect{A}}
\def\B{\vect{B}}
\def\Q{\vect{Q}}
\def\W{\vect{W}}
\def\M{\vect{M}}
\def\D{\vect{D}}
\def\P{\vect{P}}
\def\I{\vect{I}}
\def\V{\vect{V}}
\def\H{\vect{H}}
\def\G{\vect{G}}
\def\R{\vect{R}}
\def\X{\vect{X}}
\def\Wav{\vect{Y}}

\def\PM{\P_{\M}}
\def\DM{\D_{\M}}
\def\DS{\D_s}
\def\DSinv{\DS^{-1}}


\def\WW{\W}		
\def\Wbool{\W_{01}}		
\def\Wrang{\W_{R}}		
\def\Wunit{\W_{unit}}		

\def\real{\mathbb{R}}

\def\RR{\vect{R}}

\newcommand{\ff}[1]{#1}


\newcommand{\mh}[1]{}
\newcommand{\gm}[1]{}
\newcommand{\eat}[1]{}
\newcommand{\cut}[1]{}

\newcommand{\set}[1]{\{#1\}}   

\newtheorem{definition}{Definition}[section]
\newtheorem{proposition}{Proposition}
\newtheorem{corollary}{Corollary}
\newtheorem{conjecture}{Conjecture}
\newtheorem{theorem}{Theorem}
\newtheorem{problem}{Problem}
\newtheorem{example}{Example}
\newtheorem{remark}{Remark}

\def\nbrs{nbrs}
\def\<{\langle}
\def\>{\rangle}

\def\qq{\tilde{q}}
\def\qbar{\overline{q}}

\def\Q{\mathbf{Q}}
\def\QQ{\mathbf{\tilde{Q}}}
\def\QC{\mathbf{\overline{Q}}}
\def\qq{\tilde{q}}
\def\qbar{\overline{q}}

\def\H{\mathbf{H}}
\def\HH{\mathbf{\tilde{H}}}
\def\HC{\mathbf{\overline{H}}}
\def\hh{\tilde{h}}
\def\hbar{\overline{h}}

\def\Lap{\mbox{Lap}}
\def\cnt{c}
\def\cons{\gamma}
\def\db{I}

\def\hght{{\ell}}  
\def\hv{\hght(v)}
\def\wt{\alpha}
\def\root{r}

\newcommand{\frob}[1]{||#1||_f}
\newcommand{\E}{\mathbb{E}}
\newcommand{\Ldist}[3]{||#1 -#2||_{#3}}
\newcommand{\rank}{\textup{rank}}
\newcommand{\trace}{\textup{Trace}}

\newcommand{\Ltwo}[1]{||#1||_2}
\newcommand{\Lone}[1]{\left\Vert #1  \right\Vert_1}

\newcommand{\reffull}[1]{#1}

\newtheorem{lemma}{Lemma}

\newcommand{\one}[1]{\mathbb{I}_{#1}}
\def\U{\mathcal U}
\def\Z{succZ} 
\def\s{s}
\def\m{M}
\def\mm{\tilde{M}}

\paragraph*{N.B} This is the full version of the conference paper published as \cite{Li:2010Optimizing-Linear}.  This version includes an Appendix with proofs and additional results, and corrects a few typographical errors discovered after publication.  It also adds an improvement in the error bounds achieved under $(\epsilon,\delta)$-differential privacy, included as Theorem~\ref{thm:l2diffpriv}.

\begin{abstract}
Differential privacy is a robust privacy standard that has been successfully applied to a range of data analysis tasks.  But despite much recent work, optimal strategies for answering a collection of related queries are not known.

We propose the matrix mechanism, a new algorithm for answering a workload of predicate counting queries.  Given a workload, the mechanism requests answers to a different set of queries, called a query strategy, which are answered using the standard Laplace mechanism.  Noisy answers to the workload queries are then derived from the noisy answers to the strategy queries.  This two stage process can result in a more complex correlated noise distribution that preserves differential privacy but increases accuracy.

We provide a formal analysis of the error of query answers produced by the mechanism and investigate the problem of computing the optimal query strategy in support of a given workload.  We show this problem can be formulated as a rank-constrained semidefinite program.  Finally, we analyze two seemingly distinct techniques, whose similar behavior is explained by viewing them as instances of the matrix mechanism.
\end{abstract}

~\\ \noindent\textbf{Categories and Subject Descriptors:}
H.2.8 [\textbf{Database Management}]: Database Applications\textemdash
\textit{Statistical databases}; G.1 [\textbf{Numerical Analysis}]: Optimization
\vspace*{2.5mm}
~\\ \noindent\textbf{General Terms:}
Algorithms, Security, Theory
\vspace*{2.5mm}
~\\ \noindent\textbf{Keywords:}
private data analysis, output perturbation, differential privacy, semidefinite program.

\section{Introduction}

Differential privacy~\cite{dwork2006calibrating} offers participants in a dataset the compelling assurance that information released about the dataset is virtually indistinguishable whether or not their personal data is included.  It protects against powerful adversaries and offers precise accuracy guarantees.  As outlined in recent surveys~\cite{dwork2008differential,dwork2009differential,Dwork:2010A-firm-foundation}, it has been applied successfully to a range of data analysis tasks and to the release of summary statistics such as contingency tables~\cite{barak2007privacy}, histograms~\cite{Hay:2010Boosting-the-Accuracy,xiao2010differential}, and order statistics~\cite{nissim2007smooth}.  

Differential privacy is achieved by introducing randomness into query answers.  The original algorithm for achieving differential privacy, commonly called the Laplace mechanism~\cite{dwork2006calibrating}, returns the sum of the true answer and random noise drawn from a Laplace distribution.  The scale of the distribution is determined by a property of the query called its sensitivity: roughly the maximum possible change to the query answer induced by the addition or removal of one tuple.  Higher sensitivity queries are more revealing about individual tuples and must receive greater noise.

If an analyst requires only the answer to a single query about the database, then the Laplace mechanism has recently been shown optimal in a strong sense \cite{ghosh2009universally}.  But when multiple query answers are desired, an optimal mechanism is not known.  

At the heart of our investigation is the suboptimal behavior of the Laplace mechanism when answers to a set of correlated queries are requested.  We say two queries are correlated if the change of a tuple in the underlying database can affect both answers.  Asking correlated queries can lead to suboptimal results because correlation increases sensitivity and therefore the magnitude of the noise.  The most extreme example is when two duplicate queries are submitted.  The sensitivity of the pair of queries is twice that of an individual query.  This means the magnitude of the noise added to each query is doubled, but combining the two noisy answers (in the natural way, by averaging) gives a less accurate result than if only one query had been asked.  

Correlated workloads arise naturally in practice.  If multiple users are interacting with a database, the server may require that they share a common privacy budget to avoid the threat of a privacy breach from collusion.  Yet, in acting independently, they can easily issue redundant or correlated queries.  Further, in some settings it is appealing to simultaneously answer a large structured set of queries, (e.g. all range queries), which are inherently correlated.

In this work we propose the \emph{matrix mechanism}, an improved mechanism for answering a workload of predicate counting queries.  Each query is a linear combination of base counts reporting the number of tuples with the given combination of attribute values.  A set of such queries is represented as a matrix in which each row contains the coefficients of a linear query. Histograms, sets of marginals, and data cubes can be viewed as workloads of linear counting queries.  

The matrix mechanism is built on top of the Laplace mechanism.  Given a workload of queries, the matrix mechanism asks a different set of queries, called a {\em query strategy}, and obtains noisy answers by invoking the Laplace mechanism.  Noisy answers to the workload queries are then derived from the noisy answers to the strategy queries.  There may be more than one way to estimate a workload query from the answers to the strategy queries.  In this case the derived answer of the matrix mechanism combines the available evidence into a single consistent estimate that minimizes the variance of the noisy answer.

While the Laplace mechanism always adds independent noise to each query in the workload, the noise of the matrix mechanism may consist of a complex linear combination of independent noise samples.  Such correlated noise preserves differential privacy but can allow more accurate results, particularly for workloads with correlated queries.  

The accuracy of the matrix mechanism depends on the query strategy chosen to instantiate it.  This paper explores the problem of designing the optimal strategy for a given workload.  To understand the optimization problem we first analyze the error of any query supported by a strategy.  The error is determined by two essential features of the strategy: its \emph{error profile}, a matrix which governs the distribution of error across queries, and its \emph{sensitivity}, a scalar term that uniformly scales the error on all queries.  
%
%
Accurately answering a workload of queries requires choosing a strategy with a good error profile (relatively low error for the queries in the workload) and low sensitivity.  We show that natural strategies succeed at one, but not both, of these objectives.

We then formalize the optimization problem of finding the strategy that minimizes the total error on a workload of queries as a semi-definite program with rank constraints.  Such problems can be solved with iterative algorithms, but we are not aware of results that bound the number of iterations until convergence.  In addition, we propose two efficient approximations for deciding on a strategy, as well as a heuristic that can be used to improve an existing strategy.

Lastly, our framework encompasses several techniques proposed in the literature.  We use it to analyze two techniques~\cite{Hay:2010Boosting-the-Accuracy,xiao2010differential}, each of which can be seen as an instance of the matrix mechanism designed to support the workload consisting of all range queries.  Our analysis provides insight into the common behavior of these seemingly distinct techniques, and we prove novel bounds on their error.

After a background discussion we describe the matrix mechanism in Section~\ref{sec:query_answering}.  We analyze its error formally in Section~\ref{sec:analysis}.  In Section~\ref{sec:optimizing}, we characterize the optimization problem of choosing a query strategy and propose approximations.  We use our results to compare existing strategies in Section~\ref{sec:apps}.  We discuss related work, including other recent techniques that improve on the Laplace mechanism, in Section~\ref{sec:related}.

\section{Background}  \label{sec:background}

This section describes the domain and queries considered, and reviews the basic principles of differential privacy.  We use standard terminology of linear algebra throughout the paper.  Matrices and vectors are indicated with bold letters (e.g $\A$ or $\x$) and their elements are indicated as $a_{ij}$ or $x_i$. For a matrix $\A$, $\A^t$ is its transpose, $\inv{\A}$ is its inverse, and $\tr(\A)$ is its trace (the sum of values on the main diagonal).  We use $diag(c_1, \dots c_n)$ to indicate an $n \times n$ diagonal matrix with scalars $c_i$ on the diagonal.  We use $\vect{0}^{m \times n}$ to indicate a matrix of zeroes with $m$ rows and $n$ columns.

\subsection{Linear queries}

The database is an instance $\db$ of relational schema $R(\aa)$, where $\mathbb{A}$ is a set of attributes.  We denote by $dom(\aa)$ the cross-product of the domains of attributes in $\aa$.  The analyst chooses a set of attributes $\bb \subseteq \aa$ relevant to their task.  For example if the analyst is interested in a subset of two dimensional range queries over attributes $A_1$ and $A_2$, they would set $\bb=\set{A_1,A_2}$.  We then form a frequency vector $\x$ with one entry for each element of $dom(\bb)$. For simplicity we assume $dom(\bb) = \{1,2,\ldots,n\}$ and for each $i \in dom(\bb)$, $x_i$ is the count of tuples equal to $i$ in the projection $\Pi_\bb(I)$. We represent $\x$ as a column vector of counts: $\x=[x_1 \dots x_n]^t$.

A {\em linear query} computes a linear combination of the counts in $\x$.  

\begin{definition}[Linear query]
A {\em linear query} is a length-$n$ row vector $\q=[q_1 \dots q_n]$ with each $q_i \in \mathbb{R}$.  
The answer to a linear query $\q$ on $\x$ is the vector product $\q\x = q_1x_1 + \dots + q_nx_n$.
\end{definition}

\eat{ 
\begin{example}
Our examples use a domain of size four: $dom=\set{1,2,3,4}$.  The linear query $[1, 1, 1, 1]$ returns the sum of all counts, i.e. the total number of tuples in the relation.  The linear query $[0,0,1,1]$ is a range query, returning the count of all tuples $3 \leq t.A \leq 4$.  More generally, the linear query $[2, 0, -1, 1.4]$ returns a linear combination of counts: $2x_1 - x_3 + 1.4x_4$.
\end{example}  }

We will consider sets of linear queries organized into the rows of a {\em query matrix}.  

\begin{definition}[Query matrix]
A {\em query matrix} is a collection of $m$ linear queries, arranged by rows to form an $m \times n$ matrix.
\end{definition}

If $\Q$ is an $m \times n$ query matrix, the query answer for $\Q$ is a length $m$ column vector of query results, which can be computed as the matrix product $\Q \x$.

\begin{example}
Figure \ref{fig:query-matrix} shows three query matrices, which we use as running examples throughout the paper.  $\I_4$ is the identity matrix of size four.  This matrix consists of four queries, each asking for an individual element of $\x$. $\H_4$ contains seven queries, which represent a binary hierarchy of sums: the first row is the sum over the entire domain (returning the total number of tuples in $\I$), the second and third rows each sum one half of the domain, and the last four rows return individual elements of $\x$.  $\Wav_4$ is the matrix of the Haar wavelet.  It can also be seen as a hierarchical set of queries: the first row is the total sum, the second row computes the difference between sums in two halves of the domain, and the last two rows return differences between smaller partitions of the domain.  In Section \ref{sec:apps} we study general forms of these matrices for domains of size $n$ \cite{Hay:2010Boosting-the-Accuracy,xiao2010differential}.  
\end{example}

\begin{figure}[t]
\centering 
 
\begin{tabular}{ccc}
\small $\begin{bmatrix} 
1 & 0 & 0 & 0\\
0 & 1 & 0 & 0\\
0 & 0 & 1 & 0 \\
0 & 0 & 0 & 1 \\
\end{bmatrix}$ 
& 
\small $\begin{bmatrix}
1 & 1 & 1 & 1 \\
1 & 1 & 0 & 0 \\
0 & 0 & 1 & 1 \\
1 & 0 & 0 & 0\\
0 & 1 & 0 & 0\\
0 & 0 & 1 & 0 \\
0 & 0 & 0 & 1 \\
\end{bmatrix}$ 
&  
 \small $\begin{bmatrix}
1 & 1 & 1 & 1\\
1 & 1 & \mbox{-}1 & \mbox{-}1\\
1 & \mbox{-}1 & 0 & 0 \\
0 & 0 & 1 & \mbox{-}1 \\
\end{bmatrix}$ \\
$\I_4$ & $\H_4$ & $\Wav_4$ \\
\end{tabular}
\caption{\label{fig:query-matrix} Query matrices with $dom=\set{1,2,3,4}$.  Each is full rank.  $I_4$ returns each unit count.  $H_4$ computes seven sums, hierarchically partitioning the domain.  $W_4$ is based on the Haar wavelet.}
\end{figure}

\subsection{The Laplace mechanism}

Because the true counts in $\x$ must be protected, only noisy answers to queries, satisfying differential privacy, are released.  We refer to the noisy answer to a query as an {\em estimate} for the true query answer.  The majority of our results concern classical $\epsilon$-diff\-erential privacy, reviewed below.  (We consider a relaxation of differential privacy briefly in Sec.~\ref{sec:sub:approx}.)

Informally, a randomized algorithm is differentially private if it produces statistically close outputs whether or not any one individual's record is present in the database.  For any input database $\db$, let $\nbrs(\db)$ denote the set of neighboring databases, each differing from $\db$ by at most one record; i.e., if $\db' \in \nbrs(\db)$, then $|(\db - \db') \cup (\db' - \db)| = 1$.
\eat{ 
\footnote{Differential privacy has been defined inconsistently in the literature.  The original concept, called $\epsilon$-indistinguishability~\cite{dwork2006calibrating}, defines neighboring databases using hamming distance rather than symmetric difference (i.e., $\db'$ is obtained from $\db$ by \emph{replacing} a tuple rather than adding/removing a tuple).  The choice of definition affects the calculation of query sensitivity.  We use the above definition (from Dwork~\cite{dwork2008differential}) but observe that our results also hold under indistinguishability, due to the fact that $\epsilon$-differential privacy (as defined above) implies $2\epsilon$-indistinguishability.}  
}

\begin{definition}[$\epsilon$-differential privacy] A randomized algorithm $\alg$ is $\epsilon$-differentially private if for any instance $I$, any $I' \in \nbrs(I)$, and any subset of outputs $S \subseteq Range(\alg)$, the following holds:
\[
Pr[ \alg(I) \in S] \leq \exp(\epsilon) \times Pr[ \alg(I') \in S],
\]		
where the probability is taken over the randomness of the $\alg$.
	\end{definition}

Differential privacy can be achieved by adding random noise to query answers.  The noise added is a function of the privacy parameter, $\epsilon$, and a property of the queries called {\em sensitivity}.  The sensitivity of a query bounds the possible change in the query answer over any two neighboring databases.  For a single linear query, the sensitivity bounds the absolute difference of the query answers.  For a query matrix, which returns a vector of answers, the sensitivity bounds the $L_1$ distance between the answer vectors resulting from any two neighboring databases.   
The following proposition extends the standard notion of query sensitivity to query matrices.  Note that because two neighboring databases $\db$ and $\db'$ differ in exactly one tuple, it follows that their corresponding vectors $\x$ and $\x'$ differ in exactly one component, by exactly one. 

\begin{proposition}[Query matrix sensitivity]
The sensitivity of matrix $\Q$, denoted $\sens{\Q}$, is:  
\begin{eqnarray*}
\sens{\Q} &=^{def}& \max_{\Lone{\x-\x'}=1} \Lone{\Q\x - \Q\x'} 
			= \max_{j}\sum_{i=1}^n |q_{ij}|.
\end{eqnarray*}
Thus the sensitivity of a query matrix is the maximum $L_1$ norm of a column.
\end{proposition}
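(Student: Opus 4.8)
The first equality is just the definition of $\sens{\Q}$, so the entire content lies in the second equality: that the maximum $L_1$-dilation of $\Q$ over neighboring frequency vectors equals the largest absolute column sum $\max_j \sum_i |q_{ij}|$. The plan is to pass to the difference vector. Writing $\v = \x - \x'$ and using linearity of matrix multiplication, $\Q\x - \Q\x' = \Q\v$, so the quantity to be maximized is $\Lone{\Q\v}$ subject to $\Lone{\v}=1$. I would prove the two inequalities separately.

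For the upper bound I would show $\Lone{\Q\v} \le \max_j \sum_i |q_{ij}|$ for \emph{every} $\v$ with $\Lone{\v}=1$. This is a short triangle-inequality computation: expand $\Lone{\Q\v} = \sum_i \bigl| \sum_j q_{ij} v_j \bigr|$, bound each term by $\sum_j |q_{ij}|\,|v_j|$, exchange the order of summation to obtain $\sum_j |v_j| \bigl( \sum_i |q_{ij}| \bigr)$, and finally dominate this weighted average of the column sums by the largest column sum, using $\sum_j |v_j| = \Lone{\v} = 1$. For the matching lower bound I would exhibit a witness attaining the bound: letting $j^\star$ index a column of maximum absolute sum and taking $\v = \vect{e}_{j^\star}$, the $j^\star$-th standard basis vector, the product $\Q\vect{e}_{j^\star}$ is exactly the $j^\star$-th column of $\Q$, so $\Lone{\Q\vect{e}_{j^\star}} = \sum_i |q_{ij^\star}| = \max_j \sum_i |q_{ij}|$ while $\Lone{\vect{e}_{j^\star}}=1$. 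Combining the two bounds yields equality.

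The one point requiring care is the interface between the discrete setting and the continuous-looking constraint $\Lone{\v}=1$. The remark preceding the statement guarantees that the difference vectors actually arising from neighboring databases have the special form $\pm\vect{e}_j$ (a single coordinate changed by exactly one), which is precisely why the witness $\vect{e}_{j^\star}$ is a legitimate feasible point. Conceptually, the relaxation to arbitrary real $\v$ with $\Lone{\v}=1$ does not enlarge the maximum: $\Lone{\Q \cdot}$ is convex and positively homogeneous, so its maximum over the $L_1$ ball is attained at an extreme point $\pm\vect{e}_j$. The upper-bound argument above is exactly what makes this explicit, confirming that the discrete and relaxed interpretations of the maximum coincide and that both equal $\max_j \sum_i |q_{ij}|$.
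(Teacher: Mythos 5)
Your proof is correct. The paper itself gives no explicit proof of this proposition; it relies on the remark immediately preceding it, namely that two neighboring databases yield count vectors $\x$, $\x'$ differing in exactly one component by exactly one, so that $\x - \x' = \pm\vect{e}_j$ for some $j$ and hence $\Lone{\Q\x - \Q\x'}$ is exactly the $L_1$ norm of column $j$ of $\Q$ — the maximization is then over a finite set of signed standard basis vectors and the equality is immediate. Your argument takes the slightly more general route of treating the constraint $\Lone{\v}=1$ as ranging over the full $L_1$ sphere and establishing the operator-norm identity $\max_{\Lone{\v}=1}\Lone{\Q\v} = \max_j \sum_i |q_{ij}|$ via the triangle-inequality upper bound plus the basis-vector witness. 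This buys you something the paper's shortcut does not: it shows the relaxed (continuous) and discrete readings of the definition coincide, which matters because the proposition as written literally quantifies over all $\v$ with $\Lone{\v}=1$, not just those arising from neighboring databases. Your final paragraph correctly identifies this as the one delicate point and resolves it; the convexity/extreme-point remark is a clean way to see why the relaxation is tight, and the explicit computation makes it rigorous. In short, your proof is a complete and self-contained version of what the paper leaves implicit, and is marginally stronger than what the paper needs.
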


\begin{example}
The sensitivities of the query matrices in Figure \ref{fig:query-matrix} are: $\sens{\I_4}=1$ and $\sens{\H_4}=\sens{\Wav_4}=3$.  A change by one in any component $\x_i$ will change the query answer $\I_4\x$ by exactly one, but will change $\H_4\x$ and $\Wav_4\x$ by three since each $x_i$ contributes to three linear queries in both $\H_4$ and $\Wav_4$.
\end{example}

The following proposition describes an $\epsilon$-differ\-entially private algorithm, adapted from Dwork et al.~\cite{dwork2008differential}, for releasing noisy answers to the workload of queries in matrix $\W$.  The algorithm adds independent random samples from a scaled Laplace distribution.

\begin{proposition}[Laplace mechanism] \label{prop:laplace}
Let $\W$ be a qu\-ery matrix consisting of $m$ queries, and let $\b$ be a length-$m$ column vector consisting of independent samples from a Laplace distribution with scale $1$. Then the randomized algorithm $\LM$ that outputs the following vector is $\epsilon$-differentially private:
$$\LM(\W,\x) = \W\x + (\frac{\sens{\W}}{\epsilon})\b.$$
\end{proposition}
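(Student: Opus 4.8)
The plan is to establish the privacy guarantee by directly comparing the probability density of the mechanism's output on two neighboring databases, bounding the ratio pointwise and then integrating over any measurable output set $S$. First I would fix two neighboring instances with frequency vectors $\x$ and $\x'$; as observed just before the proposition, these satisfy $\Lone{\x - \x'} = 1$. Writing $\lambda = \sens{\W}/\epsilon$ for the noise scale, the output $\LM(\W,\x)$ is distributed as $\W\x$ plus a vector of $m$ independent $\Lap(\lambda)$ samples, so by independence its density at a point $\y \in \real^m$ factors across coordinates as
\[
p_\x(\y) = \prod_{i=1}^m \frac{1}{2\lambda}\exp\!\left(-\frac{|y_i - (\W\x)_i|}{\lambda}\right).
\]

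Next I would form the ratio $p_\x(\y)/p_{\x'}(\y)$. The normalizing constants cancel, leaving $\exp\!\big(\lambda^{-1}\sum_i (|y_i - (\W\x')_i| - |y_i - (\W\x)_i|)\big)$. Applying the triangle inequality coordinate-wise bounds each summand by $|(\W\x)_i - (\W\x')_i| = |(\W(\x - \x'))_i|$, so the exponent is at most $\lambda^{-1}\Lone{\W(\x-\x')}$. This is the crux of the argument: it replaces a bound that depends on the unknown output $\y$ with one depending only on how $\W$ transforms the difference $\x - \x'$.

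I would then invoke the sensitivity definition from the preceding proposition. Because $\Lone{\x - \x'} = 1$, the quantity $\Lone{\W(\x - \x')} = \Lone{\W\x - \W\x'}$ is at most $\sens{\W}$ by definition of $\sens{\W}$ as the maximum of $\Lone{\W\x - \W\x'}$ over neighboring inputs. Substituting $\lambda = \sens{\W}/\epsilon$ collapses the bound on the density ratio to exactly $\exp(\epsilon)$. Finally, integrating the pointwise inequality $p_\x(\y) \leq \exp(\epsilon)\, p_{\x'}(\y)$ over any $S \subseteq Range(\LM)$ yields $Pr[\LM(\W,\x) \in S] \leq \exp(\epsilon)\, Pr[\LM(\W,\x') \in S]$, which is precisely the definition of $\epsilon$-differential privacy.

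The main obstacle is conceptual rather than computational: one must correctly exploit the independence of the Laplace samples to factor the joint density, and then recognize that the per-coordinate triangle inequality aggregates exactly into the $L_1$ quantity that the matrix sensitivity controls. Everything after that step is routine.
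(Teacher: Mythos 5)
Your proof is correct and is precisely the standard density-ratio argument: factor the joint Laplace density using independence, bound the log-ratio coordinate-wise via the triangle inequality by $\frac{\epsilon}{\sens{\W}}\Lone{\W\x - \W\x'} \leq \epsilon$ (using that neighboring databases give $\Lone{\x-\x'}=1$ and the definition of $\sens{\W}$), and integrate the pointwise bound over $S$. The paper itself states Proposition~\ref{prop:laplace} without proof, simply citing Dwork et al., so there is no in-paper argument to diverge from; your derivation is exactly the standard proof the citation refers to, and it matches the pointwise-ratio template the paper does spell out for the Gaussian analogue in its proof of Theorem~\ref{thm:l2diffpriv}.
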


Recall that $\W\x$ is a length-$m$ column vector representing the true answer to each linear query in $\W$.  The algorithm adds independent random noise, scaled by $\epsilon$ and the sensitivity of $\W$.  Thus $\LM(\W,\x)$, which we call the {\em output vector}, is a length-$m$ column vector containing a noisy answer for each linear query in $\W$.

\section{The Matrix Mechanism}
\label{sec:query_answering}

Central to our approach is the distinction between a query {\em strategy} and a query {\em workload}.  Both are sets of linear queries represented as matrices.  The workload queries are those queries for which the analyst requires answers.  Submitting the workload queries to the Laplace mechanism described above is the standard approach, but may lead to greater error than necessary in query estimates.  Instead we submit a different set of queries to the differentially private server, called the query strategy.  We then use the estimates to the strategy queries to derive estimates to the workload queries.  Because there may be more than one derived estimate for a workload query, we wish to find a single consistent estimate with least error. 

In this section we present the formal basis for this derivation process. We define the set of queries whose estimates can be derived and we provide optimal mechanisms for deriving estimates.  Using this derivation, we define the {\em matrix mechanism}, an extension of the Laplace mechanism that uses a query strategy $\A$ to answer a workload $\W$ of queries.  The remainder of the paper will then investigate, given $\W$, how to choose the strategy $\A$ to instantiate the mechanism.

\eat{  
For example, workloads we will consider include $\Wrang$, the set of all range queries over $dom(\bb)$, and $\Wbool$, the set of all predicate queries (with coefficients of 0 or 1).  These workloads are large and have high sensitivity: answering each query in the workload would result in very noisy results for all queries.
Our goal is to devise optimal query strategies that support these workloads, as well as other ad hoc workloads that may be relevant to particular analysis tasks. 
}

\subsection{Deriving new query answers}

Suppose we use the Laplace mechanism to get noisy answers to a query strategy $\A$.  Then there is sufficient evidence, in the noisy answers to $\A$, to construct an estimate for a workload query $\w$ if $\w$ can be expressed as a linear combination of the strategy queries:
\begin{definition}[Queries supported by a strategy] \label{def:sup}
A strategy $\A$ supports a query $\w$ if $\w$ can be expressed as a linear combination of the rows of $\A$.
\end{definition}
In other words, $\A$ supports any query $\w$ that is in the subspace defined by the rows of $\A$.  If a strategy matrix consists of at least $n$ linearly independent row vectors (i.e., its row space has dimension $n$), then it follows immediately that it supports all linear queries.  Such matrices are said to have {\em full rank}.  We restrict our attention to full rank strategies and defer discussion of this choice to the end of the section.


To derive new query answers from the answers to $\A$ we first compute an estimate, denoted $\estx_\A$, of the true counts $\x$.  Then the derived estimate for an arbitrary linear query $\w$ is simply the vector product $\w\estx_\A$.  The estimate of the true counts is computed as follows:

\begin{definition}[Estimate of $\x$ using $\A$] \label{def:estimate}
Let $\A$ be a full rank query strategy $\A$ consisting of $m$ queries, and let $\y = \LM(\A,\x)$ be the noisy answers to $\A$.  Then $\estx_\A$ is the estimate for $\x$ defined as:
$$\estx_\A = \A^\plus\y,$$
where $\A^\plus = \inv{(\A^t\A)}\A^t$ is the {\em pseudo-inverse} of $\A$. 
\end{definition}

Because $\A$ has full rank, the number of queries in $\A$, $m$, must be at least $n$.  When $m=n$, then $\A$ is invertible and $\A^\plus=\inv{\A}$. Otherwise, when $m>n$, $\A$ is not invertible, but $\A^\plus$ acts as a left-inverse for $\A$ because $\A^{+}\A=\I$. We explain next the justification for the estimate $\estx_\A$ above, and provide examples, considering separately the case where $m=n$ and the case where $m>n$.  

~ \\ \noindent \textbf{$\A$ is square. } 
In this case $\A$ is an $n \times n$ matrix of rank $n$, and it is therefore invertible.  Then given the output vector $\y$, it is always possible to compute a unique estimate for the true counts by inverting $\A$.  The expression in Definition \ref{def:estimate} then simplifies to : 
 	$$\estx_\A = \inv{\A}\y.$$
	In this case, query strategy $\A$ can be viewed as a linear transformation of the true counts, to which noise is added by the privacy mechanism.  The transformation is then reversed, by the inverse of $\A$, to produce a consistent estimate of the true counts.  
	
\begin{example}
In Figure \ref{fig:query-matrix}, $\I_4$ and $\Wav_4$ are both square, full rank matrices which we will use as example query strategies.  The inverse of $\I_4$ is just $\I_4$ itself, reflecting the fact that since $\I_4$ asks for individual counts of $\x$, the estimate $\estx$ is just the output vector $\y$.  The inverse of $\Wav_4$ is shown in Figure \ref{fig:estimate}(c).  Row $i$ contains the coefficients used to construct an estimate of count $x_i$.  For example, the first component of $\estx_\A$ will be computed as the following weighted sum: $.25y_1 + .25y_2 + .5y_3$.
\end{example}

Specific transformations of this kind have been studied before.  A Fourier transformation is used in \cite{barak2007privacy}, however, rather than recover the entire set of counts, the emphasis is on a set of marginals.  A transformation using the Haar wavelet is considered \cite{xiao2010differential}.  Our insight is that any full rank matrix is a viable strategy, and our goal is to understand the properties of matrices that make them good strategies.  In Section \ref{sec:apps} we analyze the wavelet technique \cite{xiao2010differential} in detail.

~ \\ \noindent \textbf{$\A$ is rectangular.} When $m>n$, we cannot invert $\A$ and we must employ a different technique for deriving estimates for the counts in $\x$.  In this case, the matrix $\A$ contains $n$ linearly independent rows, but has additional row queries as well.  These are additional noisy observations that should be integrated into our estimate $\estx_\A$.  Viewed another way, we have a system of equations given by $\y=\A\x$, with more equations ($m$) than the number of unknowns in $\x$ ($n$).  The system of equations is likely to be inconsistent due to the addition of random noise.  

We adapt techniques of linear regression, computing an estimate $\estx_\A$ that minimizes the sum of the squared deviations from the output vector.  Because we assume $\A$ has full rank, this estimate, called the {\em least squares} solution, is unique.  The expression in Definition \ref{def:estimate} computes the well-known least squares solution as $\estx = \inv{(\A^t\A)}\A^t\y$.

This least squares approach was originally proposed in \cite{Hay:2010Boosting-the-Accuracy} as a method for avoiding inconsistent answers in differentially private outputs, and it was shown to improve the accuracy of a set of histogram queries.  In that work, a specific query strategy is considered (related to our example $\H_4$) consisting of a hierarchical set of queries.  An efficient algorithm is proposed for computing the least squares solution in this special case.  We analyze this strategy further in Sec.~\ref{sec:apps}.

\begin{example}
$\H_4$, shown in Figure \ref{fig:query-matrix}, is a rectangular full rank matrix with $m=7$.  The output vector $\y=\LM(\H_4,\x)$ does not necessarily imply a unique estimate.  For example, each of the following are possible estimates of $x_1$: $y_4$, $y_2-y_5$, $y_1-y_3-y_5$, each likely to result in different answers.  The reconstruction matrix for $\H_4$, 
$\H_4^\plus$ shown in Fig~\ref{fig:estimate}(b), describes the unique least squares solution.  The estimate for $x_1$ is a weighted combination of values in the output vector: $\frac{3}{21}y_1+\frac{5}{21}y_2-\frac{2}{21}y_3+\frac{13}{21}y_4-\frac{8}{21}y_5-\frac{1}{21}y_6-\frac{1}{21}y_7$.  Notice that greatest weight is given to $y_4$, which is the noisy answer to the query that asks directly for $x_1$; but the other output values contribute to the final estimate.
\end{example}


In summary, whether $m=n$ or $m>n$, Definition \ref{def:estimate} shows how to derive a unique, consistent estimate $\estx_\A$ for the true counts $\x$.  Once $\estx_\A$ is computed, the estimate for any $\w$ is computed as $\w\estx_\A$.  
The following theorem shows that $\estx_\A$ is an unbiased estimate of $\x$ and that in a certain sense it is the best possible estimate given the answers to strategy query $\A$. 

\begin{theorem}[Minimal Variance of estimate of $\x$] 
	\label{thm:estimator}
	Given noisy output $\y=\LM(\A,\x)$, the estimate $\estx = \A^\plus\y$ is unbiased (i.e., $\E [ \estx_\A ] = \x$), and has the minimum variance among all unbiased estimates that are linear in $\y$. 	
\end{theorem}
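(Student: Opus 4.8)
The plan is to recognize this as an instance of the Gauss--Markov theorem, specialized to the noise distribution produced by the Laplace mechanism. Two facts about that noise drive the entire argument. First, the Laplace distribution is symmetric about zero, so $\E[\b]=\vect{0}$ and hence $\E[\y]=\A\x$. Second, the components of $\b$ are independent and identically distributed, so the covariance matrix of the noise $(\sens{\A}/\epsilon)\b$ is $\sigma^2\I$ for a single scalar $\sigma^2$ (proportional to $(\sens{\A}/\epsilon)^2$). This homoscedasticity -- equal variance, uncorrelated noise across coordinates -- is exactly what makes the ordinary least-squares estimate $\A^\plus\y$ optimal, rather than some reweighted variant.

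For unbiasedness, I would compute directly
\[
\E[\estx_\A] = \E[\A^\plus\y] = \A^\plus\E[\y] = \A^\plus\A\x = \x,
\]
using linearity of expectation, the identity $\E[\y]=\A\x$ above, and the left-inverse property $\A^\plus\A = \I$ guaranteed by the full-rank assumption (Definition~\ref{def:estimate}).

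For the minimum-variance claim, I would parametrize an arbitrary linear estimator as $\estx' = \C\y$ for some $n\times m$ matrix $\C$. Requiring $\estx'$ to be unbiased for \emph{every} $\x$ forces $\C\A\x = \x$ for all $\x$, i.e. $\C\A = \I$. I would then write $\C = \A^\plus + \D$, so the unbiasedness constraint becomes $\D\A = \vect{0}^{n\times n}$. Since the noise covariance is $\sigma^2\I$, the covariance of $\C\y$ is $\sigma^2\,\C\C^t$, and expanding gives
\[
\C\C^t = \A^\plus(\A^\plus)^t + \A^\plus\D^t + \D(\A^\plus)^t + \D\D^t.
\]
The crux is that the two cross terms vanish: because $\A^t\A$ is symmetric we have $(\A^\plus)^t = \A\inv{(\A^t\A)}$, so $\D(\A^\plus)^t = (\D\A)\inv{(\A^t\A)} = \vect{0}$, and likewise $\A^\plus\D^t = \vect{0}$ by transposition. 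Hence $\C\C^t = \A^\plus(\A^\plus)^t + \D\D^t$, and the covariance of any competing unbiased estimator equals that of $\estx_\A$ plus the positive semidefinite term $\sigma^2\,\D\D^t$. In particular each diagonal entry -- the variance of a component of the estimate -- can only increase, which establishes minimality.

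The main obstacle is not any single calculation but getting the bookkeeping right: correctly translating ``unbiased for all $\x$'' into the clean algebraic constraint $\C\A=\I$, and then recognizing that it is precisely the orthogonality $\D\A=\vect{0}$ -- not any special structure of $\D$ itself -- that annihilates the cross terms. I would also be explicit that ``minimum variance'' is meant entrywise (equivalently, that the difference of covariance matrices is positive semidefinite), so that the positive-semidefiniteness of $\D\D^t$ delivers the conclusion for every coordinate of $\estx_\A$ simultaneously.
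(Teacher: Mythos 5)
Your proof is correct and follows the same route as the paper: the paper simply cites the Gauss--Markov theorem for this result, and what you have written is exactly the standard Gauss--Markov argument (unbiasedness forces $\C\A=\I$, the decomposition $\C=\A^\plus+\D$ with $\D\A=\vect{0}$ kills the cross terms, and $\sigma^2\D\D^t\succeq 0$ gives minimality), correctly specialized to the Laplace noise, whose zero mean and covariance $2(\sens{\A}/\epsilon)^2\I_m$ are all that the theorem requires. Your explicit observations that Gaussianity is not needed and that ``minimum variance'' should be read via the positive semidefinite ordering (hence entrywise, and indeed for every linear functional $\w\estx_\A$) are exactly the right bookkeeping.
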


The theorem follows from an application of the Gauss-Markov theorem~\cite{gaussmarkov} and extends a similar result from \cite{Hay:2010Boosting-the-Accuracy}. 


\begin{figure*}[t]
\centering
\subfigure[$\inv{\I_4}$]{
$
\begin{bmatrix}
1 & 0 & 0 & 0\\
0 & 1 & 0 & 0\\
0 & 0 & 1 & 0 \\
0 & 0 & 0 & 1 \\
\end{bmatrix}
$
} \hfill
\subfigure[$\H_4^\plus$]{
$\frac{1}{21} \times
\begin{bmatrix} 
    \ff{3} & \ff{5}  &  \ff{-2} &  \ff{13} & \ff{-8} & \ff{-1} & \ff{-1} \\
    \ff{3} & \ff{5}  &  \ff{-2} & \ff{-8}  & \ff{13} & \ff{-1} & \ff{-1} \\ 
    \ff{3} & \ff{-2} &  \ff{5}  &  \ff{-1} & \ff{-1} & \ff{13} & \ff{-8} \\ 
    \ff{3} & \ff{-2} &  \ff{5}  &  \ff{-1} & \ff{-1} & \ff{-8} & \ff{13} \\
\end{bmatrix}
$
} \hfill
\subfigure[$\inv{\Wav_4}$]
{$
\begin{bmatrix}
 0.25 & 0.25 & 0.5 & 0.0 \\
 0.25 & 0.25 &-0.5 & 0.0 \\ 
 0.25 &-0.25 & 0.0 & 0.5 \\
 0.25 &-0.25 & 0.0 &-0.5 \\
\end{bmatrix}
$ }
\caption{\label{fig:estimate} For strategy $\A$ equal to $\I_4$, $\H_4$ and $\Wav_4$, respectively, the matrices above are used to derive the estimate $\estx_\A$ from the noisy output $\y=\LM(\A,\x)$.  Row $i$ in each matrix contains the coefficients of the linear combination of $\y$ used to construct an estimate for count $\x_i$. The inverse of the identity $\I_4$ is the identity; the reconstruction matrix for $\H_4$ is $\H_4^\plus=\inv{(\H_4^t\H_4)}\H_4^t$; $\inv{\Wav_4}$ describes the wavelet reconstruction coefficients.}
\end{figure*}

\subsection{The Matrix Mechanism}
In the presentation above, we used the Laplace mechanism to get noisy answers $\y$ to the queries in $\A$, and then derived $\estx_\A$, from which any workload query could then be estimated.  It is convenient to view this technique as a new differentially private mechanism which produces noisy answers to workload $\W$ directly.  This mechanism is denoted $\MM_\A$ when instantiated with strategy matrix $\A$.

\begin{proposition}[Matrix Mechanism] \label{def:m-mech} Let $\A$ be a full rank $m \times n$ strategy matrix, let $\W$ be any $p \times n$ workload matrix, and let $\b$ be a length-$m$ column vector consisting of independent samples from a Laplace distribution with scale $1$. Then the randomized algorithm $\MM_\A$ that outputs the following vector is $\epsilon$-differentially private:
\begin{eqnarray*}
\MM_\A(\W,\x) &=& \W\x + (\frac{\sens{\A}}{\epsilon})\W \A^\plus \b.
\end{eqnarray*}
\end{proposition}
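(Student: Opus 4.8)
The plan is to recognize the matrix mechanism as a deterministic, data-independent post-processing of the Laplace mechanism applied to the strategy $\A$, and then invoke closure of differential privacy under post-processing. Since Proposition~\ref{prop:laplace} already certifies that $\LM(\A,\x)$ is $\epsilon$-differentially private, the bulk of the work is a short algebraic identity followed by a standard privacy argument.

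First I would establish that $\MM_\A(\W,\x) = \W\A^\plus\,\LM(\A,\x)$. Writing $\y = \LM(\A,\x) = \A\x + (\sens{\A}/\epsilon)\b$ and multiplying on the left by the fixed matrix $\W\A^\plus$ gives
\[
\W\A^\plus\y = \W\A^\plus\A\x + (\sens{\A}/\epsilon)\,\W\A^\plus\b.
\]
Because $\A$ has full rank, $\A^\plus\A = \I$ (as noted after Definition~\ref{def:estimate}), so $\W\A^\plus\A\x = \W\x$ and the right-hand side collapses to exactly the expression defining $\MM_\A(\W,\x)$. This shows $\MM_\A(\W,\cdot)$ is obtained from $\LM(\A,\cdot)$ by applying the linear map $\y \mapsto \W\A^\plus\y$, a transformation that depends only on the fixed matrices $\W$ and $\A$ (and on the scalar $\sens{\A}$, which is a function of $\A$ alone) and not on the database.

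Second, I would invoke closure under post-processing. For any measurable output set $S$, let $T = \set{\y : \W\A^\plus\y \in S}$ be its preimage under the fixed map; $T$ is well defined whether or not $\W\A^\plus$ is invertible. Then for neighboring $I, I'$,
\[
Pr[\MM_\A(\W,I) \in S] = Pr[\LM(\A,I) \in T] \leq \exp(\epsilon)\,Pr[\LM(\A,I') \in T] = \exp(\epsilon)\,Pr[\MM_\A(\W,I') \in S],
\]
where the single inequality is precisely the $\epsilon$-differential privacy guarantee for $\LM(\A,\cdot)$ from Proposition~\ref{prop:laplace}.

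There is no real obstacle here; the only points demanding care are (i) verifying that the transformation is genuinely independent of the private data, so that the same fixed preimage $T$ may be used for both $I$ and $I'$, and (ii) handling the preimage correctly in the rectangular or rank-deficient case, where $\W\A^\plus$ need not be invertible but the preimage argument still applies verbatim. Both are immediate once the identity of the first step is in place.
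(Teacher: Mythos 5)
Your proof is correct and follows essentially the same route as the paper's: both establish the identity $\MM_\A(\W,\x) = \W\A^\plus\LM(\A,\x)$ via $\A^\plus\A=\I$ and then conclude by closure of differential privacy under data-independent post-processing. Your explicit preimage argument merely spells out the post-processing step that the paper invokes in one line, so there is no substantive difference.
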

\begin{proof}The expression above can be rewritten as follows:
\begin{align*}
\MM_\A(\W,\x) &=\W(\x+(\frac{\sens{\A}}{\epsilon}) \A^\plus \b)\\
&=\W\A^\plus(\A\x+(\frac{\sens{\A}}{\epsilon})\b) \\
&=\W\A^\plus\LM(\A,\x).
\end{align*}
Thus, $\MM_\A(\W,\x)$ is simply a post-processing of the output of the $\epsilon$-differentially private $\LM$ and therefore $\MM$ is also $\epsilon$-differentially private.
\end{proof}

Like the Laplace mechanism, the matrix mechanism computes the true answer, $\W\x$, and adds to it a noise vector.  But in the matrix mechanism the independent Laplace noise $\b$ is transformed by the matrix $\W \A^\plus$, and then scaled by $\sens{\A}/\epsilon$.  The potential power of the mechanism arises from precisely these two features.  First, the scaling is proportional to the sensitivity of $\A$ instead of the sensitivity of $\W$, and the former may be lower for carefully chosen $\A$.  Second, because the noise vector $\b$ consists of independent samples, the Laplace mechanism adds independent noise to each query answer.  However, in the matrix mechanism, the noise vector $\b$ is transformed by $\W\A^\plus$.  The resulting noise vector is a linear combination of the independent samples from $\b$, and thus it is possible to add {\em correlated} noise to query answers, which can result in more accurate answers for query workloads whose queries are correlated.

\eat{ 
 The accuracy achieved by the matrix mechanism depends on the choice of $\A$, however if the workload is full rank, using $\W$ as the strategy in the matrix mechanism improves over 
}

\subsection{Rank deficient strategies and workloads}
If a workload is not full rank, then it follows from Definition \ref{def:sup} that a full rank strategy is not needed.  Instead, any strategy whose rowspace spans the workload queries will suffice.  However, if we wish to consider a rank deficient strategy $\B$, we can always transform $\B$ into a full rank strategy $\A$ by adding a scaled identity matrix $\delta\I$, where $\delta$ approaches zero.  The result is a full rank matrix $\A$ which supports all queries, but for which the error for all queries not supported by $\B$ will be extremely high.  Another alternative is to apply dimension reduction techniques, such as principle components analysis, to the workload queries to derive a transformation of the domain in which the workload has full rank.  Then the choice of full rank strategy matrix can be carried out in the reduced domain.

\eat{  
\begin{remark}[Uniform noise] 
Both mechanisms above have been formulated using a noise vector $b$ which consists of samples with uniform variance in each component.  This formulation simplifies notation, but we note that it does not limit the generality of the mechanisms.  In Appendix \ref{app:unequal_noise} we show that the addition of non-uniform noise can be simulated by linearly scaling the queries in a query matrix.
\end{remark}
}



\section{The analysis of error} \label{sec:analysis}

In this section we analyze the error of the matrix mechanism formally, providing closed-form expressions for the mean squared error of query estimates.  We then use matrix decomposition to reveal the properties of the strategy that determine error.  This analysis is the foundation for the optimization problems we address in the following section.

\eat{Ultimately, our goal is to understand the error function for existing strategies, and to design new strategies that are tailored to a specific workload.  In this section 
we show that the error function defines an elliptic paraboloid over the $n$-dimensional query vector space.  A query vector defines a point in $\real^n$ and the error for that query corresponds to the height of the paraboloid at that point.
The shape of this paraboloid depends on the query strategy $\A$ and 
we use the singular value decomposition (SVD) to factor a query strategy $\A$ into a product of three matrices, where each matrix determines one aspect of the paraboloid's shape.
One matrix scales it, another stretches it, and the third orients it.  The last two matrices define the error profile of a strategy.  The first matrix, which scales the height of the paraboloid, is related to the sensitivity of a strategy.  We show that it is possible to change a query strategy---by rotating its columns---such that its error profile is preserved but its sensitivity may be lowered.}

\subsection{The error of query estimates}

While a full rank query strategy $\A$ can be used to compute an estimate for {\em any} linear query $\w$, the accuracy of the estimate varies based on the relationship between $\w$ and $\A$. We measure the error of strategy $\A$ on query $\w$ using mean squared error.

\begin{definition}[Query and Workload Error] Let $\estx_\A$ be the estimate for $\x$ derived from query strategy $\A$.  The mean squared error of the estimate for $\w$ using strategy $\A$ is: $$\error{\A}{\w} = \E[ ( \w\x - \w\estx_\A )^2 ].$$ 
Given a workload $\W$, the total mean squared error of answering $\W$ using strategy $\A$ is: $$\totalerror{\A}{\W} = \sum_{\w_i \in \W} \error{\A}{\w_i}.$$	
\end{definition}

%
%

For any query strategy $\A$ the following proposition describes how to compute the error for any linear query $\w$ and the total error for any workload $\W$:

\begin{proposition}[Error Under Strategy $\A$]
  \label{prop:error:single}
  \label{prop:error:workload}
For a full rank query matrix $\A$ and linear query $\w$, the estimate of $\w$ is unbiased (i.e. $\E[ \w\estx_\A ] = \w\x$), and the error of the estimate of $\w$ using $\A$ is equal to:
	\begin{equation} \label{eq:var}
		\error{\A}{\w}
		= (\frac{2}{\epsilon^2}) \; \sens{\A}^2 \; \w \inv{(\A^t\A)} \w^t.
	\end{equation}
The total error of the estimates of workload $\W$ using $\A$ is:
	\begin{eqnarray} \label{eq:toterr}
	\totalerror{\A}{\W} = (\frac{2}{\epsilon^2}) \; \sens{\A}^2 \; \tr (\inv{(\A^t\A)} \W^t\W).
		\end{eqnarray}
\end{proposition}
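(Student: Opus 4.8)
The plan is to reduce the mean squared error to a quadratic form in the independent Laplace noise and then read off both formulas with elementary linear algebra. First I would combine Definition~\ref{def:estimate} with the fact (noted in the text) that $\A^\plus\A=\I$ to express the estimate entirely in terms of the noise: since $\y=\LM(\A,\x)=\A\x+(\sens{\A}/\epsilon)\b$, we get $\estx_\A=\A^\plus\y=\x+(\sens{\A}/\epsilon)\A^\plus\b$. Because $\b$ consists of independent scale-$1$ Laplace samples we have $\E[\b]=\vect{0}$, so unbiasedness of the query estimate is immediate: $\E[\w\estx_\A]=\w\x+(\sens{\A}/\epsilon)\,\w\A^\plus\E[\b]=\w\x$.

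Next I would compute the error itself. The key observation is that the deviation is a single \emph{scalar} that is linear in the noise, namely $\w\estx_\A-\w\x=(\sens{\A}/\epsilon)\,\vect{c}\b$ with $\vect{c}=\w\A^\plus$ a length-$m$ row vector. Squaring and taking expectations, the diagonal covariance of $\b$ does the work: writing $\vect{c}\b=\sum_i c_ib_i$, independence and mean-zero of the $b_i$ kill all cross terms, and each $b_i$ has variance $2$ (the variance of a scale-$1$ Laplace), so $\E[(\vect{c}\b)^2]=\sum_i c_i^2\,\E[b_i^2]=2\,\vect{c}\vect{c}^t$. This gives $\error{\A}{\w}=(2/\epsilon^2)\,\sens{\A}^2\,\w\A^\plus(\A^\plus)^t\w^t$.

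To match Equation~(\ref{eq:var}) I would then simplify the central matrix. Substituting $\A^\plus=\inv{(\A^t\A)}\A^t$ and using that $\A^t\A$ (hence its inverse) is symmetric yields $\A^\plus(\A^\plus)^t=\inv{(\A^t\A)}(\A^t\A)\inv{(\A^t\A)}=\inv{(\A^t\A)}$, completing the single-query case. For the workload, I would sum the per-query errors and convert the resulting sum of scalars into one trace: setting $\M=\inv{(\A^t\A)}$ and using $\W^t\W=\sum_i\w_i^t\w_i$, the linearity and cyclic invariance of the trace give $\sum_i\w_i\M\w_i^t=\sum_i\tr(\M\w_i^t\w_i)=\tr(\M\W^t\W)$, which is Equation~(\ref{eq:toterr}).

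I do not expect a genuine obstacle here; the substance is entirely in the first two steps, where one must recognize the error as a scalar quadratic form and exploit the diagonal covariance $2\I$ of the independent Laplace noise. The two places most prone to a slip are tracking the variance-$2$ factor contributed by the scale-$1$ Laplace distribution (the source of the leading $2/\epsilon^2$) and verifying the pseudo-inverse identity $\A^\plus(\A^\plus)^t=\inv{(\A^t\A)}$; everything downstream is bookkeeping.
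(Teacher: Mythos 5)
Your proof is correct and follows essentially the same route as the paper: both reduce the mean squared error to the variance of the scalar $\w\A^\plus\b$, use the covariance $\mathrm{Var}(\b)=2\I_m$ (which you derive coordinate-wise via independence), simplify via $\A^\plus(\A^\plus)^t=\inv{(\A^t\A)}$ using symmetry of $\A^t\A$, and obtain the workload formula from the trace's linearity and cyclic invariance. The only cosmetic difference is that the paper reads the per-query errors off the diagonal of $\W\inv{(\A^t\A)}\W^t$ before applying the cyclic property, whereas you sum $\tr(\M\w_i^t\w_i)$ directly; these are the same identity.
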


\begin{proof}
	It is unbiased because $\estx_\A$ is unbiased.  Thus, for formula (\ref{eq:var}), the mean squared error is equal to the variance:
	\begin{align*}
		\error{\A}{\w} &=  Var( \w \estx_\A ) 
		= Var( \w \x + (\frac{\sens{\A}}{\epsilon}) \w \A^\plus \b) \\
		&= (\frac{\sens{\A}}{\epsilon})^2 Var( \w \A^\plus \b ).
	\end{align*}
With algebraic manipulation and that $Var(\b) = 2\I_m$, we get:
	\begin{align*}
		Var( \w \A^\plus \b ) &= \w \A^\plus Var(\b) (\w \A^\plus)^t \\
		&= \w \A^\plus 2 \I_m (\w \A^\plus)^t \\
		&= 2 \w (\A^t \A)^{-1} \A^t \A ((\A^t \A)^{-1})^t \w^t \\
		&= 2 \w (\A^t \A)^{-1} \w^t,
	\end{align*}
where $((\A^t \A)^{-1})^t = (\A^t \A)^{-1}$ because the matrix is symmetric.  Therefore, 
$\error{\A}{\w} = (\frac{\sens{\A}}{\epsilon})^2 2 \w (\A^t \A)^{-1} \w^t$.

For formula (\ref{eq:toterr}) if $\w_i$ is row $i$ of workload $\W$, then $\error{\A}{\w_i}$ is the $i$-th entry on the diagonal of matrix $(\frac{2}{\epsilon^2})\sens{\A}^2\W \inv{(\A^t\A)} \W^t$.  Therefore, since the trace of a matrix is the sum of the values on its diagonal, the $\totalerror{\A}{\W}$ is equal to $$(\frac{2}{\epsilon^2})\sens{\A}^2\tr(\W \inv{(\A^t\A)} \W^t).$$
Formula \ref{eq:toterr} follows from a standard property of the trace: $\tr(\W \inv{(\A^t\A)} \W^t)=\tr( \inv{(\A^t\A)} \W^t\W)$.
\end{proof}

These formulas underlie much of the remaining discussion in the paper.  Formula (\ref{eq:var}) shows that, for a fixed $\epsilon$, error is determined by two properties of the strategy: (i) its squared sensitivity, $\sens{\A}^2$; and (ii) the term $\w \inv{(\A^t\A)} \w^t$.  In the sequel, we refer to the former as simply the {\em sensitivity term}.  We refer to the latter term as the {\em profile term} and we call matrix $\ep{\A}$ the {\em error profile} of query strategy $\A$.  

\begin{definition}[Error profile]
For any full rank $m \times n$ query matrix $\A$, the error profile of $\A$, denoted $\M$, is defined to be the $n \times n$ matrix $\ep{\A}$.
\end{definition}

The coefficients of the error profile $\M=\ep{\A}$ measure the covariance of terms in the estimate $\estx_\A$.  Element $m_{ii}$ measures the variance of the estimate of $x_i$ in $\estx_\A$ (and is always positive), while $m_{ij}$ measures the covariance of the estimates of $x_i$ and $x_j$ (and may be negative).  \eat{Since the estimate of query $\w$ is the linear combination of $\estx_\A$ described by the product $\w\estx_\A$, the error of that estimate is proportional to the profile term $\w\ep{\A}\w^t$ in Formula (\ref{eq:var}).}
We can equivalently express the profile term as: $$\w\ep{\A}\w^t = \sum_i w_i^2 m_{ii} + \sum_{i < j} 2w_i w_j m_{ij},$$
which shows that error is a weighted sum of the (positive) diagonal variance terms of $\M$, plus a (possibly negative) linear combination of off-diagonal covariance terms.  This illustrates that it is possible to have a strategy that has relatively high error on individual counts yet is quite accurate for other queries that are linear combinations of the individual counts.  
We analyze instances of such strategies in Sec.~\ref{sec:apps}.

\begin{figure*}[t]
\centering
\subfigure[$\ep{\I_4}$]{
$ 
\begin{bmatrix}
1 & 0 & 0 & 0\\
0 & 1 & 0 & 0\\
0 & 0 & 1 & 0 \\
0 & 0 & 0 & 1 \\
\end{bmatrix}$ \vspace{3ex}
}
\hfill
\subfigure[$\ep{\H_4}$]{
$ \frac{1}{21} \times
\begin{bmatrix} 
    \ff{13} & \ff{-8}  &  \ff{-1} &  \ff{-1}  \\
    \ff{-8} & \ff{13}  &  \ff{-1} & \ff{-1}   \\ 
    \ff{-1} & \ff{-1} &  \ff{13}  &  \ff{-8}  \\ 
    \ff{-1} & \ff{-1} &  \ff{-8}  &  \ff{13}  \\
\end{bmatrix}
$
}
\hfill
\subfigure[$\ep{\Wav_4}$]{
$  \frac{1}{8} \times
\begin{bmatrix}
 3 & -1 & 0 & 0 \\
 -1 & 3 & 0 & 0 \\ 
 0 & 0 & 3 & -1 \\
 0 & 0 & -1 & 3 \\
\end{bmatrix}
$ 
}
\caption{\label{fig:error} The profile term of the error function for query $\w$ on strategy $\A$ is $\w\ep{\A}\w^t$.  Shown are the error profiles for query strategies $\I_4$, $\H_4$, and $\Wav_4$. Every error profile matrix is symmetric and positive definite.}
\end{figure*}

\begin{example} Figure \ref{fig:error} shows the error profiles for each sample strategy. $\I_4$ has the lowest error for queries that ask for a single count of $\x$, such as $\w=[1,0,0,0]$.  For such queries the error is determined by the diagonal of the error profile (subject to scaling by the sensitivity term). Queries that involve more than one count will sum terms off the main diagonal and these terms can be negative for the profiles of $\H_4$ and $\Wav_4$.  Despite the higher sensitivity of these two strategies, the overall error for queries that involve many counts, such as $\w=[1,1,1,1]$, approaches that of $\I_4$. The small dimension of these examples hides the extremely poor performance of $\I_n$ on queries that involve many counts. 
\end{example}

The next example uses Prop. \ref{prop:error:workload} to gain insight into the behavior of some natural strategies for answering a workload of queries.

\begin{example}If $\A$ is the identity matrix, then Prop. \ref{prop:error:workload} implies that the total error will depend only on the workload, since the sensitivity of $\I_n$ is 1: 
$$\totalerror{\I_n}{\W}=(\frac{2}{\epsilon^2}) \; \tr (\W^t\W).$$
Here the trace of $\W^t\W$ is the sum of squared coefficients of each query.  This will tend to be a good strategy for workloads that sum relatively few counts. Assuming the workload is full rank, we can use the workload itself as a strategy, i.e. $\A=\W$. Then Prop. \ref{prop:error:workload} implies that the total error is: 
$$\totalerror{\W}{\W}=(\frac{2}{\epsilon^2}) \; \sens{\W}^2 \;n. $$
since $\tr( \inv{(\W^t\W)} \W^t\W)=\tr(\I_n)=n$.  In this case the trace term is low, but the strategy will perform badly if $\sens{\W}$ is high.  Note that if $\W$ is $m \times n$, the total error of the Laplace mechanism for $\W$ will be $((\frac{2}{\epsilon^2}) \sens{\W}^2 m)$, which is worse than the matrix mechanism whenever $m>n$. 

In some sense, good strategies fall between the two extremes above: they should have sensitivity less than the workload but a trace term better than the identity strategy.

\end{example}


\subsection{Error profile decomposition}

Because an error profile matrix $\M$ is equal to $\ep{\A}$ for some $\A$, it has a number of special properties.  $\M$ is always a square ($n \times n$) matrix, it is symmetric, and even further, it is always a {\em positive definite} matrix.  Positive definite matrices $\M$ are such that $\w\M\w^t>0$ for all non-zero vectors $\w$.  In our setting this means that the profile term is always positive, as expected.  Furthermore, the function $f=\w \M \w^t$ is a quadratic function if $\w$ is viewed as a vector of variables. Then the function $f$ is an elliptic paraboloid over $n$-dimensional space.  If we consider the equation $\w\M\w^t=1$, this defines an ellipsoid centered at the origin (the solutions to this equation are the queries whose profile term is one).  We can think of the error function of strategy $\A$ as a scaled version of this paraboloid, where the scale factor is $(\frac{2}{\epsilon^2}) \; \sens{\A}^2$.  

To gain a better understanding of the error profile, we consider its decomposition.  Recall that a matrix is orthogonal if its transpose is its inverse.

\begin{definition}[Decomposition of Profile]
Let $\M$ be any $n \times n$ positive definite matrix.  The spectral decomposition of $\M$ is a factorization of the form  $\M= \P_\M \D_\M \P_\M^t$, where $\D_\M$ is an $n \times n$ diagonal matrix containing the eigenvalues of $\M$, and $\P_\M$ is an orthogonal $n \times n$ matrix containing the eigenvectors of $\M$. 
\end{definition}

Thus the matrices $\D_\M$ and $\P_\M$ fully describe the error profile.  They also have an informative geometric interpretation.  The entries of the diagonal matrix $\D_\M$ describe the relative stretch of the axes of the ellipsoid.  The matrix $\P_\M$ is orthogonal, representing a rotation of the ellipsoid.  In high dimensions, a set of common eigenvalues mean that the ellipsoid is spherical with respect to the corresponding eigen-space.  For example, the profile of $\I_4$ is fully spherical (all eigenvalues are one), but by choosing unequal eigenvalues and a favorable rotation, the error is reduced for certain queries.  


\subsection{Strategy matrix decomposition}

Despite the above insights into tuning the error profile, the matrix mechanism requires the choice of a strategy matrix, not simply a profile.  Next we focus on the relationship between strategies and their profile matrices.  

We will soon see that more than one strategy can result in a given error profile.  Accordingly, we define the following equivalence on query strategies:

\begin{definition}[Profile Equivalence] Two query matrices $\A$ and $\B$ are {\em profile equivalent} if their error profiles match, i.e. $\ep{\A}=\ep{\B}$.
\end{definition}

A key point is that {\em two profile equivalent strategies may have different sensitivity}.  If $\A$ and $\B$ are profile equivalent, but $\A$ has lower sensitivity, then strategy $\A$ dominates strategy $\B$: the estimate for {\em any} query will have lower error using strategy $\A$.

\begin{example} Recall that strategies $\Wav_4$ and $\H_4$ both have sensitivity 3.  This is not the minimal sensitivity for strategies achieving either of these error profiles.  A square matrix $\H'$, profile equivalent to $\H_4$, is shown in Figure \ref{fig:improved}(a).  This matrix has sensitivity $\sens{\H'}=2.896$.  A matrix $\Wav'$, profile equivalent to  $\Wav_4$ is shown in Figure \ref{fig:improved}(b).  This matrix has sensitivity $\sens{\Wav'}=2.210$.  
\end{example}

\begin{figure}[t]
\centering
\subfigure[$\H'$ is profile equivalent to $\H_4$, but $\sens{\H'}=2.896$ while $\sens{\H_4}=3$ ]{
$ \H' =
\begin{bmatrix} 
 -1.32 & -1.32 & -1.32 & -1.32 \\
  0.87 &  0.87 & -0.87 & -0.87 \\
 -0.71 &  0.71 &  0.00 &  0.00 \\
  0.00 &  0.00 & -0.71 & 0.71 \\
\end{bmatrix}
$ 
}
\hfill
\subfigure[$\Wav'$ is profile equivalent to $\Wav_4$, but $\sens{\Wav'}=2.210$ while $\sens{\Wav_4}=3$ ]{
$\Wav'=
\begin{bmatrix} 
1.73 & 0.58 & 0.00 & 0.00 \\
0.00 & 1.63 & 0.00 & 0.00 \\
0.00 & 0.00 & 1.73 & 0.58 \\
0.00 & 0.00 & 0.00 & 1.63 \\
\end{bmatrix}	
$

}
\caption{\label{fig:improved} When two strategies $\A$ and $\B$ are profile equivalent, the one with lower sensitivity dominates.}
\end{figure}

To analyze strategy matrices we again use matrix decomposition, however because a strategy matrix may not be symmetric, we use the singular value decomposition.

\begin{definition}[Decomposition of Strategy]
Let $\A$ be any $m \times n$ query strategy.  The singular value decomposition (SVD) of $\A$ is a factorization of the form $\A = \Q_\A \D_\A \P_\A^t$ such that $\Q_\A$ is a $m \times m$ orthogonal matrix, $\D_\A$ is a $m \times n$ diagonal matrix and $\P_\A$ is a $n \times n$ orthogonal matrix.  When $m > n$, the diagonal matrix $\D_\A$ consists of an $n \times n$ diagonal submatrix combined with $\vect{0}^{(m-n) \times n}$.
\end{definition}

The following theorem shows how the decompositions of the error profile and strategy are related.  It explains exactly how a strategy matrix determines the parameters for the error profile, and it fully defines the set of all profile equivalent strategies.

\begin{theorem} \label{thm:profile-strategy}
Let $m \geq n$ and let $\M$ be any $n \times n$ positive definite matrix decomposed as $\M= \P_\M \D_\M \P_\M^t$ where $\D_\M=diag(\lambda_1 \dots \lambda_n)$.  Then for any $m \times n$ matrix $\A$, the following are equivalent:
\begin{enumerate}
\item[(i)] $\A$ achieves the profile $\M$, that is $\ep{\A}=\M$; 
\item[(ii)] There is a decompostion of $\A$ into $\A=\Q_\A \D_\A \P^t_\A$ where $\Q_\A$ is an $m \times m$ orthogonal matrix, $\D_\A$ is an $m \times n$ matrix equal to $diag(1/\sqrt{\lambda_1} \dots 1/\sqrt{\lambda_n})$ plus $\vect{0}^{(m-n)\times n}$, and $\P_\A=\P_\M$. 
\end{enumerate}
\end{theorem}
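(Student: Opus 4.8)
The plan is to prove the two implications separately, exploiting the fact that the error profile $\ep{\A}$ is literally $(\A^t\A)^{-1}$ and that the spectral decomposition of $\M$ and the SVD of $\A$ are linked through this inverse relationship.

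For the direction (ii)~$\Rightarrow$~(i), I would simply substitute and compute. Given $\A=\Q_\A \D_\A \P_\A^t$ with $\P_\A=\P_\M$, orthogonality of $\Q_\A$ gives $\A^t\A=\P_\M \D_\A^t\Q_\A^t\Q_\A\D_\A\P_\M^t=\P_\M(\D_\A^t\D_\A)\P_\M^t$. Since $\D_\A$ is the $m\times n$ matrix carrying $1/\sqrt{\lambda_i}$ on its diagonal, the product $\D_\A^t\D_\A$ is the $n\times n$ diagonal matrix $\mathrm{diag}(1/\lambda_1,\dots,1/\lambda_n)=\D_\M^{-1}$. Inverting and using $\P_\M\P_\M^t=\I$ then yields $\ep{\A}=(\A^t\A)^{-1}=\P_\M\D_\M\P_\M^t=\M$, as required.

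For the harder direction (i)~$\Rightarrow$~(ii), I would construct the claimed decomposition explicitly. Because $\A$ is full rank, $\ep{\A}=\M$ gives $\A^t\A=\M^{-1}=\P_\M\D_\M^{-1}\P_\M^t$, so each column $\P_\M\vect{e}_i$ is an eigenvector of $\A^t\A$ with eigenvalue $1/\lambda_i$; equivalently the singular values of $\A$ are $\sigma_i=1/\sqrt{\lambda_i}$ and $\P_\M$ supplies a valid matrix of right singular vectors. I set $\D_\A$ to be the $m\times n$ diagonal matrix with entries $\sigma_i$, and define the first $n$ columns of $\Q_\A$ by $\vect{q}_i=\sqrt{\lambda_i}\,\A\P_\M\vect{e}_i$ (that is, $\A\P_\M\vect{e}_i$ normalized by $\sigma_i$). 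A short computation using $\A^t\A\,\P_\M\vect{e}_j=\sigma_j^2\,\P_\M\vect{e}_j$ shows these columns are orthonormal; I then extend them by any $m-n$ orthonormal vectors to complete $\Q_\A$ into an $m\times m$ orthogonal matrix. Finally I verify $\Q_\A\D_\A\P_\M^t=\sum_{i=1}^n \sigma_i\vect{q}_i(\P_\M\vect{e}_i)^t=\sum_i \A\P_\M\vect{e}_i(\P_\M\vect{e}_i)^t=\A\P_\M\P_\M^t=\A$, recovering $\A$ in the desired form.

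The one genuine point to handle with care is the non-uniqueness of the spectral and singular value decompositions: when $\M$ has repeated eigenvalues, $\P_\M$ is not determined, so the argument must be phrased as producing \emph{some} SVD of $\A$ whose right-singular-vector matrix coincides with the \emph{given} $\P_\M$. This is exactly what the eigenvector computation guarantees, since any orthonormal eigenbasis of $\M$ is also an orthonormal eigenbasis of $\A^t\A=\M^{-1}$. Matching the ordering of the diagonal of $\D_\A$ to the columns of $\P_\M$ (so column $i$ carries singular value $1/\sqrt{\lambda_i}$) makes the two decompositions compatible, and the construction of $\Q_\A$ above goes through regardless of the eigenvalue multiplicities, so I expect no real obstacle beyond stating this matching cleanly.
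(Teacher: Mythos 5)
Your proposal is correct and follows essentially the same route as the paper's proof: the (ii)\,$\Rightarrow$\,(i) direction is the identical substitution computing $\A^t\A=\P_\M(\D_\A^t\D_\A)\P_\M^t$ and inverting, and in the (i)\,$\Rightarrow$\,(ii) direction your columns $\vect{q}_i=\sqrt{\lambda_i}\,\A\P_\M\vect{e}_i$ are exactly the columns of the paper's matrix $\Q'_\A=\A\P_\M\D'$ with $\D'=diag(\sqrt{\lambda_1},\dots,\sqrt{\lambda_n})$, whose orthonormality the paper verifies via $\D'^t\P_\M^t\A^t\A\P_\M\D'=\I_n$ before extending to an $m\times m$ orthogonal matrix just as you do. Your explicit remark about repeated eigenvalues (producing \emph{some} SVD compatible with the given $\P_\M$ rather than invoking uniqueness) is a point the paper handles only implicitly, but it changes nothing in the argument.
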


\begin{proof}
Given (i), let $\D'=diag(\sqrt{\lambda_1} \dots \sqrt{\lambda_n})$, and since $\P_\M$ is an orthogonal matrix $\P_\M^t=\inv{\P_\M}$. Then 
\begin{align*}
\D'^t\P_\M^t\A^t\A\P_\M\D'&= \D'^t\P_\M^t\inv{\M}\P_\M\D' \\ &=\D'^t\P_\M^t\P_\M\inv{\D_\M}\P_\M^t\P_\M\D'=\I_n.
\end{align*}
Thus $\A=\Q'_\A\inv{\D'}\P_\M^t$ where the $\Q'_\A$ is an $m\times n$ matrix whose column vectors are unit length and orthogonal to each other. Let $\Q_\A$ be an $m\times m$ orthogonal matrix whose first $n$ columns are $\Q'_\A$. Let $\D_\A$ be an $m\times n$ matrix equal to $\inv{\D'}$ plus $\vect{0}^{(m-n)\times n}$, which is equivalent to $diag(1/\sqrt{\lambda_1} \dots 1/\sqrt{\lambda_n})$ plus $\vect{0}^{(m-n)\times n}$. Then
\[\Q_\A\D_\A\P_\M^t=\Q'_A\inv{\D'}\P_\M^t=\A.\]

Given (ii) we have $\A=\Q_\A \D_\A \P_\M^t$ and we first compute $\A^t\A=(\Q_\A \D_\A \P_\M^t)^t(\Q_\A \D_\A \P_\M^t)$ $=(\P_\M \D_\A^t \Q_\A^t )$ $(\Q_\A \D_\A \P_\M^t)$ $=(\P_\M \D_\A^t \D_\A \P_\M^t)$.  Note that while $\D_\A$ may be $m \times n$, $\D_\A^t \D_\A$ is an $n \times n$ diagonal matrix equal to $diag(1/\lambda_1 \dots 1/\lambda_n)$.  Then $\inv{(\A^t\A)}=\inv{(\P_\M \D_\A^t \D_\A \P_\M^t)}=$ $(\P_\M \inv{(\D_\A^t \D_\A)} \P_\M^t)$.  Then since $\inv{(\D_\A^t \D_\A)}=diag(\lambda_1 \dots \lambda_n)=\D_\M$ we conclude that $\inv{(\A^t\A)}=\P_\M \D_\M \P_\M^t$.
\end{proof}

This theorem has a number of implications that inform our optimization problem.  First, it shows that given any error profile $\M$, we can construct a strategy that achieves the profile.  We do so by decomposing $\M$ and constructing a strategy $\A$ from its eigenvectors (which are contained in $\P_\M$ and inherited by $\P_\A$) and the diagonal matrix consisting of the inverse square root of its eigenvalues (this is $\D_\A$, with no zeroes added).  We can simply choose $\Q$ as the $n \times n$ identity matrix, and then matrix $\D_\A \P^t_\A$ is an $n \times n$ strategy achieving $\M$.  

Second, the theorem shows that there are many such strategies achieving $\M$, and that {\em all} of them can be constructed in a similar way.  There is a wrinkle here only because some of these strategies may have more than $n$ rows.  That case is covered by the definition of $\D_\A$, which allows one or more rows of zeroes to be added to the diagonal matrix derived from the eigenvalues of $\M$.  Adding zeroes, $\D_\A$ becomes $m \times n$, we choose any $m \times m$ orthogonal matrix $\Q_\A$, and we have an $m \times n$ strategy achieving $\M$. 

Third, the theorem reveals that the key parameters of the error profile corresponding to a strategy $\A$ are determined by the $\D_\A$ and $\P_\A$ matrices of the strategy's decomposition.  For a fixed profile, the $\Q_\A$ of the strategy has no impact on the profile, but does alter the sensitivity of the strategy.  Ultimately this means that choosing an optimal strategy matrix requires determining a profile ($\D_\A$ and $\P_\A$), and choosing a rotation ($\Q_\A$) that controls sensitivity.  The rotation should be the one that minimizes sensitivity, otherwise the strategy will be dominated. 

We cannot find an optimal strategy by optimizing either of these factors independently. Optimizing only for the sensitivity of the strategy severely limits the error profiles possible (in fact, the identity matrix is a full rank strategy with least sensitivity).  If we optimize only the profile, we may choose a profile with a favorable ``shape'' but this could result in a prohibitively high least sensitivity.  Therefore we must optimize jointly for the both the profile and the sensitivity and we address this challenge next.


\section{Optimization} \label{sec:optimizing}

In this section, we provide techniques for determining or approximating optimal query strategies for the matrix mechanism, and we also give some heuristic strategies that may improve existing strategies.  We first state our main problem.

\begin{problem}[minError] \label{problem:minError}
Given a workload matrix $\W$, find the strategy $\A$ that minimizes $\totalerror{\A}{\W}$.
\end{problem}

The $\minerror$ problem is difficult for two reasons. First, 
 the sensitivity $\sens{\A}$ is the maximum function applied to the $L_1$ norms of column vectors, which is not differentiable. Second, we do not believe $\minerror$ is expressible as a convex optimization problem since the set of all query strategies that support a given workload $\W$ is not convex: if $\A$ supports $\W$, then $-\A$ also supports $\W$ but $\frac{1}{2}(\A+(-\A))=\mathbf{0}$ does not support $\W$. 
 
In Section~\ref{sec:solving:optimization} we 
show that $\minerror$ can be expressed as a semidefinite program with rank constraints.  While rank constraints make the semidefinite program non-convex, there are algorithms that can solve such problems by iteratively solving a pair of related semidefinite programs.

Though the set of all query strategies $\A$ that support a given workload $\W$ is not convex, the set of all possible matrices $\A^t\A$ is convex. In Sec.~\ref{sec:sub:approx} we provide two approaches for finding approximate solutions based on bounding  $\sens{\A}$ by a function of $\A^t\A$ rather than a function of $\A$. While each technique results in a strategy, they essentially select a profile and a default rotation $\Q$.  Error bounds are derived by reasoning about the default rotation.  It follows that both of these approximations can be improved considering rotations $\Q$ that reduce sensitivity.  Therefore we also consider a secondary optimization problem.

\begin{problem}[minSensitivity] \label{problem:minSens}
Given a query matrix $\A$, find the query matrix $\B$ that is profile equivalent to $\A$ and has minimum sensitivity.
\end{problem}

Unfortunately this subproblem is still not a convex problem, since the set of all query matrices that are profile equivalent is also not convex. Notice $\A$ is profile equivalent to $-\A$ but is not profile equivalent to $\frac{1}{2}(\A+(-\A))=\mathbf{0}$. Again the problem can be expressed as an SDP with rank constraints as it is shown in Sec.~\ref{sec:minsens}.

\subsection{Solution to the $\minerror$ Problem} \label{sec:solving:optimization}
In this section we formulate the $\minerror$ problem for an $n\times n$ workload $\W$. It is sufficient to focus on $n\times n$ workloads because Proposition \ref{prop:error:workload} shows that the strategy that minimizes the total error for some workload $\W$ also minimizes the total error for any workload $\V$ such that $\V^t\V=\W^t\W$.  Therefore, if given an $m \times n$ workload for $m > n$, we can use spectral decomposition to transform it into an equivalent $n\times n$ workload.

\mh{Chao: fix inconsistent vector notation}

\mh{Chao: don't use $\mathbf{Y}$, that's the wavelet}

\begin{theorem}\label{thm:solving:l1}
Given an $n\times n$ workload $\W$, Program~\ref{prog:minerr} is a semidefinite program with rank constraint whose solution is the tuple $(\A, \mathbf{C}, \vect{u}, \mathbf{Z})$ and the $m\times n$ strategy $\A$ minimizes $\totalerror{\A}{\W}$ among all $m\times n$ strategies.
\begin{algorithm}[t]
\caption{Minimizing the Total Error}\label{prog:minerr}
{\small
\begin{align}
\mbox{Given:          } & \W\in\mathbb{R}^{n\times n}\nonumber\\
\mbox{Minimize:      } &u_1+u_2+\ldots + u_n\nonumber\\
\mbox{Subject to:    } 
& \mbox{For } i\in [n]: \mbox{$\vect{e}_i$ is the $n$ dimensional column vector whose}\nonumber\\
& \mbox{$i^{th}$ entry is 1 and remaining entries are 0.}\nonumber\\
& ~~~~ \left[\begin{array}{ccc}2\mathbf{I}_m & -\A\inv{\W} & \mathbf{0}\\
-(\A\inv{\W})^t & \inv{(\W^t)}\mathbf{Z}\inv{\W} & \vect{e}_i \\ \mathbf{0} & \vect{e}_i^t & u_i\end{array}\right]\succeq 0\label{eqn:sdpconstraint}\\
& \mbox{For } i\in [n], j\in [m]: \nonumber\\
& ~~~~ c_{ji}\geq a_{ji},~~~~ c_{ji}\geq -a_{ji},~~~~\sum_{k=1}^{m} c_{ki} \leq 1 
\label{eqn:sensconstraint}\\
& \rank\left(\left[\begin{array}{cc}\mathbf{I}_m & \A\\ \A^t & \mathbf{Z}\end{array}\right]\right)=m\label{eqn:rankconstraint}
\end{align}}
\end{algorithm}
\end{theorem}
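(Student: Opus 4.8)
The plan is to dismantle Program~\ref{prog:minerr} constraint by constraint, showing that each family pins down exactly the structure needed to recover the closed form of $\totalerror{\A}{\W}$ from Proposition~\ref{prop:error:workload}, and then to invoke a scale-invariance observation so that normalizing the sensitivity is free. First I would record that $\totalerror{\A}{\W}$ is invariant under scaling of the strategy: for $c>0$, the factor $\sens{c\A}^2 = c^2\sens{\A}^2$ cancels the $c^{-2}$ appearing in $\inv{((c\A)^t(c\A))}$, so $\totalerror{c\A}{\W} = \totalerror{\A}{\W}$. Hence the minimum over all full-rank $m\times n$ matrices equals the minimum over $\{\A : \sens{\A}\le 1\}$, on which the error collapses to $\frac{2}{\epsilon^2}\tr(\inv{(\A^t\A)}\W^t\W)$; moreover any feasible $\A$ with $\sens{\A}=s<1$ is strictly improved by the still-feasible $(1/s)\A$, so the minimizer sits on $\sens{\A}=1$ and the normalization is lossless. (Rank-deficient $\A$ leave $\inv{(\A^t\A)}$ undefined and the error unbounded, so they are never optimal.)

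Next I would dispatch the two non-SDP constraint families. The inequalities in (\ref{eqn:sensconstraint}), namely $c_{ji}\ge a_{ji}$, $c_{ji}\ge -a_{ji}$, and $\sum_k c_{ki}\le 1$, admit a feasible $\mathbf{C}$ iff $\sum_k |a_{ki}|\le 1$ for every column $i$ (take $c_{ki}=|a_{ki}|$), which by the query-matrix sensitivity proposition is exactly $\sens{\A}\le 1$; since the $c_{ji}$ never enter the objective they serve only to linearize this bound. For (\ref{eqn:rankconstraint}), the Schur-complement rank identity applied to the invertible leading block gives $\rank\begin{bmatrix}\I_m & \A\\ \A^t & \mathbf{Z}\end{bmatrix} = m + \rank(\mathbf{Z}-\A^t\A)$, so the constraint forces $\mathbf{Z}=\A^t\A$ and thus $\inv{\mathbf{Z}}$ equals the error profile $\inv{(\A^t\A)}$.

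The technical heart is (\ref{eqn:sdpconstraint}). Substituting $\mathbf{Z}=\A^t\A$ and taking the Schur complement with respect to the invertible block $2\I_m$, the identity $(\A\inv{\W})^t(\A\inv{\W}) = \inv{(\W^t)}\A^t\A\inv{\W}$ collapses the $n\times n$ corner to $\tfrac12\inv{(\W^t)}\A^t\A\inv{\W}$, which is positive definite when $\A$ is full rank. A second Schur complement on this block shows the scalar inequality is equivalent to $u_i \ge 2\,\vect{e}_i^t\,\W\inv{(\A^t\A)}\W^t\,\vect{e}_i$, i.e. $u_i$ bounds twice the $i$-th diagonal entry of $\W\inv{(\A^t\A)}\W^t$. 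Summing, the least feasible objective value is $\sum_i u_i = 2\tr(\W\inv{(\A^t\A)}\W^t) = 2\tr(\inv{(\A^t\A)}\W^t\W)$. Combining this with the first two paragraphs, the program minimizes $\tr(\inv{(\A^t\A)}\W^t\W)$ subject to $\sens{\A}\le 1$, whose minimizer is precisely the total-error-minimizing strategy; and since the objective is linear while every constraint except (\ref{eqn:rankconstraint}) is an LMI or a linear inequality, the program is an SDP with a single rank constraint, as claimed.

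The main obstacle is the nested Schur-complement calculation in the third paragraph: one must carefully track the factor $\tfrac12$ introduced by inverting $2\I_m$, verify that conjugating by $\inv{\W}$ and then re-inverting reproduces $\W\inv{(\A^t\A)}\W^t$ so that its diagonal—and hence its trace—is exactly what $\sum_i u_i$ measures, and justify invertibility of the intermediate corner block, which is where full rank of $\A$ and invertibility of $\W$ are actually needed.
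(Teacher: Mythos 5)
Your proposal is correct and follows essentially the same route as the paper's proof: the rank constraint forces $\mathbf{Z}=\A^t\A$, the linear constraints in Eq.~(\ref{eqn:sensconstraint}) linearize $\sens{\A}\leq 1$, the semidefinite constraint yields $u_i \geq 2(\W\inv{(\A^t\A)}\W^t)_{ii}$ (the paper computes the bottom-right block of $\inv{\mathbf{X}}$ directly, which is the same calculation as your nested Schur complements), and the scaling argument pins the optimum at $\sens{\A}=1$, matching the paper's observation that multiplying $\mathbf{X}$ by a constant would otherwise shrink the objective. Your justification of $\mathbf{Z}=\A^t\A$ via the Schur-complement rank identity is a welcome tightening of a step the paper merely asserts, but it is a presentational refinement rather than a different approach.
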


\begin{proof}
In Program~\ref{prog:minerr}, $u_1+\ldots+u_n$ is an upper bound on the total error (modulo constant factors).  The rank constraint in Eq.~(\ref{eqn:rankconstraint}) makes sure that $\mathbf{Z}=\A^t\A$.  

The semidefinite constraint, Eq.~(\ref{eqn:sdpconstraint}), ensures that $u_i$ is an upper bound on twice the error of the $i^{th}$ query in the workload, ignoring for the moment the sensitivity term.
$$
u_i \geq 2(\W\inv{(\A^t\A)}\W^t)_{ii}
$$
To show this, let $\mathbf{X}$ be the $(m+n)\times(m+n)$ upper left submatrix of the matrix in Eq.~(\ref{eqn:sdpconstraint}), substituting $\A^t\A$ for $\mathbf{Z}$:
\[\mathbf{X}=\left[\begin{array}{cc}2\mathbf{I}_m & -\A\inv{\W}\\ -(\A\inv{\W})^t & \inv{(\W^t)}\A^t\A\inv{\W} \end{array}\right],\]
and then 
\[\inv{\mathbf{X}}=\left[\begin{array}{cc}\W\inv{(2\mathbf{I}_m-\A\A^\plus)}\W^t & (\W\A^\plus)^t\\ \W\A^\plus & 2\W\inv{(\A^t\A)}\W^t \end{array}\right].\]

The semidefinite constraints in Eq.~(\ref{eqn:sdpconstraint}) are equivalent to:
\[\forall i,\; u_i\geq(\inv{\mathbf{X}})_{m+i,m+i}= 2(\W\inv{(\A^t\A)}\W^t)_{ii}.\]

Thus, minimizing $u_1+\ldots+u_n$ is equivalent to minimizing the trace of $\W\inv{(\A^t\A)}\W^t$.  To make $u_1+\ldots+u_n$ a bound on the total error, we must show that $\sens{\A} = 1$.  The constraints in Eq.~(\ref{eqn:sensconstraint}) ensure that $\sens{\A} \leq 1$.  To see that $\sens{\A} \geq 1$, observe that
$\inv{(k\mathbf{X})}=\frac{1}{k}\inv{\mathbf{X}}$.  So $u_1 + \dots u_n$ is minimized when $\sens{\A} = 1$ because otherwise
we can multiply $\mathbf{X}$ (which contains $\A$) by a constant to make $u_1+\ldots+u_n$ smaller. Above all, we have
\begin{align*}
u_1+u_2+\ldots+u_n&=2\sum_{i=1}^n (\W\inv{(\A^t\A)}\W^t)_{ii}\\
&=2\tr (\W\inv{(\A^t\A)}\W^t)\\
&=2\sens{\A}^2 \; \tr (\W\inv{(\A^t\A)}\W^t)\\
&=\epsilon^2 \totalerror{\A}{\W}.
\end{align*}
with $\epsilon$ fixed.
\end{proof}

%

Thus Theorem~\ref{thm:solving:l1} provides the best strategy to the $\minerror$ problem with at most $m$ queries.  Observe that if the optimal strategy has $m' < m$ queries, then Program~\ref{prog:minerr} will return an $m \times n$ matrix with $m - m'$ rows of $0$s.  In addition, if the workload contains queries with coefficients in $\{ -1, 0, 1 \}$, we can show that $n^2$ is upper bound on the number of queries in the optimal strategy.

Dattorro \cite{dattorro2005convex} shows that solving a semidefinite program with rank constraints can be converted into solving two semi\-definite programs iteratively. The convergence follows the widely used trace heuristic for rank minimization. We are not aware of results that quantify the number of iterations that are required for convergence. However, notice it takes $O(n^3)$ time to solve a semidefinite program with an $n\times n$ semidefinite constraint matrix and in Program~\ref{prog:minerr}, there are $n$ semidefinite constraint matrices with size $m+n$, which can be represented as a semidefinite constraint matrix with size $n(m+n)$. Thus, the complexity of solving our semidefinite program with rank constraints is at least $O(m^3n^3)$.

\subsection{Approximations to the $\minerror$ problem} \label{sec:sub:approx}
\eat{Since it is difficult to solve $minError$ problem accurately, some heuristics can be applied to get approximate solutions. One of the straightforward ideas is to convert the $m\times n$ workload $\W$ into a $n\times n$ $\A$ strategy while keep the same error profile. Theorem~\ref{thm:profile-strategy} indicates that such strategy can be found by the singular value decomposition of $\W$: let $\W=\Q_\W\D_\W\P^t_\W$ and $\D'_\W$ be the first $n$ rows of $\D_\W$. Then $\D'_\W\P^t_\W$ is a strategy as we want.

Though singular value decomposition provides an easy way to generate an alternative query strategy, the optimality of the strategy can not be guaranteed. }

As mentioned above, the $\minerror$ problem can be simplified by bounding the sensitivity of $\A$ with some properties of $\A^t\A$. Here we introduce two approximation methods that use this idea and can be computed efficiently: the $L_2$ approximation (\ref{sec:solving:l2approx}), and the singular value bound approximation (\ref{sec:solving:eigenapprox}). Error bounds on both methods can be measured by providing upper bounds to $\sens{\A}$.

\subsubsection{$L_2$ approximation}
\label{sec:solving:l2approx}
Note that the diagonal entries of $\A^t\A$ are the squared $L_2$ norms of column vectors of $\A$. For sensitivity, recall that we are interested in the maximum $L_1$ norm of the column vectors of $\A$. 
This observation leads to the following approaches:  we can either use the $L_2$ norm as an upper bound to the $L_1$ norm, or we can relax the definition of differential privacy by measuring the sensitivity in terms of $L_2$ rather than $L_1$.

%

~ \\ \noindent {\bf Using $L_2$ norm as an upper bound to $L_1$ norm.}
Instead of $\minerror$, we can solve the following $L_2$ approximation problem. We use $\Ltwo{\A}$ to denote the maximum $L_2$ norm of column vectors of $\A$.

\begin{problem}[$L_2$ approximation] \label{problem:L2approx}
Given a workload matrix $\W$, find the strategy $\A$ that minimizes \[\Ltwo{\A}^2\tr (\W \inv{(\A^t\A)} \W^t).\]
\end{problem}

According to the basic property of $L$ norms, for any vector $\vect{v}$ of dimension $n$, $\Ltwo{\vect{v}}\leq \Lone{\vect{v}} \leq \sqrt{n}\Ltwo{\vect{v}}$. Therefore we can bound the approximation rate of the $L_2$ approximation.

\begin{theorem}\label{thm:L2approx}
Given a workload $\W$, let $\A$ be the optimal solution to the $minError$ problem and $\A'$ be the optimal solution to the $L_2$ approximation. Then \[\totalerror{\A'}{\W}\leq n\totalerror{\A}{\W}.\]
\end{theorem}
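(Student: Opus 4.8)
The plan is to exploit that the true error $\totalerror{\A}{\W}=(\frac{2}{\epsilon^2})\sens{\A}^2\, T(\A)$ and the $L_2$ objective $L(\A)=\Ltwo{\A}^2\,T(\A)$, where I abbreviate the shared \emph{profile term} as $T(\A)=\tr(\W\ep{\A}\W^t)$, differ only in their scalar prefactor: squared maximum $L_1$ column norm versus squared maximum $L_2$ column norm. The whole argument then reduces to controlling the ratio of these prefactors through the elementary sandwich $\Ltwo{\vect{v}}\le\Lone{\vect{v}}\le\sqrt{n}\,\Ltwo{\vect{v}}$ applied to the columns. First I would lift this sandwich to the matrix level: for \emph{any} strategy, $\Ltwo{\A}\le\sens{\A}$ (the column maximizing the $L_2$ norm has its $L_2$ norm bounded by its $L_1$ norm), and, \emph{when the columns of $\A$ have dimension $n$}, $\sens{\A}\le\sqrt{n}\,\Ltwo{\A}$. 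Each matrix-level bound follows by evaluating the scalar inequality at the appropriate maximizing column and then taking the maximum over columns.

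The one point requiring care is ensuring the blow-up factor is exactly $n$ and not the number of rows of $\A'$: the bound $\sens{\A'}\le\sqrt{m'}\,\Ltwo{\A'}$ a priori carries the column dimension $m'$, which could exceed $n$. I would dispose of this by noting that the $L_2$ objective depends on $\A$ only through the Gram matrix $\A^t\A$, since $\Ltwo{\A}^2=\max_j(\A^t\A)_{jj}$ and the profile term $T(\A)$ is manifestly a function of $\A^t\A$ alone. Hence if $\A'$ is any minimizer, I may replace it by the symmetric positive definite square root $\B=(\A'^t\A')^{1/2}$, which is $n\times n$, full rank, and satisfies $\B^t\B=\A'^t\A'$; then $L(\B)=L(\A')$, so $\B$ is also optimal and I may assume without loss of generality that $\A'$ is $n\times n$ with $n$-dimensional columns. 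This Gram-matrix reduction is the step I expect to be the main obstacle, since it is what pins the constant at $n$ rather than at the row count.

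With $\A'$ taken $n\times n$, the two matrix-level bounds yield $\sens{\A'}^2\le n\,\Ltwo{\A'}^2$ and $\Ltwo{\A}^2\le\sens{\A}^2$, while optimality of $\A'$ for the $L_2$ problem (together with feasibility of $\A$ as a full-rank strategy supporting $\W$) gives $L(\A')\le L(\A)$. Chaining these, and using $T\ge 0$ throughout,
\[
\totalerror{\A'}{\W}=(\tfrac{2}{\epsilon^2})\sens{\A'}^2 T(\A')\le (\tfrac{2}{\epsilon^2})\,n\,L(\A')\le (\tfrac{2}{\epsilon^2})\,n\,L(\A)\le (\tfrac{2}{\epsilon^2})\,n\,\sens{\A}^2 T(\A)=n\,\totalerror{\A}{\W},
\]
which is the desired bound. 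Everything other than the Gram-matrix reduction is the two columnwise maximizations and bookkeeping of positive scalar factors.
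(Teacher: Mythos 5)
Your proof is correct and follows essentially the same route the paper intends: the paper derives Theorem~\ref{thm:L2approx} directly from the norm sandwich $\Ltwo{\vect{v}}\leq\Lone{\vect{v}}\leq\sqrt{n}\,\Ltwo{\vect{v}}$ applied to the two scalar prefactors, with the profile term shared between the objectives and optimality of $\A'$ giving the comparison of the $L_2$ objectives, exactly as in your chain of inequalities. Your Gram-matrix reduction to the symmetric square root is a worthwhile extra step of rigor rather than a divergence: the paper handles the column-dimension issue only implicitly, by treating the $L_2$ problem over $n\times n$ strategies (Program~\ref{prog:L2approx} returns $\mathbf{X}=\A^t\A$, and any $n\times n$ factorization is taken as the solution), which is precisely the normalization your argument makes explicit and justifies.
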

Notice the $L_2$ bound is equal to the $L_1$ bound if all queries in strategy $\A$ are uncorrelated, so that the $L_2$ approximation gives the optimal strategy if the optimal strategy only contains uncorrelated queries such as $\I_n$.

~ \\ \noindent {\bf Relaxing the definition of differential privacy.}
$L_2$ norms can also be applied by relaxing the definition of $\epsilon$-differential privacy into $(\epsilon,\delta)$-differential privacy, which is defined as following:

\begin{definition}[$(\epsilon,\delta)$-differential privacy] A randomized algorithm $\alg$ is $(\epsilon,\delta)$-differentially private if for any instance $I$, any $I' \in \nbrs(I)$, and any subset of outputs $S \subseteq Range(\alg)$, the following holds:
\[
Pr[ \alg(I) \in S] \leq \exp(\epsilon) \times Pr[ \alg(I') \in S] +\delta
\]		
where the probability is taken over the randomness of the $\alg$.
	\end{definition}

$(\epsilon, \delta)$-differential privacy can be achieved by answering each query in strategy $\A$ with i.i.d Gaussian noise:
\begin{restatable}{theorem}{Relaxdp}\label{thm:l2diffpriv}
Let $\W$ be a qu\-ery matrix consisting of $m$ queries, and let $\b_\delta$ be a length-$m$ column vector consisting of independent samples from a Gaussian distribution with scale $N(0, 8\ln(2/\delta))$. Then for $\epsilon\leq 8\ln(2/\delta)$, $\delta\leq 1$, the randomized algorithm $\LM$ that outputs the following vector is $(\epsilon, \delta)$-differentially private:
$$\LM_\delta(\W,\x) = \W\x + (\frac{\Ltwo{\W}}{\epsilon})\b_\delta$$
\end{restatable}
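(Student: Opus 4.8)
The plan is to recognize $\LM_\delta$ as the Gaussian mechanism for the vector-valued query $\W\x$ and to bound its privacy loss. \emph{First} I would pin down the relevant sensitivity. For neighboring instances the frequency vectors $\x,\x'$ differ by exactly one in a single coordinate $j$, so $\W\x-\W\x'$ equals $\pm$ the $j$-th column of $\W$; its $L_2$ norm is therefore at most $\Ltwo{\W}$, with equality for the worst column. Hence the mechanism adds to the true answer $\W\x$ an isotropic Gaussian vector whose per-coordinate standard deviation is $\sigma=(\Ltwo{\W}/\epsilon)\sqrt{8\ln(2/\delta)}$, and the largest possible shift between the means of $\LM_\delta(\W,\x)$ and $\LM_\delta(\W,\x')$ is a vector of $L_2$ norm at most $\Ltwo{\W}$.

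\emph{Second} I would reduce $(\epsilon,\delta)$-differential privacy to a one-sided tail bound on the privacy loss. Fix neighbors $\x,\x'$, put $v=\W\x-\W\x'$ with $\|v\|_2\le\Ltwo{\W}$, and consider an output $o=\W\x+z$ with $z\sim N(\mathbf 0,\sigma^2\I_m)$. The privacy loss is
$$L=\frac{1}{2\sigma^2}\left(\|z+v\|_2^2-\|z\|_2^2\right)=\frac{\langle z,v\rangle}{\sigma^2}+\frac{\|v\|_2^2}{2\sigma^2}.$$
Since $\langle z,v\rangle\sim N(0,\sigma^2\|v\|_2^2)$, the loss $L$ is Gaussian with mean $\mu=\|v\|_2^2/(2\sigma^2)$ and variance $2\mu$. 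Splitting $\Pr[\LM_\delta(\W,\x)\in S]$ into the contributions from $\{L\le\epsilon\}$ and $\{L>\epsilon\}$ gives the standard bound $\Pr[\LM_\delta(\W,\x)\in S]\le e^\epsilon\Pr[\LM_\delta(\W,\x')\in S]+\Pr[L>\epsilon]$, so it suffices to show $\Pr[L>\epsilon]\le\delta$ for every such pair.

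\emph{Third} I would carry out the tail estimate, and this is where the hypotheses enter. Substituting $\sigma^2$ and $\|v\|_2^2\le\Ltwo{\W}^2$ gives $\mu\le\epsilon^2/(16\ln(2/\delta))$; the assumption $\epsilon\le 8\ln(2/\delta)$ then forces $\mu\le\epsilon/2$, so $\epsilon-\mu\ge\epsilon/2>0$, while the variance $2\mu\le\epsilon^2/(8\ln(2/\delta))$. Because $\Pr[N(\mu,2\mu)>\epsilon]$ increases with $\|v\|_2$ on this range, it suffices to treat the worst case; standardizing and using the Gaussian tail bound $\Pr[N(0,1)>t]\le e^{-t^2/2}$, together with $(\epsilon-\mu)/\sqrt{2\mu}\ge\sqrt{2\ln(2/\delta)}$, yields
$$\Pr[L>\epsilon]=\Pr\!\left[N(0,1)>\frac{\epsilon-\mu}{\sqrt{2\mu}}\right]\le\exp\!\left(-\tfrac12\bigl(\sqrt{2\ln(2/\delta)}\bigr)^2\right)=\frac{\delta}{2}\le\delta.$$
Combined with the decomposition above, this establishes $(\epsilon,\delta)$-differential privacy. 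The condition $\delta\le1$ merely guarantees $\ln(2/\delta)>0$, so that $\sigma$ is well defined and positive.

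\emph{Main obstacle.} The overall structure is the standard Gaussian-mechanism argument, so the only delicate point is the constant bookkeeping: verifying that the specific scale $8\ln(2/\delta)$ (note $2/\delta$, not $1/\delta$) leaves enough slack in the tail. The hypothesis $\epsilon\le 8\ln(2/\delta)$ is precisely what keeps the loss mean $\mu$ below $\epsilon/2$, so that the standardized threshold $(\epsilon-\mu)/\sqrt{2\mu}$ stays at least $\sqrt{2\ln(2/\delta)}$ — exactly the value the exponential tail needs in order to reach $\delta/2$.
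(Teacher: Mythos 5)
Your proof is correct and follows essentially the same route as the paper's: bound $\Ltwo{\W\x-\W\x'}$ by $\Ltwo{\W}$ via the worst column, split the output space according to whether the privacy loss exceeds $\epsilon$, use $\epsilon\le 8\ln(2/\delta)$ to cap the quadratic mean term at $\epsilon/2$, and dispose of the bad event with a Gaussian tail at threshold $\sqrt{2\ln(2/\delta)}$, yielding $\delta/2\le\delta$. The only cosmetic differences are packaging: you phrase the argument as a tail bound on the privacy-loss variable $L\sim N(\mu,2\mu)$ where the paper works directly with the density ratio and the explicit bad set $\mathcal{Z}$, and you close with the Chernoff-type bound $\Pr[N(0,1)>t]\le e^{-t^2/2}$ where the paper invokes the sharper Mills-ratio estimate $\Pr[z\ge x]<\frac{1}{x\sqrt{2\pi}}e^{-x^2/2}$ --- both suffice.
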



Recall the proof of Proposition~\ref{prop:error:single} and apply it to Theorem~\ref{thm:l2diffpriv}.  Minimizing the total error under $(\epsilon, \delta)$-differential privacy is equivalent to solving  Problem~\ref{problem:L2approx}. 

A semidefinite program (Program~\ref{prog:L2approx}) can be used to solve Problem~\ref{problem:L2approx}. For a given solution $\mathbf{X}$ of Program~\ref{prog:L2approx}, any $n\times n$ matrix $\A$ such that $\mathbf{X}=\A^t\A$ is a valid solution to Problem~\ref{problem:L2approx}. Moreover, when $\delta$ is given, the Gaussian noise added in the $(\epsilon, \delta)$-differential privacy is $\Theta({\epsilon^2}\Ltwo{\A}^2)$. According to the relationship between $L_1$ and $L_2$ norm, the Laplace noise added in the $\epsilon$-differential privacy is  at least $\Omega({\epsilon^2}\Ltwo{\A}^2)$, which indicates relaxing the definition of differential privacy will always lead to better utility.
\begin{algorithm}[t]
\caption{$L_2$ approximation}\label{prog:L2approx}
\begin{align*}
\mbox{Given:          } & \W\in\mathbb{R}^{n\times n}.\\
\mbox{Minimize:      } &u_1+u_2+\ldots+u_n.\\
\mbox{Subject to:    } &\mbox{For } i\in [n]: \mbox{$\vect{e}_i$ is the $n$ dimensional column vector whose}\nonumber\\
& \mbox{$i^{th}$ entry s 1 and other entries are 0.}\nonumber\\
& \left[\begin{array}{cc}\mathbf{X} & \vect{e}_i \\ \vect{e}_i^t & u_i\end{array}\right]\succeq 0;\\
& (\W^t\mathbf{X}\W)_{ii}\leq 1,\quad i\in [n].
\end{align*}
\end{algorithm}

\subsubsection{Singular value bound approximation}
\label{sec:solving:eigenapprox}

Another way to bound the $L_1$ sensitivity is based on its geometric properties. Remember the matrix $\A$ can be reperesented by its singular value decomposition $\A=\Q_{\A}\D_{\A}\P_{\A}^t$. Let us consider the geometry explanation of the sensitivity. The sensitivity of $\A$ can be considered as the radius of minimum $L_1$ ball that can cover all column vectors of $\A$, and column vectors of $\A$ lay on the ellipsoid
\[\phi_\A: \x^t\Q_\A^t\inv{(\D_\A^t\D_\A)}\Q_\A\x=1.\]
Let $\sens{\phi_\A}$ denotes radius of the minimum $L_1$ ball that covers the ellipsoid $\phi_\A$. Notice all the column vectors of $\A$ are contained in $\phi_\A$, which indicates $\sens{\A}\leq\sens{\phi_\A}$. The minimum sensitivity that can be achieved by the strategies that are profile equivalent to $\A$ can be bounded as following:
\[\min_{\B\,:\,\B^t\B=\A^t\A}\sens{\B}\leq\min_{\B\,:\,\B^t\B=\A^t\A}\sens{\phi_\B}.\]
The matrix $\B$ that is profile equivalent to $\A$ and has the minimum $\sens{\phi_\B}$ is given by the theorem below.

\begin{restatable}{theorem}{EigenBound}\label{thm:eigenbound}
Let $\A$ be a matrix with singular value decomposition $\A=\Q_{\A}\D_{\A}\P_{\A}^t$ and $\delta_1, \delta_2, \ldots, \delta_n$ be its singular values. Then
\begin{align}
\underset{\B\,:\,\B^t\B=\A^t\A}{\operatorname{argmin}}\sens{\phi_\B}&=\D_\A\P_\A^t, \nonumber\\
\min_{\B\,:\,\B^t\B=\A^t\A}\sens{\phi_\B}&= \sqrt{\delta_1^2+\delta_2^2+\ldots+\delta_n^2} \leq \sqrt{n}\sens{\A}. \label{eqn:eigenbound}
\end{align}
\end{restatable}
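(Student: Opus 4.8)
The plan is to rewrite $\sens{\phi_\B}$ as an $\ell_2\to\ell_1$ operator norm, lower-bound that norm by a quantity that is invariant across the profile-equivalent family $\{\B : \B^t\B=\A^t\A\}$, and then exhibit $\D_\A\P_\A^t$ as a matrix attaining the bound. First I would note that $\phi_\B$ is the image of the unit $L_2$-sphere under $\B$, namely the set $\{\B\w : \Ltwo{\w}=1\}$, and is therefore symmetric about the origin. Consequently the smallest $L_1$ ball covering $\phi_\B$ is centered at the origin, so $\sens{\phi_\B}=\max_{\Ltwo{\w}=1}\Lone{\B\w}$.

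The core step is the lower bound: for any matrix $\M$ one has $\max_{\Ltwo{\w}=1}\Lone{\M\w}\ge\frob{\M}$. Using $L_1$--$L_\infty$ duality, $\Lone{\M\w}=\max_{\vect{s}\in\{-1,1\}^m}\vect{s}^t\M\w$, and interchanging the two maxima gives $\max_{\Ltwo{\w}=1}\Lone{\M\w}=\max_{\vect{s}\in\{-1,1\}^m}\Ltwo{\M^t\vect{s}}$, since the inner maximization of the linear functional $(\M^t\vect{s})^t\w$ over the unit sphere returns $\Ltwo{\M^t\vect{s}}$. Averaging $\Ltwo{\M^t\vect{s}}^2=\vect{s}^t\M\M^t\vect{s}$ over $\vect{s}$ drawn uniformly from $\{-1,1\}^m$ gives $\tr(\M\M^t)=\frob{\M}^2$ because the off-diagonal terms cancel; since a maximum dominates an average, $\max_{\vect{s}}\Ltwo{\M^t\vect{s}}^2\ge\frob{\M}^2$. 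Applying this to $\M=\B$ and using $\frob{\B}^2=\tr(\B^t\B)=\tr(\A^t\A)=\delta_1^2+\cdots+\delta_n^2$, which is the same for every $\B$ with $\B^t\B=\A^t\A$, yields $\sens{\phi_\B}\ge\sqrt{\delta_1^2+\cdots+\delta_n^2}$ for all such $\B$.

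To see that $\B=\D_\A\P_\A^t$ attains this bound, I substitute $\w'=\P_\A^t\w$ (again unit length because $\P_\A$ is orthogonal) to get $\sens{\phi_\B}=\max_{\Ltwo{\w'}=1}\Lone{\D_\A\w'}=\max_{\Ltwo{\w'}=1}\sum_i\delta_i|w'_i|$, which equals $\sqrt{\sum_i\delta_i^2}$ by Cauchy--Schwarz, with maximizer $|w'_i|\propto\delta_i$. This pins down both the argmin and the optimal value. For the last inequality, letting $\vect{a}_1,\ldots,\vect{a}_n$ be the columns of $\A$, I write $\sum_i\delta_i^2=\frob{\A}^2=\sum_{j=1}^n\Ltwo{\vect{a}_j}^2\le\sum_{j=1}^n\Lone{\vect{a}_j}^2\le n\,\sens{\A}^2$, using $\Ltwo{\vect{a}_j}\le\Lone{\vect{a}_j}\le\sens{\A}$ for each column.

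The main obstacle is the lower bound, since $\sens{\phi_\B}$ depends on the orientation $\Q_\B$ of the ellipsoid and a crude estimate (for instance by the largest semi-axis $\max_i\delta_i$) is far too weak. The duality-and-averaging argument is the decisive move: it replaces the orientation-dependent operator norm by the orientation-invariant Frobenius norm, thereby removing the dependence on $\Q_\B$ entirely and reducing the minimization to a one-line Cauchy--Schwarz computation at the canonical orientation.
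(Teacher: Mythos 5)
Your proof is correct, and its endgame coincides with the paper's: attainment at $\D_\A\P_\A^t$ by the substitution $\w'=\P_\A^t\w$ plus Cauchy--Schwarz, and the bound $\sqrt{\delta_1^2+\cdots+\delta_n^2}\leq\sqrt{n}\,\sens{\A}$ via column norms ($\Ltwo{\vect{a}_j}\leq\Lone{\vect{a}_j}\leq\sens{\A}$), essentially as in the appendix. The genuine divergence is in the lower bound, and it is worth spelling out. Both arguments in fact pass through the same identity: the paper's tangent-hyperplane lemma (Lemma~\ref{lem:tangentplane}) says the support value of $\phi_\B$ in direction $\v$ is $\sqrt{\v^t\B\B^t\v}$, so $\sens{\phi_\B}=\max_{\vect{s}\in\{-1,1\}^m}\Ltwo{\B^t\vect{s}}$ --- which is exactly your $L_1$--$L_\infty$ dual formulation of the $\ell_2\to\ell_1$ operator norm. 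The two proofs then establish $\max_{\vect{s}}\vect{s}^t\B\B^t\vect{s}\geq\tr(\B\B^t)$ differently. The paper normalizes so that the maximizing sign vector is $(1,\ldots,1)$, compares it against single-coordinate sign flips to conclude that each off-diagonal row sum of the quadratic-form matrix is nonnegative, and hence that $(1,\ldots,1)$ picks up at least the trace; you instead average $\vect{s}^t\B\B^t\vect{s}$ over uniform random signs so the off-diagonal terms cancel outright and the maximum dominates the trace. Your averaging route is cleaner and slightly more robust: it needs no tangency lemma, no ``without loss of generality'' step (the paper's normalization silently uses invariance of the family $\{\B:\B^t\B=\A^t\A\}$ under coordinate sign changes), it works verbatim for rectangular $\B$, and it makes transparent that the bound is the orientation-invariant quantity $\frob{\B}^2=\tr(\B^t\B)=\tr(\A^t\A)$, which is why the dependence on $\Q_\B$ disappears. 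What the paper's geometric version buys in exchange is the picture of where the $L_1$ diamond touches the ellipsoid and an equality analysis (the form matrix diagonal, i.e.\ $\Q_\B=\I$) that motivates the argmin; you recover the argmin only by exhibiting a matrix attaining the bound, which is all the statement as written requires. One small point in your favor: you explicitly justify that the minimal covering $L_1$ ball is centered at the origin, using the central symmetry of $\phi_\B$ --- a step the paper's proof takes for granted.
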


Using the singular value bound in Theorem~\ref{thm:eigenbound} to substitute for the $L_1$ sensitivity, the $minError$ problem can be converted to the following approximation problem.
\begin{problem}[Singular value bound approximation] \label{problem:eigenapprox}
Given a workload matrix $\W$, find the strategy $\A$ that minimizes \[(\delta_1^2+\delta_2^2+\ldots+\delta_n^2)\tr(\W\inv{(\A^t\A)}\W^t),\] 
where $\delta_1, \delta_2, \ldots, \delta_n$ are singular values of $\A$.
\end{problem}

The singular value bound approximation has a closed-form solution.
\begin{restatable}{theorem}{EigenApprox}
	\label{thm:eigen_approx}
	Let $\W$ be the workload matrix with singular value decomposition $\W=\Q_\W \D_\W \P^t_\W$ and $\delta'_1, \delta'_2,\ldots,\delta'_n$ be its singular values. The optimal solution $\D_{\A}$, $\P_{\A}$ to the singular value bound approximation is to let $\P_{\A} = \P_\W$ and $\D_\A=diag(\sqrt{\delta'_1},\sqrt{\delta'_2},\ldots,\sqrt{\delta'_n})$.
\end{restatable}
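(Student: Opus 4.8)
The plan is to reduce the minimization of Problem~\ref{problem:eigenapprox} to a clean scalar inequality. The objective $g(\A) = (\delta_1^2 + \cdots + \delta_n^2)\,\tr(\W\ep{\A}\W^t)$ depends on the strategy $\A = \Q_\A\D_\A\P_\A^t$ only through its singular values $\delta_1,\dots,\delta_n$ and through $\ep{\A} = \P_\A\,diag(1/\delta_1^2,\dots,1/\delta_n^2)\,\P_\A^t$; the orthogonal factor $\Q_\A$ never appears. I would therefore treat the orthogonal matrix $\P_\A$ and the positive scalars $\delta_i$ as the only free variables (equivalently, restrict to square $\A$). Writing the SVD $\W = \Q_\W\D_\W\P_\W^t$, so that $\W^t\W = \P_\W\D_\W^2\P_\W^t$ has eigenvalues $(\delta'_1)^2,\dots,(\delta'_n)^2$, the cyclic property of the trace gives
\[
\tr(\W\ep{\A}\W^t) = \tr\!\big(diag(1/\delta_i^2)\,\G\big), \qquad \G := \P_\A^t\,\W^t\W\,\P_\A .
\]
Here $\G$ is orthogonally similar to $\W^t\W$, hence symmetric positive semidefinite with the same eigenvalues $(\delta'_i)^2$; let $g_{11},\dots,g_{nn}\ge 0$ be its diagonal. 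The objective becomes the purely scalar quantity $g(\A) = \big(\sum_i \delta_i^2\big)\big(\sum_i g_{ii}/\delta_i^2\big)$.

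Next I would prove the lower bound $g(\A) \ge (\sum_i \delta'_i)^2$ in two moves. First, Cauchy--Schwarz with $a_i = \delta_i$ and $b_i = \sqrt{g_{ii}}/\delta_i$ yields
\[
\Big(\sum_i \delta_i^2\Big)\Big(\sum_i \frac{g_{ii}}{\delta_i^2}\Big) \ge \Big(\sum_i \sqrt{g_{ii}}\Big)^2 .
\]
Second, by the Schur--Horn theorem the diagonal $(g_{ii})$ of the symmetric matrix $\G$ is majorized by its eigenvalue vector $((\delta'_i)^2)$; since $\sqrt{\cdot}$ is concave, majorization gives $\sum_i \sqrt{g_{ii}} \ge \sum_i \sqrt{(\delta'_i)^2} = \sum_i \delta'_i$. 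Combining the two inequalities establishes $g(\A) \ge (\sum_i \delta'_i)^2$ for every feasible strategy.

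To show the bound is attained at the claimed point, set $\P_\A = \P_\W$, which makes $\G = \D_\W^2$ diagonal, so $g_{ii} = (\delta'_i)^2$ and the majorization step becomes an equality; then set $\delta_i = \sqrt{\delta'_i}$, so that $a_i = b_i = \sqrt{\delta'_i}$ and Cauchy--Schwarz is tight. The value is then exactly $(\sum_i \delta'_i)^2$, so $\P_\A = \P_\W$ and $\D_\A = diag(\sqrt{\delta'_1},\dots,\sqrt{\delta'_n})$ is optimal, which is the assertion of the theorem. Because $g$ is invariant under the common rescaling $\delta_i \mapsto t\delta_i$, the optimal singular values are pinned down only up to a scalar, and $\sqrt{\delta'_i}$ is the stated representative.

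The main obstacle is the lower bound, specifically the inequality $\sum_i \sqrt{g_{ii}} \ge \sum_i \delta'_i$, which formalizes the intuition that aligning the eigenvectors of the strategy with those of the workload is optimal. This is exactly where Schur--Horn majorization together with concavity of the square root does the work; an alternative is to fix the $\delta_i$ and invoke von Neumann's trace inequality to show $\tr(diag(1/\delta_i^2)\,\G)$ is minimized by anti-sorting the two eigenvalue sequences, i.e.\ by taking $\P_\A = \P_\W$. Either way one must confirm that the ordering chosen to pair the eigenvalues does not change the final optimum; this is automatic here because $\sum_i \delta'_{\sigma(i)} = \sum_i \delta'_i$ for every permutation $\sigma$, so the Cauchy--Schwarz step returns the value $(\sum_i \delta'_i)^2$ regardless of how the singular values of $\A$ are matched to those of $\W$.
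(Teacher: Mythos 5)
Your proof is correct, and it is rigorous at precisely the point where the paper's own proof is informal. Both arguments share the same skeleton: use cyclicity of the trace to rewrite the objective as $\bigl(\sum_i \delta_i^2\bigr)\tr\bigl(diag(1/\delta_1^2,\dots,1/\delta_n^2)\,\G\bigr)$ with $\G=\P_\A^t\W^t\W\P_\A$, observe that $\Q_\A$ is irrelevant, optimize the rotation, and then pin down the singular values via the Cauchy--Schwarz equality condition (the paper phrases this as requiring the ratio between $\delta_i'/\delta_i$ and $\delta_i$ to be constant, giving $\delta_i=c\sqrt{\delta_i'}$, then setting $c=1$ by scale invariance, exactly as in your closing remark). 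Where you genuinely differ is the justification that $\P_\A=\P_\W$ is optimal: the paper argues informally that since $\D_\W$ ``is actually a set of queries over individual buckets, the best strategy to estimate it is also queries over individual buckets,'' hence $\P_\A\P_\W^t=\I$ --- an appeal to intuition rather than a proof. You instead establish the uniform lower bound $g(\A)\geq\bigl(\sum_i\sqrt{g_{ii}}\bigr)^2\geq\bigl(\sum_i\delta_i'\bigr)^2$ by combining Cauchy--Schwarz with the Schur--Horn theorem (the diagonal of the symmetric matrix $\G$ is majorized by its spectrum) and concavity of the square root, and then exhibit attainment at $\P_\A=\P_\W$, $\delta_i=\sqrt{\delta_i'}$. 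This buys two things: a rigorous \emph{joint} optimization over $(\P_\A,\D_\A)$, whereas the paper's sequential argument fixes the rotation heuristically before touching $\D_\A$; and an explicit certificate of the optimal value $\bigl(\sum_i\delta_i'\bigr)^2$ that rules out any ordering subtlety, since $\sum_i\delta_{\sigma(i)}'$ is permutation-invariant --- a point the paper never addresses. Your von Neumann trace-inequality alternative for the rotation step is equally valid. The only caveats worth recording are trivial: one needs $\W$ nonsingular (as Section~\ref{sec:solving:optimization} assumes, so $\delta_i'>0$) and $\A$ full rank (so $\delta_i>0$ and $\inv{(\A^t\A)}$ exists), and your restriction to square $\A$ is without loss because the objective depends on $\A$ only through its singular values and $\P_\A$.
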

The solution in Theorem~\ref{thm:eigen_approx} is very similar to the strategy metioned at the end of Sec.~\ref{sec:analysis} that matches $\P_\A$ to $\P_\W$ and $\D_\A$ be $diag(\delta'_1,\delta'_2,\ldots,\delta'_n)$. We use a slightly different $\D_\A$ so as to provide an guaranteed error bound based on Theorem~\ref{thm:eigenbound}.

\begin{theorem}
Given a workload $\W$, let $\A$ be the optimal solution to the $minError$ problem and $\A'$ be the optimal solution to the singular value bound approximation. Then \[\totalerror{\A'}{\W}\leq n\totalerror{\A}{\W}.\]
\end{theorem}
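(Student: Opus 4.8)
The plan is to bound the total error of the singular value bound approximation's optimal solution $\A'$ against that of the true $\minerror$-optimal solution $\A$ by chaining together the approximation guarantee of Theorem~\ref{thm:eigenbound} with the optimality of $\A'$ in Problem~\ref{problem:eigenapprox}. The key quantity to track is the ratio between the actual error functional $\totalerror{\cdot}{\W}$, which uses the true $L_1$ sensitivity $\sens{\cdot}^2$, and the surrogate functional $(\delta_1^2+\cdots+\delta_n^2)\tr(\W\inv{(\B^t\B)}\W^t)$ minimized by the approximation, which replaces $\sens{\B}^2$ by the singular-value bound $\sens{\phi_\B}^2 = \delta_1^2+\cdots+\delta_n^2$.

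First I would define, for any strategy $\B$, the surrogate objective $g(\B) = (\delta_1^2+\cdots+\delta_n^2)\tr(\W\inv{(\B^t\B)}\W^t)$ and the true objective (up to the fixed factor $2/\epsilon^2$) $f(\B) = \sens{\B}^2\tr(\W\inv{(\B^t\B)}\W^t)$, so that $\totalerror{\B}{\W} = (2/\epsilon^2) f(\B)$ by Proposition~\ref{prop:error:workload}. The two inequalities I need are: (a) $f(\B)\le g(\B)$ for every $\B$, which follows from $\sens{\B}\le\sens{\phi_\B}$ as observed just before Theorem~\ref{thm:eigenbound}; and (b) $g(\B)\le n\,f(\B)$ for every $\B$, which follows from the bound $\sens{\phi_\B}=\sqrt{\delta_1^2+\cdots+\delta_n^2}\le\sqrt{n}\,\sens{\B}$ in Equation~(\ref{eqn:eigenbound}), squared and multiplied through by the common trace factor. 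These sandwich the surrogate between the true objective and $n$ times it.

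Then I would chain the inequalities. Since $\A'$ minimizes the surrogate $g$ and $\A$ is a feasible strategy, we have $g(\A')\le g(\A)$. Combining, $f(\A')\le g(\A')\le g(\A)\le n\,f(\A)$, where the first step is (a) applied at $\A'$, the middle step is optimality of $\A'$ for $g$, and the last step is (b) applied at $\A$. Multiplying through by $2/\epsilon^2$ gives $\totalerror{\A'}{\W}\le n\,\totalerror{\A}{\W}$, as required.

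The main obstacle is purely bookkeeping: I must ensure that the singular value bound approximation's objective is genuinely $\sens{\phi_\B}^2\tr(\W\inv{(\B^t\B)}\W^t)$ rather than something that conflates the profile term across differently-normalized strategies, and that inequality (b) holds uniformly in $\B$ and not merely at the optimum. Both reduce to the already-established norm comparison $\Ltwo{\v}\le\Lone{\v}\le\sqrt{n}\Ltwo{\v}$ underlying Theorem~\ref{thm:eigenbound}, so no new estimate is needed; the argument is structurally identical to the proof of Theorem~\ref{thm:L2approx}, with the singular value bound playing the role that the column $L_2$ norm played there.
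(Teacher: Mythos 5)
Your overall strategy --- sandwiching the surrogate objective $g(\B)=(\delta_1^2+\cdots+\delta_n^2)\,\tr(\W\inv{(\B^t\B)}\W^t)$ between the true objective $f(\B)=\sens{\B}^2\,\tr(\W\inv{(\B^t\B)}\W^t)$ and $n\,f(\B)$, and then chaining $f(\A')\leq g(\A')\leq g(\A)\leq n\,f(\A)$ --- is exactly the intended argument; the paper leaves this proof implicit, and it parallels the (also implicit) proof of Theorem~\ref{thm:L2approx}. Your inequality (b) is correct as a universal statement: $\delta_1^2+\cdots+\delta_n^2=\tr(\B^t\B)$ is the sum of the squared column $L_2$ norms of $\B$, hence at most $n\Ltwo{\B}^2\leq n\sens{\B}^2$, which is the paper's own argument inside the proof of Theorem~\ref{thm:eigenbound}.

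However, your inequality (a) is false as you state it: $f(\B)\leq g(\B)$ does \emph{not} hold for every $\B$, because $\sqrt{\delta_1^2+\cdots+\delta_n^2}$ is not $\sens{\phi_\B}$ in general --- by Theorem~\ref{thm:eigenbound} it equals $\min_{\B'\,:\,\B'^t\B'=\B^t\B}\sens{\phi_{\B'}}$, attained only at the canonical rotation $\Q_\B=\I$. Concretely, $\B=\left[\begin{smallmatrix}1&0\\1&0\\0&1\end{smallmatrix}\right]$ has $\delta_1^2+\delta_2^2=\tr(\B^t\B)=3$ but $\sens{\B}^2=4$, so $f(\B)>g(\B)$. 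The conflation is visible in your phrase ``$\sens{\phi_\B}^2=\delta_1^2+\cdots+\delta_n^2$''. The gap is repairable precisely because your chain invokes (a) only at $\A'$: Theorem~\ref{thm:eigen_approx} exhibits the minimizer in the form $\A'=\D_\A\P_\W^t$ (i.e., with $\Q_{\A'}=\I$), and for such a matrix each column is $\D_\A$ applied to a row of the orthogonal matrix $\P_\W$, so Cauchy--Schwarz gives each column an $L_1$ norm at most $\sqrt{\delta_1^2+\cdots+\delta_n^2}$; equivalently, $\sens{\A'}\leq\sens{\phi_{\A'}}=\sqrt{\delta_1^2+\cdots+\delta_n^2}$ by Theorem~\ref{thm:eigenbound}. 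This localization is not mere bookkeeping: since the surrogate depends on $\A$ only through $\A^t\A$, ``the optimal solution'' of Problem~\ref{problem:eigenapprox} is defined only up to left multiplication by an orthogonal $\Q$, and for a badly rotated minimizer the true sensitivity (hence the true error) can inflate by up to a further factor of $n$, breaking the stated bound. With (a) restricted to the canonical representative $\A'=\D_\A\P_\W^t$, your chain $f(\A')\leq g(\A')\leq g(\A)\leq n\,f(\A)$, multiplied by $2/\epsilon^2$, yields the theorem.
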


\eat{\subsubsection{Singluar value decomposition method}
\label{sec:solving:svd}
Given a workload $\W$, remember that 
\[\tr(\W\inv{(\A^t\A)}\W^t)=\tr(\inv{(\A^t\A)}\W^t\W),\]
which gives us the intuition that we can choose the $\A$ such that $\inv{(\A^t\A)}\W^t\W=\mathbf{I}$. Moreover, if the number of queries in $\W$ is much larger than the domain size $n$, it is better to ask a smaller set of queries instead. One of the straightforward ways to achieve the goals above is to perform a singluar value decomposition on $\W=\Q_\W\D_\W\P^t_\W$ and let $\D'_\W$ be the first $n$ rows of $\D_\W$. Then $\D'_\W\P^t_\W$ is a strategy that fits the two properties above.
}

\subsection{Augmentation Heuristic}\label{sec:aug}

We formalize below the following intuition: as far as the error profile is concerned, additional noisy query answers can never detract from query accuracy as they must have some information content useful to one or more queries.  Therefore the error profile can never be worse after augmenting the query strategy by adding rows.  

\begin{restatable}{theorem}{augment}{\sc(Augmenting a strategy)} \label{thm:augment}
Let $\A$ be a query strategy with full rank and consider a new strategy $\A'$ obtained from $\A$ by adding the additional rows of strategy $\B$, so that $\A'=\left[
\begin{smallmatrix}
\A \\ \B
\end{smallmatrix}
\right]$. For any query $\w$, we have:
\[
\w^t(\A'^t\A')^{-1}\w \leq \w^t(\A^t\A)^{-1}\w 
\]
Further, $\w^t(\A'^t\A')^{-1}\w = \w^t(\A^t\A)^{-1}\w$ only for the queries in the set $\{ \A^t\A\w \; | \; \B\w=0\}$, which is non-empty if and only if $\B$ does not have full column rank.
\end{restatable}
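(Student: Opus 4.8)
The plan is to reduce the whole statement to the antitone behavior of matrix inversion under the positive-semidefinite ordering. First I would note that appending rows simply adds Gram matrices: since $\A'=\left[\begin{smallmatrix}\A\\\B\end{smallmatrix}\right]$, we have $\A'^t\A'=\A^t\A+\B^t\B$. Abbreviating $\M=\A^t\A$, which is positive definite because $\A$ has full rank, and $\vect{N}=\B^t\B$, which is positive semidefinite, the inequality to be proved is exactly $\w^t(\M+\vect{N})^{-1}\w\le\w^t\inv{\M}\w$ for every $\w$, equivalently $\inv{\M}-(\M+\vect{N})^{-1}\succeq 0$.

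To establish this I would argue directly with eigenvalues. Since $\vect{N}\succeq 0$ we have $\M+\vect{N}\succeq\M\succ 0$, so conjugating by $\M^{-1/2}$ gives $\I+\M^{-1/2}\vect{N}\M^{-1/2}\succeq\I$; this matrix has all eigenvalues at least $1$, hence its inverse has all eigenvalues at most $1$, i.e. $\M^{1/2}(\M+\vect{N})^{-1}\M^{1/2}\preceq\I$. Conjugating back by $\M^{-1/2}$ yields $(\M+\vect{N})^{-1}\preceq\inv{\M}$, which is the desired inequality.

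For the equality condition the key fact is that a positive-semidefinite matrix $\vect{S}$ satisfies $\w^t\vect{S}\w=0$ if and only if $\vect{S}\w=0$. I would apply this to $\vect{S}=\inv{\M}-(\M+\vect{N})^{-1}$, first rewriting it by the resolvent identity $\inv{\M}-(\M+\vect{N})^{-1}=\inv{\M}\,\vect{N}\,(\M+\vect{N})^{-1}$, which comes from expanding $\inv{\M}[(\M+\vect{N})-\M](\M+\vect{N})^{-1}$. Since $\inv{\M}$ is invertible, $\vect{S}\w=0$ is equivalent to $\vect{N}(\M+\vect{N})^{-1}\w=0$. Setting $\v=(\M+\vect{N})^{-1}\w$, so that $\w=(\M+\vect{N})\v$, this reads $\vect{N}\v=0$; and because $\vect{N}\v=0$ forces $\Ltwo{\B\v}^2=\v^t\vect{N}\v=0$, we have $\ker\B=\ker\vect{N}$, so the condition is just $\B\v=0$. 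In that case $\w=(\M+\vect{N})\v=\M\v=\A^t\A\v$, giving exactly the claimed equality set $\{\A^t\A\v\mid\B\v=0\}$; reversing the substitution shows every such $\w$ does achieve equality, so the characterization is an iff.

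Finally, this equality set is $\A^t\A(\ker\B)$, and since $\A^t\A$ is invertible it contains a nonzero query if and only if $\ker\B\neq\{\vect{0}\}$, that is, exactly when $\B$ does not have full column rank. I expect the inequality itself to be the easy part; the main work is the equality analysis, specifically producing the resolvent identity for $\inv{\M}-(\M+\vect{N})^{-1}$ and then invoking the ``$\w^t\vect{S}\w=0\iff\vect{S}\w=0$'' property, after which unwinding the condition into $\B\v=0$ is routine.
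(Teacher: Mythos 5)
Your proof is correct, and while its inequality half matches the paper's argument up to notation, your equality analysis takes a genuinely different route. The paper (Appendix \ref{app:deficient}) factors $\A^t\A=\Q^t\Q$ with $\Q$ invertible, conjugates by $\inv{(\Q^t)}$ to reduce the claim to positive semidefiniteness of $\I-\Q(\A^t\A+\B^t\B)^{-1}\Q^t$, and then applies its eigenvalue-shift lemma (Proposition \ref{prop:eigenproperty}) twice to pass to $\I+(\B\inv{\Q})^t(\B\inv{\Q})$, whose eigenvalues are visibly at least $1$; your conjugation by $\M^{-1/2}$ is exactly this computation with the specific choice $\Q=\M^{1/2}$. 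For the equality case, however, the paper stays in eigenvector language, chasing a three-step chain of equivalences (kernel of $\I-\Q(\M+\vect{N})^{-1}\Q^t$, then $1$-eigenvectors of $\Q(\M+\vect{N})^{-1}\Q^t$, then $0$-eigenvectors of $(\B\inv{\Q})^t(\B\inv{\Q})$) to arrive at the condition $\B\inv{(\A^t\A)}\w=0$, whereas you derive the resolvent identity $\inv{\M}-(\M+\vect{N})^{-1}=\inv{\M}\vect{N}(\M+\vect{N})^{-1}$ and invoke the fact that $\w^t\vect{S}\w=0$ forces $\vect{S}\w=0$ for positive semidefinite $\vect{S}$. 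Your version buys some cleanliness: it delivers both directions of the characterization symmetrically, replaces the paper's slightly loose ``eigenvector with eigenvalue $0$'' phrasing with plain kernel membership, and needs no separate invertibility observation for $\Q(\M+\vect{N})^{-1}\Q^t$; the paper's version, in exchange, avoids introducing matrix square roots, using only an invertible factor of $\A^t\A$. Note that the two parametrizations of the equality set agree: the paper takes $\v=\inv{(\A^t\A)}\w$ while you take $\v=(\M+\vect{N})^{-1}\w$, and on $\ker\B$ these coincide because $\vect{N}\v=0$ gives $(\M+\vect{N})\v=\M\v$. Finally, both proofs read ``non-empty'' as containing a nonzero query, and your explicit reduction to $\ker\B\neq\{\vect{0}\}$, i.e.\ to $\B$ lacking full column rank, matches the paper's conclusion.
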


The proof is included in \reffull{Appendix \ref{app:deficient}}.

This improvement in the error profile may have a cost---namely, augmenting $\A$ with strategy $\B$ may lead to a strategy $\A'$ with greater sensitivity than $\A$.  A heuristic that follows from Theorem~\ref{thm:augment} is to augment strategy $\A$ only by completing deficient columns, that is, by adding rows with non-zero entries only in columns whose absolute column sums are less the sensitivity of $\A$.  In this case the augmentation does not increase sensitivity and is guaranteed to strictly improve accuracy for any query with a non-zero coefficient in an augmented column. 

Our techniques could also be used to reason formally about augmentations that do incur a sensitivity cost.  We leave this as future work, as it is relevant primarily to an interactive differentially private mechanism which is not our focus here.

\subsection{Minimizing the sensitivity}
\label{sec:minsens}

We now return to Problem \ref{problem:minSens} which finds the strategy with least sensitivity that results in a given profile.  This problem is important whenever one has a specific profile in mind (e.g. the profile of strategy $\H_n$ or $\Wav_n$), or when one used another method to compute a desired profile (e.g. the approximation method from Section \ref{sec:sub:approx}).  Recall that for a fixed error profile, the profile-equivalent strategies are determined by the choice of a rotation matrix $\Q$ which then determines the sensitivity of the strategy.  The following theorem formulates the problem of minimizing the sensitivity into a semidefinite program with rank constraint.

\begin{theorem}\label{thm:minsens}
Given an error profile $\M$, Program~\ref{prog:minsens} is a semidefinite program with rank constraint that outputs a square matrix $\A$ such that $\inv{(\A^t\A)}=\M$ and such that the sensitivity of $\A$ is minimized.

\begin{algorithm}[t]
\caption{Minimizing the sensitivity}\label{prog:minsens}
\begin{align*}
\mbox{Given:          } & \M\in\mathbb{R}^{n\times n}.\\
\mbox{Minimize:      } &r.\\
\mbox{Subject to:    } & \mbox{For } i\in [n], j\in [m]: \\
& ~~~~ c_{ji}\geq a_{ji},~~~~ c_{ji}\geq -a_{ji},~~~~\sum_{k=1}^{m} c_{ki} \leq r\\ 
& \rank\left(\left[\begin{array}{cc}\mathbf{I}_n & \A\\ \A^t & \inv{\M}\end{array}\right]\right)=n.
\end{align*}
\end{algorithm}
\end{theorem}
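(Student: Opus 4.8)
The plan is to follow the two-part template of the proof of Theorem~\ref{thm:solving:l1}: first show that the rank constraint forces the error profile to equal the target, i.e. $\ep{\A}=\M$, and then show that the linear constraints on the $c_{ji}$ and the scalar $r$ encode the sensitivity exactly, so that minimizing $r$ coincides with minimizing $\sens{\A}$. Putting the two together, Program~\ref{prog:minsens} minimizes $\sens{\A}$ over the square matrices $\A$ with $\ep{\A}=\M$, which is the claim.

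First I would dispatch the rank constraint by a Schur-complement argument identical in spirit to the step forcing $\mathbf{Z}=\A^t\A$ in Program~\ref{prog:minerr}. Write $\mathbf{N}=\begin{bmatrix}\I_n & \A\\ \A^t & \inv{\M}\end{bmatrix}$; since $\inv{\M}$ and $\I_n$ are both $n\times n$, this forces $\A$ to be the square $n\times n$ matrix the theorem promises. Left-multiplying $\mathbf{N}$ by the unimodular $\begin{bmatrix}\I_n & \mathbf{0}\\ -\A^t & \I_n\end{bmatrix}$ and then column-reducing by $\begin{bmatrix}\I_n & -\A\\ \mathbf{0} & \I_n\end{bmatrix}$ turns $\mathbf{N}$ into $\begin{bmatrix}\I_n & \mathbf{0}\\ \mathbf{0} & \inv{\M}-\A^t\A\end{bmatrix}$ without changing its rank, so $\rank(\mathbf{N})=n+\rank(\inv{\M}-\A^t\A)$. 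The first $n$ columns of $\mathbf{N}$ are independent because of the $\I_n$ block, hence $\rank(\mathbf{N})\ge n$ always, and the constraint $\rank(\mathbf{N})=n$ is therefore equivalent to $\inv{\M}-\A^t\A=\mathbf{0}$, i.e. $\A^t\A=\inv{\M}$. As $\M$ is positive definite and hence invertible, this is exactly $\ep{\A}=\M$.

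Next I would verify that $r$ tracks the sensitivity. The constraints $c_{ji}\ge a_{ji}$ and $c_{ji}\ge -a_{ji}$ give $c_{ji}\ge|a_{ji}|$, while $\sum_k c_{ki}\le r$ bounds the absolute column sums. Because $r$ is being minimized and each $c_{ji}$ occurs only in these constraints, at an optimum we may take $c_{ji}=|a_{ji}|$, whence $\sum_k|a_{ki}|\le r$ for every column $i$, i.e. $\sens{\A}=\max_i\sum_k|a_{ki}|\le r$; the bound is tight since $r=\sens{\A}$ is feasible. Thus for each feasible $\A$ the least admissible $r$ equals $\sens{\A}$, so minimizing $r$ under all the constraints is precisely minimizing $\sens{\A}$ subject to $\ep{\A}=\M$. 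This proves the theorem. The program is a semidefinite program with a rank constraint in the sense of Dattorro: the objective and the $c,r$ constraints are linear, and the only source of nonconvexity is the rank equality, to be solved by the iterative pair-of-SDPs used for Program~\ref{prog:minerr}.

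I expect two points to require the most care, neither of them the routine algebra above. The first is conceptual bookkeeping around the nonconvexity: the feasible set $\{\A:\A^t\A=\inv{\M}\}$ is exactly the (nonconvex) set of strategies profile equivalent to the target, so I must be explicit that it is the \emph{minimality} of the rank---the block already has rank at least $n$---that upgrades $\rank(\mathbf{N})=n$ into the equality $\A^t\A=\inv{\M}$, and that the encoded block can be realized as a positive semidefinite matrix so the trace heuristic applies. The second, harder point is the relationship to Problem~\ref{problem:minSens}: the program ranges only over square $\A$, whereas Problem~\ref{problem:minSens} allows any profile-equivalent $\B$, possibly with more than $n$ rows. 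By Theorem~\ref{thm:profile-strategy} every such strategy has the form $\A=\Q_1\D'\P_\M^t$ with $\Q_1$ having orthonormal columns, so its $i$-th column is $\Q_1 v_i$ where $v_i$ is the $i$-th column of $\D'\P_\M^t$, and $\Lone{\Q_1 v_i}\ge\Ltwo{\Q_1 v_i}=\Ltwo{v_i}$ independently of the number of rows. Turning this lower bound into a proof that a square minimizer is globally optimal---so that restricting the program to square matrices loses nothing against Problem~\ref{problem:minSens}---is the step I expect to demand the most thought.
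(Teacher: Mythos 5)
Your two main steps are correct and are essentially the paper's own argument: the paper states Theorem~\ref{thm:minsens} without a separate proof, relying on exactly the devices used in the proof of Theorem~\ref{thm:solving:l1} --- the rank constraint forcing the Gram matrix, and the $(c_{ji}, r)$ constraints linearizing the $L_1$ column sums. Your Schur-complement computation $\rank(\mathbf{N}) = n + \rank(\inv{\M} - \A^t\A)$ supplies detail the paper leaves implicit (for Program~\ref{prog:minerr} it simply asserts that the rank constraint ``makes sure that $\mathbf{Z}=\A^t\A$''), and your observation that at an optimum one may take $c_{ji}=|a_{ji}|$ so that the least feasible $r$ equals $\sens{\A}$ is the right justification that minimizing $r$ minimizes sensitivity over the feasible set.

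On the third point you flag: you are right that it is nontrivial, but note it is not a gap between your proof and the paper's --- the paper restricts Program~\ref{prog:minsens} to square $\A$ and never proves that the square minimizer is optimal among all (possibly rectangular) strategies that are profile equivalent in the sense of Problem~\ref{problem:minSens}; read against its feasible set, the theorem claims minimality only within square strategies achieving $\ep{\A}=\M$, which you have fully established. Moreover, the route you sketch cannot close the stronger claim: $\Lone{\Q_1 v_i} \geq \Ltwo{\Q_1 v_i} = \Ltwo{v_i}$ yields only the lower bound $\max_i \Ltwo{v_i}$ on the sensitivity of every profile-equivalent strategy, and that bound is in general not attained --- indeed the paper itself concedes, for $\H_n$ and $\Wav_n$ in Section~\ref{sec:apps}, that it is unknown whether any profile-equivalent strategy achieves the $L_2$ lower bound. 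So ruling out an advantage for rectangular strategies would require a genuinely different argument, and neither you nor the paper provides one; confining your write-up to minimality among square strategies keeps the proof complete and faithful to what the program actually optimizes.
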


\section{Applications} \label{sec:apps}

In this section we use our techniques to analyze and improve existing approaches. We begin by analyzing two techniques proposed recently~\cite{xiao2010differential,Hay:2010Boosting-the-Accuracy}.  Both strategies can be seen as instances of the matrix mechanism, each using different query strategies designed to support a workload consisting of all range queries.  Although both techniques can support multidimensional range queries, we focus our analysis on one dimensional range queries, i.e. interval queries with respect to a total order over $dom(\bb)$.  

We will show that the seemingly distinct approaches have remarkably similar behavior: they have low (but not minimal) sensitivity, and they are highly accurate for range queries but much worse for queries that are not ranges. We describe these techniques briefly and how they can each be represented in matrix form.

In the {\em hierarchical} scheme proposed in \cite{Hay:2010Boosting-the-Accuracy}, the query strategy can be envisioned as a recursive partitioning of the domain.  We consider the simple case of a binary partitioning, although higher branching factors were considered in \cite{Hay:2010Boosting-the-Accuracy}.  First we ask for the total sum over the whole domain, and then ask for the count of each half of the domain, and so on, terminating with counts of individual elements of the domain.  For a domain of size $n$ (assumed for simplicity to be a power of 2), this results in a query strategy consisting of $2n-1$ rows.  We represent this strategy as matrix $\H_n$, and $\H_4$ in Fig.~\ref{fig:query-matrix} is a small instance of it. 

In the {\em wavelet} scheme, proposed in \cite{xiao2010differential}, query strategies are based on the Haar wavelet. For one dimensional range queries, the technique can also be envisioned as a hierarchical scheme, asking the total query, then asking for the difference between the left half and right half of the domain, continuing to recurse, asking for the difference in counts between each binary partition of the domain at each step.\footnote{We note that the technique in \cite{xiao2010differential} is presented somewhat differently, but that the differences are superficial.  The authors use queries that compute averages rather than sums, and their differentially private mechanism adds scaled noise at each level in the hierarchy.  We prove the equivalence of that construction with our formulation $\Wav_n$ in \reffull{App.~\ref{app:haar}}.}  This results in $n$ queries---fewer than the hierarchical scheme of~\cite{Hay:2010Boosting-the-Accuracy}.  The matrix corresponding to this strategy is the matrix of the Haar wavelet transform, denoted $\Wav_n$, and $\Wav_4$ in Fig.~\ref{fig:query-matrix} is a small instance of it.  

Thus $\H_n$ is a rectangular $(2n-1) \times n$ strategy, with answers derived using the linear regression technique, and $\Wav_n$ is an $n \times n$ strategy with answers derived by inverting the strategy matrix.  As suggested by the examples in earlier sections, these seemingly different techniques have similar behavior.  We analyze them in detail below, proving new bounds on the error for each technique, and proving new results about their relationship to one another.  We also include $\I_n$ in the analysis, which is the strategy represented by the dimension $n$ identity matrix, which asks for each individual count.

\vspace{2ex}
\subsection{Geometry of $\I_n$, $\H_n$ and $\Wav_n$}

Recall from Section \ref{sec:analysis} that the decomposition of the error profile of a strategy explains its error.  The decomposition of $\I_n$ results in a $\D$ that is itself the identity matrix.  This means the error profile is spherical. To understand the shape and rotation of the error profiles for $\Wav_n$ and $\H_n$ we provide a complete analysis of the decomposition, but leave the details in \reffull{the Appendix~\ref{app:eigen}.  The eigenvalues and eigenvectors are shown in Table~\ref{tab:eigens} of Appendix \ref{app:eigen}}. Their eigenvalue distributions are remarkably similar. Each has $\log n+1$ distinct eigenvalues of geometrically increasing frequency.  The actual eigenvalues of $\H_n$ are smaller than those of $\Wav_n$ by exactly one throughout the increasing sequence, except the largest eigenvalue: it is equal to the second largest eigenvalue in $\Wav_n$, but it has a distinct value in $\H_n$.  Finally, the smallest eigenvalue of either approach is 1 and the ratio between their corresponding eigenvalues is in the range $[\frac{1}{2}, 2]$.

For sensitivity, it is clear that $\sens{\I_n}=1$ for all $n$.  Intuitively, this sensitivity should be minimal since the columns of $I_n$ are axis aligned and orthogonal, and any rotation of $\I_n$ can only increase the $L_1$ ball containing the columns of $\I_n$.  This intuition can be formalized by considering the relationship between the $L_1$ norm and the $L_2$ norm stated in Section \ref{sec:solving:l2approx}.  No strategy profile equivalent to $I_n$ can have lower sensitivity, since $\sens{I_n}=\Ltwo{I_n}=1$.

On the other hand, the sensitivity of $\Wav_n$ and $\H_n$ is not minimal, suggesting that there exist strategies that dominate both of them.  We have $\sens{\Wav_n}=\sens{\H_n}=\log_2n + 1$.  In addition we find that their $L_2$ norms are also equal: $\Ltwo{\Wav_n} = \Ltwo{\H_n} = \sqrt{\log_2 n + 1}$.  This $L_2$ norm is a lower bound on the sensitivity of profile equivalent strategies for both $\H_n$ and $\Wav_n$.  We do not know if there are profile equivalent strategies that achieve this sensitivity lower bound for these strategies.  We can, however, improve on the sensitivity of both.  As an example, Fig.~\ref{fig:improved} shows profiles equivalent to $\H_4$ and $\Wav_4$ with improved sensitivity.  Through our decomposition of $\H_n$ and $\Wav_n$ we have derived modest improvements on the sensitivity in the case of arbitrary $n\geq 8$: $\log n + 0.64$ for $\H_n$, which is the sensitivity of its decomposition, and $\log n + 2\sqrt{2} -4$ for $\Wav_n$, which is achieved by applying some minor modifications to its decomposition. We suspect it is possible to find rotations of $\H_n$ and $\Wav_n$ that improve more substantially on the sensitivity.


\subsection{Error analysis for $\I_n$,$\H_n$ and $\Wav_n$}

In this section we analyze the total and worst case error for specific workloads of interest. We focus on two typical workloads: $\Wrang$, the set of all range queries, and $\Wbool$, which includes arbitrary predicate queries, since it consists of all linear queries 0-1 queries.  Note that attempting to use either of these workloads as strategies leads to poor results: the sensitivity of $\Wrang$ is $O(n^2)$ while the sensitivity of $\Wbool$ is $O(2^n)$.


In the original papers describing $\H_n$ and $\Wav_n$ \cite{Hay:2010Boosting-the-Accuracy, xiao2010differential}, both techniques are shown to have worst case error bounded by $O(\log^3 n)$ on $\Wrang$.  Both papers resort to experimental analysis to understand the distribution of error across the class of range queries.  We note that our results allow error for any query to be analyzed analytically.  

It follows from the similarity of eigenvectors and eigenvalues of $\H_n$ and $\Wav_n$ that the error profiles are asymptotically equivalent to one another.  We thus prove a close equivalence between the error of the two techniques:

\begin{restatable}{theorem}{Compareerrorwh}\label{thm:compareerrorwh} 
For any linear counting query $\w$, 
\[\frac{1}{2}\error{\Wav}{\w}\leq \error{\H}{\w}\leq 2\error{\Wav}{\w}.\]
\end{restatable}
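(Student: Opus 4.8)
The plan is to reduce the claim to a purely spectral comparison of the two error profiles, and then exploit the fact (established by the explicit eigen-analysis of the appendix) that these profiles share an eigenbasis and have eigenvalues within a factor of two of one another.

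First I would apply the error formula~(\ref{eq:var}), writing
\[
\error{\H}{\w} = \frac{2}{\epsilon^2}\,\sens{\H}^2\,\w\,\ep{\H}\,\w^t,
\qquad
\error{\Wav}{\w} = \frac{2}{\epsilon^2}\,\sens{\Wav}^2\,\w\,\ep{\Wav}\,\w^t.
\]
Since $\sens{\H_n}=\sens{\Wav_n}=\log_2 n + 1$ (as recorded in the preceding subsection), the scalar prefactors $\tfrac{2}{\epsilon^2}\sens{\cdot}^2$ are identical, so they cancel in any ratio. Hence the theorem is equivalent to the spectral statement
\[
\tfrac12\,\w\,\ep{\Wav}\,\w^t \;\le\; \w\,\ep{\H}\,\w^t \;\le\; 2\,\w\,\ep{\Wav}\,\w^t
\quad\text{for all }\w ,
\]
that is, to the Loewner (positive-semidefinite) sandwich $\tfrac12\,\ep{\Wav}\preceq\ep{\H}\preceq 2\,\ep{\Wav}$.

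The main step is to invoke the common spectral decomposition. The appendix computes the eigenvectors and eigenvalues of $\H_n$ and $\Wav_n$ explicitly, and because both schemes are built on the same dyadic partition of the domain their error profiles are simultaneously diagonalizable: $\ep{\H}=\P\D_\H\P^t$ and $\ep{\Wav}=\P\D_\Wav\P^t$ with a common orthogonal $\P$, where $\D_\H=\mathrm{diag}(\mu_1,\dots,\mu_n)$ and $\D_\Wav=\mathrm{diag}(\nu_1,\dots,\nu_n)$. The recorded eigenvalue relationship --- that corresponding eigenvalues differ by an additive shift of one and share a smallest value of $1$ --- gives the termwise ratio bound $\tfrac12\le \mu_i/\nu_i\le 2$ for every $i$ (the ratio is worst for the smallest pair, and passing from the Gram eigenvalues to the profile eigenvalues by taking reciprocals preserves the interval $[\tfrac12,2]$). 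Expanding $\w^t=\sum_i c_i\v_i$ in the shared eigenbasis then gives
\[
\w\,\ep{\H}\,\w^t=\sum_i \mu_i c_i^2,
\qquad
\w\,\ep{\Wav}\,\w^t=\sum_i \nu_i c_i^2,
\]
and since each $c_i^2\ge 0$ and $\tfrac12\nu_i\le\mu_i\le 2\nu_i$, summing termwise yields the two-sided bound, hence the theorem.

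The hard part is not this final weighted-sum argument, which is immediate once the spectral data are in hand; it is establishing that $\ep{\H}$ and $\ep{\Wav}$ genuinely share the same eigenvectors and that corresponding eigenvalues stay within a factor of two. If the eigenbases agreed only approximately, the clean simultaneous diagonalization would fail and one would have to control the cross terms $\v_i^t\,\ep{\Wav}\,\v_j$, weakening the constant below $2$. I would therefore concentrate the effort on the appendix computation: verifying exact coincidence of the eigenvectors arising from the common dyadic hierarchy, and checking the eigenvalue ratio at the extreme (smallest) eigenvalue, where the additive-one gap is proportionally largest and the factor of two is attained.
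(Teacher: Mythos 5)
Your argument is correct and is essentially the paper's own proof: the identical sensitivities $\sens{\H_n}=\sens{\Wav_n}=\log_2 n+1$ cancel, the appendix's eigen-table supplies a common orthogonal eigenbasis for $\H_n^t\H_n$ and $\Wav_n^t\Wav_n$, and the corresponding eigenvalue ratios lying in $[\frac{1}{2},2]$ (preserved under the inversion that yields the error profiles) give the two-sided bound by expanding $\w$ in that shared basis, exactly as the paper does via its trace identity. One small factual slip that does not affect the conclusion: the additive gap of one holds only along the increasing sequence and fails for the top pair ($2^{k+1}-1$ for $\H_n$ versus $2^k$ for $\Wav_n$), and the smallest Gram eigenvalues are $1$ and $2$ respectively rather than both $1$ --- but the ratio bound $[\frac{1}{2},2]$ you actually invoke is precisely what the table establishes, so the proof stands.
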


Note that this equivalence holds for the hierarchical strategy with a branching factor of two.  Higher branching factor can lower the error rates of the hierarchical strategy compared with the wavelet technique.

Next we summarize the maximum and total error for these strategies.  The following results tighten known bounds for $\Wrang$, and show new bounds for $\Wbool$. The proof of the following theorem can be found in \reffull{Appendix \ref{app:error}}.

\begin{restatable}[Maximum and Total Error]{theorem}{MaximumError} The maximum and total error on workloads $\Wrang$ and $\Wbool$ using strategies $\H_n, \Wav_n,$ and $\I_n$ is given by:
\[\begin{array}{c|ccc}
\mbox{\sc MaxError} & \H_n & \Wav_n & \I_n \\ \hline
\Wrang & \Theta(\log^3 n / \epsilon^2) & \Theta(\log^3 n / \epsilon^2) & \Theta(n / \epsilon^2)\\
\Wbool & \Theta(n\log^2 n / \epsilon^2) & \Theta(n\log^2 n/ \epsilon^2) & \Theta(n/ \epsilon^2)\\
\end{array}
\]
\[\begin{array}{c|ccc}
\mbox{\sc TotalError} & \H_n & \Wav_n & \I_n \\
\hline \Wrang & \Theta(n^2\log^3 n / \epsilon^2) & \Theta(n^2\log^3 n/ \epsilon^2) & \Theta(n^3/ \epsilon^2)\\
\Wbool & \Theta(n2^n\log^2n/ \epsilon^2) & \Theta(n2^n\log^2n/ \epsilon^2) & \Theta(n2^n/ \epsilon^2)
\end{array}
\]
\end{restatable}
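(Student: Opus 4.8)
The plan is to reduce the whole table to two strategies, $\Wav_n$ and $\I_n$. By Theorem~\ref{thm:compareerrorwh}, $\error{\H}{\w}$ and $\error{\Wav}{\w}$ agree up to a factor of $2$ on \emph{every} query $\w$; since both the maximum and the sum over a workload of pointwise-comparable quantities inherit that constant, the four $\H_n$ entries carry the same $\Theta$-bounds as the $\Wav_n$ entries, so I only need to establish the $\Wav_n$ and $\I_n$ columns. Throughout I would invoke Proposition~\ref{prop:error:workload} in the form $\error{\A}{\w}=\frac{2}{\epsilon^2}\sens{\A}^2\,\w\ep{\A}\w^t$ and $\totalerror{\A}{\W}=\frac{2}{\epsilon^2}\sens{\A}^2\sum_{\w\in\W}\w\ep{\A}\w^t$, and treat the \emph{sensitivity term} $\sens{\A}^2$ and the \emph{profile term} $\w\ep{\A}\w^t$ separately. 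For $\I_n$, $\sens{\I_n}=1$ and $\ep{\I_n}=\I_n$, so the profile term is just $\Ltwo{\w}^2=\sum_i w_i^2$. For $\Wav_n$ I use $\sens{\Wav_n}=\log_2 n+1=\Theta(\log n)$ from this section, whence $\sens{\Wav_n}^2=\Theta(\log^2 n)$; and since $\Wav_n$ is square, $\ep{\Wav_n}=\Wav_n^{-1}(\Wav_n^{-1})^t$, so the profile term equals $\Ltwo{\w\Wav_n^{-1}}^2$, the squared norm of the vector of weights that the reconstruction places on the wavelet coefficients.

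The $\I_n$ bounds are immediate combinatorial counts. A range of length $\ell$ has profile term $\ell$ (maximized at $\ell=n$), and a $0/1$ query has profile term at most $n$ (the all-ones query), giving the $\Theta(n/\epsilon^2)$ maximum-error entries; for totals, $\sum_{\Wrang}\ell=\sum_{\ell=1}^{n}\ell(n-\ell+1)=\Theta(n^3)$ and $\sum_{\Wbool}\Ltwo{\w}^2=\sum_{i}2^{n-1}=\Theta(n2^n)$, matching the $\I_n$ column after the factor $2/\epsilon^2$. For $\Wav_n$ I would analyze $\Ltwo{\w\Wav_n^{-1}}^2$ geometrically. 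A Haar coefficient whose support has size $s$ splits that support into equal $+$ and $-$ halves with reconstruction weight of magnitude $1/s$ (as in Fig.~\ref{fig:estimate}(c)), so a range $[a,b]$ places weight $1/s$ times the $(+)$--$(-)$ imbalance of $[a,b]$ inside the support; this is $0$ unless $[a,b]$ straddles the midpoint, and at each level a contiguous interval straddles at most two supports (one containing $a$, one containing $b$), each contributing magnitude at most $1/2$. Summing over the $\log_2 n$ levels gives $\w\ep{\Wav_n}\w^t=O(\log n)$, and a range whose endpoint lands near a block midpoint at a constant fraction of levels (e.g.\ $b\approx n/3$) achieves $\Omega(\log n)$; with $\sens{\Wav_n}^2=\Theta(\log^2 n)$ this yields the $\Theta(\log^3 n/\epsilon^2)$ range entry. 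For $\Wbool$ the profile term is at most $\Ltwo{\w}^2\lambda_{\max}(\ep{\Wav_n})=O(n)$, since the eigenvalues of $\ep{\Wav_n}$ are $\Theta(1)$ (Appendix~\ref{app:eigen}); a matching $\Omega(n)$ follows because the average of $\w\ep{\Wav_n}\w^t$ over uniform $\w\in\set{0,1}^n$ equals $\tfrac14\tr(\ep{\Wav_n})+\tfrac14 s$, where $s=\sum_{ij}(\ep{\Wav_n})_{ij}$ is the all-ones profile term, and $s=1$ because the all-ones query is the first row $\vect{e}_1^t\Wav_n$ of $\Wav_n$ (so $\mathbf{1}\Wav_n^{-1}=\vect{e}_1^t$). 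As $\tr(\ep{\Wav_n})=\Theta(n)$, this average is $\Theta(n)$, and multiplying by $\sens{\Wav_n}^2$ gives the $\Theta(n\log^2 n/\epsilon^2)$ entry.

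For the total errors of $\Wav_n$ I sum profile terms. The same straddling-count argument shows every range has profile $O(\log n)$ while a constant fraction have profile $\Omega(\log n)$, so over the $\Theta(n^2)$ ranges $\sum_{\Wrang}\w\ep{\Wav_n}\w^t=\Theta(n^2\log n)$; times $\sens{\Wav_n}^2$ this is $\Theta(n^2\log^3 n/\epsilon^2)$. For $\Wbool$, the averaging identity gives $\sum_{\w\in\set{0,1}^n}\w\ep{\Wav_n}\w^t=2^n\big(\tfrac14\tr(\ep{\Wav_n})+\tfrac14 s\big)=\Theta(n2^n)$, and times $\sens{\Wav_n}^2$ this is $\Theta(n2^n\log^2 n/\epsilon^2)$. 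All four $\Wav_n$ entries then transfer verbatim to $\H_n$ via Theorem~\ref{thm:compareerrorwh}.

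The hard part will be the range-query profile analysis for $\Wav_n$, as both directions require controlling $\Ltwo{\w\Wav_n^{-1}}^2$ simultaneously across all $\log n$ scales. The upper bound rests on the observation that wavelets entirely contained in $[a,b]$ cancel and wavelets disjoint from $[a,b]$ vanish, leaving only $O(1)$ straddling wavelets per level; the lower bound requires exhibiting ranges whose endpoints produce $\Theta(1)$-weight imbalances at $\Theta(\log n)$ distinct scales and arguing that a constant fraction of all ranges qualify. By comparison the $\Wbool$ and total-error claims are routine once the averaging identity $\E_{\w\sim\set{0,1}^n}[\w\ep{\Wav_n}\w^t]=\tfrac14\tr(\ep{\Wav_n})+\tfrac14 s$ and the $\Theta(1)$ eigenvalue bounds of Appendix~\ref{app:eigen} are in hand.
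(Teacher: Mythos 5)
Your proposal is correct and reaches every entry of the table, but it travels a genuinely different route from the paper's own proof. Both arguments share the same skeleton: transfer the $\H_n$ column to $\Wav_n$ via the factor-2 equivalence of Theorem~\ref{thm:compareerrorwh}, factor the error into the sensitivity term $\sens{\Wav_n}^2=\Theta(\log^2 n)$ times the profile term, and dispatch $\I_n$ by direct counting. The divergence is in how the $\Wav_n$ profile terms are bounded. The paper works from the eigendecomposition of $\Wav_n^t\Wav_n$ (Table~\ref{tab:eigens}): it \emph{imports} the $O(\log^3 n)$ upper bound for ranges from Xiao et al.~\cite{xiao2010differential}, exhibits the explicit range $[2^k-\frac{1}{3}(4^{\lfloor (k-1)/2\rfloor+1}-1),\,2^k+\frac{1}{3}(4^{\lfloor(k-1)/2\rfloor+1}-1)]$ for the matching lower bound, uses the alternating query $(0,1,0,1,\dots,0,1)$ as the $\Wbool$ witness, and asserts the totals follow ``by applying the equation.'' You instead bound $\Ltwo{\w\inv{\Wav_n}}^2$ by counting straddling Haar coefficients (contained wavelets cancel, at most two straddlers per level, each of weight at most $1/2$), which makes the $O(\log n)$ range upper bound self-contained rather than cited; and for $\Wbool$ you replace the explicit extremal query with the averaging identity $\E[\w\ep{\Wav_n}\w^t]=\frac{1}{4}\tr(\ep{\Wav_n})+\frac{1}{4}s$, with $s=1$ because the all-ones query is a row of $\Wav_n$ and $\tr(\ep{\Wav_n})=\Theta(n)$, yielding the $\Omega(n)$ max-error witness and the $\Theta(n2^n)$ total in one stroke---arguably cleaner than the paper's treatment. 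Amusingly, your suggested endpoint $b\approx n/3$ is secretly the paper's witness: $\frac{1}{3}(4^m-1)$ has binary expansion $0101\cdots01$, i.e., an endpoint sitting a constant fraction inside its block at every scale. Two small repairs are needed: your blanket claim that the eigenvalues of $\ep{\Wav_n}$ are $\Theta(1)$ is false (they range down to $1/n$); you only need, and only have, $\lambda_{\max}=1/2=O(1)$, which also gives $\tr(\ep{\Wav_n})=\Theta(n)$ since the eigenvalue $1/2$ has multiplicity $n/2$. And the step you flag---that a constant fraction of ranges have profile $\Omega(\log n)$---should be closed, e.g., by summing the squared level-$j$ coefficient over all pairs $(a,b)$ directly (each level contributes $\Theta(n^2)$ to the sum of profile terms) or by reverse Markov from the per-level expectation combined with your uniform $O(\log n)$ upper bound; this is no worse than the paper, which leaves the totals entirely to the reader.
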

While $\H_n$ and $\Wav_n$ achieve similar asymptotic bounds, their error profiles are slightly different (as suggested by previous examples for $n=4$).  As a result, $\H_n$ tends to have lower error for larger range queries, while $\Wav_n$ has lower error for unit counts and smaller range queries.

\section{Related Work}
\label{sec:related}

Since differential privacy was first introduced~\cite{dwork2006calibrating}, it has been the subject of considerable research, as outlined in recent surveys~\cite{dwork2008differential,dwork2009differential,Dwork:2010A-firm-foundation}.  

Closest to our work are the two techniques, developed independently, for answering range queries over histograms.  Xiao et al.~\cite{xiao2010differential} propose an approach based on the Haar wavelet; Hay et al.~\cite{Hay:2010Boosting-the-Accuracy} propose an approach based on hierarchical sums and least squares.  The present work unifies these two apparently disparate approaches under a significantly more general framework (Section~\ref{sec:query_answering}) and uses the framework to compare the approaches (Section~\ref{sec:apps}).  While both approaches are instances of the matrix mechanism, the specific algorithms given in these papers are more efficient than a generic implementation of the matrix mechanism employing matrix inversion.  Xiao et al. also extend their wavelet approach to nominal attributes and multi-dimensional histograms.

Barak et al.~\cite{barak2007privacy} consider a Fourier transformation of the data to estimate low-order marginals over a set of attributes.  
The main utility goal of~\cite{barak2007privacy} is integral consistency: the numbers in the marginals must be non-negative integers and their sums should be consistent across marginals.  Their main result shows that it is possible to achieve integral consistency (via Fourier transforms and linear programming) without significant loss in accuracy.  We would like to use the framework of the matrix mechanism to further investigate optimal strategies for workloads consisting of low-order marginals.



%



Blum et al.~\cite{blum2008a-learning} propose a mechanism for accurately answering queries for an arbitrary workload (aka query class), where the accuracy depends on the VC-dimension of the query class.  However, the mechanism is inefficient, requiring exponential runtime.  They also propose an efficient strategy for the class of range queries, but this approach is less accurate than the wavelet or hierarchical approaches discussed here (see Hay et al.~\cite{Hay:2010Boosting-the-Accuracy} for comparison).


Some very recent works consider improvements on the Laplace mechanism for multiple queries.  Hardt and Talwar~\cite{Hardt:2010On-the-Geometry-of-Differential} consider a very similar task based on sets of linear queries. They propose the $k$-norm mechanism, which adds noise tailored to the set of linear queries by examining the shape to which the linear queries map the $L_1$ ball.  They also show an interesting lower bound on the noise needed for satisfying differential privacy that matches their upper bound up to polylogarithmic factors assuming the truth of a central conjecture in convex geometry.  But the proposed $k$-norm mechanism can be inefficient in practice because of its requirement of sampling uniformly from high-dimensional convex bodies.  Furthermore, the techniques restrict the number of queries to be less than $n$~(the domain size)\eat{and require the coefficients of the linear queries to lie in $[-1,1]$ GM:I don't want to mention that}.  A notable difference in our approach is that our computational cost is incurred for finding the query strategy.  Once a strategy is found, our mechanism is as efficient as the Laplace mechanism.  For stable or recurring workloads, optimization needs only to be performed once.

Roth and Roughgarden~\cite{Roth:2010The-Median-Mechanism:} consider the interactive setting, in which queries arrive over time and must be answered immediately without knowledge of future queries. They propose the median mechanism which improves upon the Laplace mechanism by deriving answers to some queries from the noisy answers already received from the private server. The straightforward implementation of the median mechanism is inefficient and requires sampling from a set of super-polynomial size, while a more efficient polynomial implementation requires weakening the privacy and utility guarantees to average-case notions (i.e., guarantees hold for most but not all input datasets).

The goal of optimal experimental design \cite{Pukelsheim:1993Optimal-Design} is to produce the best estimate of an unknown vector from the results of a set of experiments returning noisy observations.  Given the noisy observations, the estimate is typically the least squares solution.  The goal is to minimize error by choosing a subset of experiments and a frequency for each.  A relaxed version of the experimental design problem can be formulated as a semi-definite program \cite{boyd2004convex}.  While this problem setting is similar to ours, a difference is that the number and choice of experiments is constrained to a fixed set.  In addition, although experimental design problems can include costs associated with individual experiments, modeling the impact of the sensitivity of experiments does not fit most problem formulations. Lastly, the objective function of most experimental design problems targets the accuracy of individual variables (the $\x$ counts), rather than a specified workload computed from those counts.

%
%
%
%

\section{Conclusion}

We have described the matrix mechanism, which derives answers to a workload of counting queries from the noisy answers to a different set of strategy queries.  By designing the strategy queries for the workload, correlated sets of counting queries can be answered more accurately.  We show that the optimal strategy can be computed by iteratively solving a pair of semidefinite programs, and we use our framework to understand two recent techniques targeting range queries.

While we have formulated the choice of strategy matrix as an optimization problem, we have not yet generated optimal---or approximately optimal---solutions for specific workloads.  Computing such optimal strategies for common workloads would have immediate practical impact as it could boost the accuracy that is efficiently achievable under differential privacy.  We also plan to apply our approach to interactive query answering settings, and we would like to understand the conditions under which optimal strategies in our framework can match known lower bounds for differential privacy. 

\subsection*{Acknowledgements}
We would like to thank the anonymous reviewers for their helpful comments. The NSF supported authors Hay, Li, Miklau through IIS-0643681 and CNS-0627642, author Rastogi through IIS-0627585, and author McGregor through CCF-0953754.
{ 
\bibliographystyle{abbrv}
\bibliography{paper}
}

\onecolumn
\appendix



\section{Scaling query strategies}
The matrix mechanism always adds identically-distributed noise to each query in the strategy matrix. In addition, in Section~\ref{sec:optimizing}, the sensitivity of all considered strategies is bounded by 1. In this section, we demonstrate that those constraints do not limit the power of the mechanism or the generality of the optimization solutions.

\subsection{Scalar multiplication of query strategy}

Multiplying a query strategy by a scalar value (which scales every coefficient of every query in the strategy) does not change the error of any query.  The sensitivity term is scaled, but it is compensated by a scaling of the error profile.

\begin{proposition}[Error equivalence under scalar multiplication] 
Given query strategy $\A$, and any real scalar $k \neq 0$, the error for any query $\w$ is equivalent for strategies $\A$ and $k\A$. That is, $\forall \w, \error{\A}{\w\estx} = \error{k\A}{\w\estx}$.
\end{proposition}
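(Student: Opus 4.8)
The plan is to invoke the closed-form error expression of Proposition~\ref{prop:error:single} and track how each of its two strategy-dependent factors---the sensitivity term $\sens{\A}^2$ and the profile term $\w\ep{\A}\w^t$---transforms when $\A$ is replaced by $k\A$. The key observation is that the two factors scale in exactly compensating ways, so their product is invariant.

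First I would note that for $k \neq 0$ the matrix $k\A$ has the same rank as $\A$, hence remains full rank, so Proposition~\ref{prop:error:single} applies and gives
\[
\error{k\A}{\w} = \left(\frac{2}{\epsilon^2}\right) \sens{k\A}^2 \; \w \ep{k\A} \w^t .
\]
I would then evaluate the two factors separately. By the query matrix sensitivity proposition, $\sens{\A}$ is the maximum $L_1$ norm over the columns of $\A$; multiplying every entry by $k$ scales each column norm by $|k|$, so $\sens{k\A} = |k|\sens{\A}$ and therefore $\sens{k\A}^2 = k^2 \sens{\A}^2$. For the profile term, since $(k\A)^t(k\A) = k^2 (\A^t\A)$, taking the inverse yields $\ep{k\A} = \tfrac{1}{k^2}\ep{\A}$, and thus $\w\ep{k\A}\w^t = \tfrac{1}{k^2}\,\w\ep{\A}\w^t$.

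Substituting both back into the error expression, the $k^2$ from the sensitivity term cancels the $1/k^2$ from the profile term:
\[
\error{k\A}{\w} = \left(\frac{2}{\epsilon^2}\right) k^2 \sens{\A}^2 \cdot \frac{1}{k^2} \w \ep{\A} \w^t = \left(\frac{2}{\epsilon^2}\right) \sens{\A}^2 \; \w \ep{\A} \w^t = \error{\A}{\w},
\]
which is the claimed equality.

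There is no real obstacle here; the only point requiring care is the hypothesis $k \neq 0$, which is exactly what guarantees that $k\A$ stays full rank and that $\ep{k\A}$ exists. Conceptually the result records a simple but useful tradeoff: scaling a strategy's coefficients up increases its sensitivity (and hence the raw noise) but simultaneously shrinks its reconstruction coefficients by the same factor, leaving the net error untouched. This is precisely why Section~\ref{sec:optimizing} may normalize strategies to sensitivity $1$ without loss of generality.
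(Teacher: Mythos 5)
Your proof is correct and follows essentially the same route as the paper's: scale the sensitivity term by $k^2$, scale the profile term by $1/k^2$ via $(k\A)^t(k\A)=k^2\A^t\A$, and observe the cancellation in the closed-form error expression of Proposition~\ref{prop:error:single}. If anything you are slightly more careful than the paper, which states $\sens{k\A}=k\sens{\A}$ where your $\sens{k\A}=|k|\sens{\A}$ is the correct form for negative $k$ (harmless either way, since only the square enters), and you also make explicit that $k\neq 0$ preserves full rank so the error profile $\ep{k\A}$ exists.
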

\begin{proof}
It is easy to see that $\sens{k\A}=k\sens{\A}$. Thus, for any query $\w$, we have 
\begin{align*}
\error{k\A}{\w\estx}&=\sens{k\A}^2\w \inv{((k\A)^t(k\A))}\w^t \\ &=k^2\sens{\A}^2\w\inv{(k^2\A^t\A)}\w^t\\ 
&=k^2\sens{\A}^2\w\frac{1}{k^2}\inv{(\A^t\A)}\w^t\\
&=\Delta^2_\A\w(\A^t\A)^{-1}\w^t\\
&=\error{\A}{\w\estx}.
\end{align*}
\end{proof}

\subsection{Strategies with unequal noise -- scaling rows} 
\label{app:unequal_noise}

\def\E{\vect{E}}
\def\R{\vect{R}}

The strategies described in this paper add equal noise to each query; i.e., the independent Laplace random variables have equal scale.  It is sufficient to focus on equal-noise strategies because, as the following proposition shows, any unequal-noise strategy can be simulated by an equal-noise strategy.

\begin{proposition}[Simulating unequal noise strategies]
	Let $\alg_\A$ be an unequal-noise strategy that returns $\y = \A \x + \E$ where $\E$ is a $m$-length vector of independent Laplace random variables with unequal scale.  Let $b_i$ denote the scale of the $i^{th}$ Laplace random variable in the vector. There exists an equal-noise strategy such that its output, $\y'$, has the same distribution as $\y$.
\end{proposition}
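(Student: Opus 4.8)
The plan is to absorb the unequal noise scales into the strategy matrix by rescaling its rows, so that a single uniform Laplace scale suffices. First I would collect the scales into the diagonal matrix $\D = diag(b_1,\dots,b_m)$, which is invertible since each $b_i>0$, and define the rescaled strategy $\A' = \inv{\D}\A$ (i.e. divide row $i$ of $\A$ by $b_i$). I would then let the equal-noise strategy answer $\A'$ in the standard way, returning the raw Laplace output $\y'' = \A'\x + \b$, where $\b$ is a length-$m$ vector of independent Laplace samples, all of scale $1$. By construction this is precisely an equal-noise (standard Laplace) mechanism applied to $\A'$.

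The next step is to recover the candidate output by the linear post-processing $\y' = \D\y''$, which rescales the answers back to the original queries. Expanding gives
\[
\y' = \D\A'\x + \D\b = \A\x + \D\b,
\]
since $\D\A' = \D\inv{\D}\A = \A$. It then remains only to check that $\D\b$ and $\E$ have the same law. Component $i$ of $\D\b$ is $b_i\tilde b_i$ with $\tilde b_i \sim \Lap(1)$, and I would invoke the scaling property of the Laplace distribution, $b_i\cdot\Lap(1)\stackrel{d}{=}\Lap(b_i)$, to match each component of $\D\b$ to the corresponding component of $\E$. Because $\D$ is diagonal and the $\tilde b_i$ are mutually independent, the coordinates of $\D\b$ stay mutually independent, so the joint distribution of $\D\b$ equals that of $\E$. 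Hence $\y' = \A\x + \D\b$ has the same distribution as $\y = \A\x + \E$, as required.

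The main point to argue carefully is the status of the post-processing step: $\y'$ is obtained from the genuine equal-noise output $\y''$ only through multiplication by the fixed, data-independent matrix $\D$, which is exactly the kind of derivation already used throughout the matrix mechanism (cf. Proposition~\ref{def:m-mech}). Thus it introduces no additional privacy loss and is a legitimate operation on the strategy's answers, and the distributional identity above is then the entire content of the claim. The only nontrivial ingredient is the Laplace scaling identity $b\cdot\Lap(1)\stackrel{d}{=}\Lap(b)$, which follows immediately from a change of variables in the Laplace density; everything else is routine matrix algebra.
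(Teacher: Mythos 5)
Your proof is correct and follows essentially the same route as the paper: the paper's proof sets $\R = diag(b/b_1, \dots, b/b_m)$ for an arbitrary scale $b$, answers $\B = \R\A$ with uniform $Lap(b)$ noise, and recovers $\y' = \inv{\R}(\B\x + \vect{Lap(b)})$, which is exactly your construction specialized to $b = 1$ (your $\D$ is the paper's $\inv{\R}$). Both arguments hinge on the same Laplace scaling identity applied componentwise with independence preserved by the diagonal transformation, so there is nothing to add.
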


\begin{proof}
	Let $\R$ be an $m \times m$ diagonal matrix with $r_{ii} = b / b_i$ for an arbitrary $b$.  Let the equal noise strategy be defined as $\alg_\B = \B \x + \vect{Lap(b)}$ where the query matrix $\B = \R \A$.  Let $\y' = \R^{-1}( \B \x + \vect{Lap(b)} ) $. 
	
The claim is that $\y$ and $\y'$ have the same distribution.  Vector $\y'$ can be expressed as:
\begin{align*}
	\y' &= \R^{-1}( \B \x + \vect{Lap(b)} ) \\
	&=  \R^{-1}\B \x + \R^{-1} \vect{Lap(b)} \\
	&= \A \x + \R^{-1} \vect{Lap(b)}
\end{align*}
Observe that $\R^{-1} \vect{Lap(b)}$ is an $m \times 1$ vector where the $i^{th}$ entry is equal to $\frac{b_i}{b} Lap(b)$.  The quantity $\frac{b_i}{b} Lap(b)$ follows a Laplace distribution with scale $b_i$, and is therefore equal in distribution to the $i^{th}$ entry of $\E$, which is $Lap(b_i)$.  Therefore $\y'$ is equal in distribution to $\y$.
\end{proof}

\section{Relaxation of Differential Privacy}
$(\epsilon,\delta)$-differential privacy is introduced in Section~\ref{sec:solving:l2approx}. Here we formally prove the amount of gaussian noise required to achieve $(\epsilon, \delta)$-differential privacy.

\Relaxdp*

\begin{proof}
Let $I$ and $I'$ be neighboring databases, and then their corresponding vectors $\x$ and $\x'$ differ in exact one component. Notice
\[\max_{\Lone{\x-\x'}=1}\Ltwo{\W\x-\W\x'}=\Ltwo{\W},\]
consider adding column vector $(\Ltwo{\W}/\epsilon)\b$ to $\W\x$. Let \[\sigma=(\Ltwo{\W}/\epsilon)2\sqrt{2\ln(2/\delta)}.\] 
For any vector equals to $\W\x+\vect{z}$, we have
\begin{align*}
& \frac{\Pr[\W\x+ (\frac{\Ltwo{\W}}{\epsilon})\b =\W\x+ \vect{z}]}{\Pr[\W\x'+ (\frac{\Ltwo{\W}}{\epsilon})\b = \W\x+\vect{z}]} \\
=&\frac{e^{-\frac{1}{2\sigma^2}(\vect{z}^t\vect{z})}}{ e^{-\frac{1}{2\sigma^2}((\W\x-\W\x'+\vect{z})^t(\W\x-\W\x'+\vect{z}))}} \\
=&e^{\frac{1}{2\sigma^2}(\Ltwo{\W\x-\W\x'}^2-2\vect{z}^t(\W\x-\W\x'))}\\
\leq& e^{\frac{\epsilon^2}{16\ln(2/\delta)}+\frac{\vect{z}^t(\W\x-\W\x')}{\sigma^2}}\\
\leq& e^{\frac{\epsilon}{2}+\frac{\vect{z}^t(\W\x-\W\x')}{\sigma^2}}.
\end{align*}
Let $\mathcal{Z}=\{\W\x+\vect{z}|\frac{\vect{z}^t(\W\x-\W\x')}{\sigma^2}>\frac{\epsilon}{2}\}$, to guarantee $(\epsilon, \delta)$-differential privacy, we only need to proof
\begin{align*}
\delta & \geq \Pr[\W\x+ (\frac{\Ltwo{\W}}{\epsilon})\b \in \mathcal{Z}]\\
&=\Pr[\b^t(\W\x-\W\x')\geq 4\Ltwo{\W}\ln(2/\delta)].
\end{align*}
Notice the entries of random vector $\b$ are independent varibles following $N(0, 8\ln(2/\delta))$ and $\b^t(\W\x-\W\x')$ can be considered as a weighted sum of all the entries of $\b$, $\b^t(\W\x-\W\x')$ follows $N(0, 8\Ltwo{\W\x-\W\x'}^2\ln(2/\delta))$. Let $z$ be a variable that follows $N(0,1)$ Then
\begin{align*}
&\Pr[\b^t(\W\x-\W\x')\geq 4\Ltwo{\W}\ln(2/\delta)]\\
=&\Pr[2\Ltwo{\W\x-\W\x'}\sqrt{2\ln(2/\delta)}z\geq 4\Ltwo{\W}\ln(2/\delta)]\\
=&\Pr[z\geq \frac{\Ltwo{\W}}{\Ltwo{\W\x-\W\x'}}\sqrt{2\ln(2/\delta)}]\\
\leq & \Pr[z\geq \sqrt{2\ln(2/\delta)}].
\end{align*}
Notice that
\[\Pr[z\geq x]<\frac{1}{x}\frac{1}{\sqrt{2\pi}}e^{-\frac{x^2}{2}}
,\]
we have
\[\Pr[z\geq \sqrt{2\ln(2/\delta)}]<\frac{\delta}{2\sqrt{\pi\ln(2/\delta)}}<\delta.\]
Thus  the randomized algorithm $\LM$ follows $(\epsilon, \delta)$-differentially privacy.
\end{proof}

\section{Singular Value Bound Approximation} \label{app:opt}

In this section we theoretically compute the approximation rate of the singular value bound approximation and the optimized solution under the singular value bound approximation.
%
%

\begin{lemma}\label{lem:tangentplane} Given an ellipsoid defined by $\x^t\mathbf{Z}\x = 1$ and a vector $\v$, $\v^t\x=\sqrt{\v^t\inv{\mathbf{Z}}\v}$ is
a tangent hyperplane of the ellipsoid.
\end{lemma}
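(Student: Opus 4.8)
The plan is to establish that the affine hyperplane $\v^t\x = c$, with $c = \sqrt{\v^t\inv{\mathbf{Z}}\v}$, is a \emph{supporting} hyperplane of the ellipsoid that meets it in exactly one point; for a strictly convex body this is precisely what it means to be tangent. Concretely, I would prove that $c$ is the maximum value of the linear functional $\v^t\x$ as $\x$ ranges over the ellipsoid $\{\x : \x^t\mathbf{Z}\x = 1\}$, and that this maximum is attained at a single point. Note that $\mathbf{Z}$ is symmetric positive definite (otherwise it would not define an ellipsoid; in the application it has the form $\Q_\A^t\inv{(\D_\A^t\D_\A)}\Q_\A$), so $\inv{\mathbf{Z}}$ exists and $\mathbf{Z}$ has a symmetric positive definite square root $\mathbf{Z}^{1/2}$.

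First I would change variables via $\y = \mathbf{Z}^{1/2}\x$, which maps the ellipsoid onto the unit sphere $\Ltwo{\y} = 1$ and rewrites the functional as $\v^t\x = (\inv{(\mathbf{Z}^{1/2})}\v)^t\y = \mathbf{u}^t\y$, where $\mathbf{u} = \inv{(\mathbf{Z}^{1/2})}\v$. By the Cauchy--Schwarz inequality, $|\mathbf{u}^t\y| \le \Ltwo{\mathbf{u}}\,\Ltwo{\y} = \Ltwo{\mathbf{u}}$, with equality if and only if $\y$ is the unit vector $\mathbf{u}/\Ltwo{\mathbf{u}}$. Since $\mathbf{Z}^{1/2}$ is symmetric, $\Ltwo{\mathbf{u}}^2 = \v^t\inv{\mathbf{Z}}\v = c^2$, so $\v^t\x \le c$ for every $\x$ on the ellipsoid, with equality only at the single point $\x^* = \inv{\mathbf{Z}}\v / c$. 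Transforming back, the whole ellipsoid lies in the halfspace $\v^t\x \le c$ and touches the hyperplane $\v^t\x = c$ exactly at $\x^*$; hence the hyperplane is tangent.

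I do not expect a serious obstacle: the only point requiring care is the equality case of Cauchy--Schwarz, which guarantees a \emph{unique} contact point (rather than merely a supporting hyperplane) and follows from the strictness of the inequality for the positive definite inner product. As an equivalent and perhaps more transparent route matching the differential-geometric definition of tangency, I would mention that the gradient of $g(\x) = \x^t\mathbf{Z}\x$ at a boundary point $\x_0$ is $2\mathbf{Z}\x_0$, so the tangent hyperplane at $\x_0$ is $(\mathbf{Z}\x_0)^t\x = 1$; requiring its normal to be parallel to $\v$ forces $\x_0 = \inv{\mathbf{Z}}\v/c$, and substituting yields exactly $\v^t\x = c$. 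Either derivation is a short computation, so the lemma reduces to invoking positive definiteness of $\mathbf{Z}$ together with a standard norm inequality.
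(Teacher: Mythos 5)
Your proposal is correct, and your \emph{primary} argument takes a genuinely different route from the paper, while your secondary ``gradient'' sketch is essentially the paper's own proof. The paper argues via the classical pole--polar formula: it takes as known that the tangent hyperplane to the ellipsoid at a boundary point $\x_0$ is $\x_0^t\mathbf{Z}\x=1$ (the paper writes $\y^t\A\x=1$, with $\A$ a typo for $\mathbf{Z}$), posits a tangent hyperplane $\v^t\x=k$, matches normals to get $\x_0=\inv{\mathbf{Z}}\v/k$, and substitutes into $\x_0^t\mathbf{Z}\x_0=1$ to solve $k=\sqrt{\v^t\inv{\mathbf{Z}}\v}$. Your main route instead proves tangency from first principles: the change of variables $\y=\mathbf{Z}^{1/2}\x$ plus Cauchy--Schwarz shows that $c=\sqrt{\v^t\inv{\mathbf{Z}}\v}$ is the \emph{maximum} of the functional $\v^t\x$ over the ellipsoid, attained at the unique point $\x^*=\inv{\mathbf{Z}}\v/c$. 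What each buys: the paper's computation is shorter but presupposes the tangent-plane formula and silently selects the positive root (both $\v^t\x=\pm\sqrt{\v^t\inv{\mathbf{Z}}\v}$ are tangent hyperplanes with normal $\v$, a sign ambiguity your maximization resolves by construction); your argument is self-contained, certifies uniqueness of the contact point via the equality case of Cauchy--Schwarz, and delivers the supporting-halfspace inequality $\v^t\x\le c$ over the entire ellipsoid --- which is precisely the form in which the lemma is exploited downstream in the proof of Theorem~\ref{thm:eigenbound}, where the faces of the covering $L_1$ ball are compared, so your version is arguably the more serviceable statement of the fact.
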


\begin{proof}
For any point $\y$ on the ellipsoid, the tangent hyperplane of the ellipsoid on $\y$ is $\y^t\A\x=1$. Consider
a tangent hyperplane of the ellipsoid: $\v^t\x = k$ where $k$ is an unknown constant, there exists a point $\x_0$
on the ellipsoid such that $\x_0^t\A=\frac{\v^t}{k}$. Therefore $\x_0=\frac{\inv{\mathbf{Z}}\v}{k}$. Since $\x_0^t \mathbf{Z} \x_0 = 1$, we know
\[1 = \x_0^t \mathbf{Z} \x_0 
= (\frac{1}{k}\v^t \inv{\mathbf{Z}}) \mathbf{Z} (\frac{1}{k}\inv{\mathbf{Z}}\v ) 
= \frac{1}{k^2}\v^t\inv{\mathbf{Z}}\v.
\]
Therefore $k=\sqrt{\v^t\inv{\mathbf{Z}}\v}$.
\end{proof}

\EigenBound*

\begin{proof}
For any strategy $\B$, the ellipsoid $\phi_{\B}$ must tangent with diamond with radius $\sens{\phi_\B}$. With out lose of generality, let us assume it is tangent to the hyperplane $(1,1,\ldots,1)\x =\sens{\phi_\B}$ and $(a_1, \ldots, a_n)\x \leq  \sens{\phi_\B}$, here $a_i = {1, -1}$. Let $\B=\Q_\B\D_\A\P_\A^t$ be the singular value decomposition of $\B$ and let $\Psi=\Q_\B^t\inv{(\D_\A^t\D_\A)}\Q_\B$ to simplify the notation. According to Lemma~\ref{lem:tangentplane}, 
\begin{align*}
(1,1,\ldots,1)\Psi(1, 1\ldots,1)^t \geq (a_1, \ldots, a_n)\Psi(a_1, \ldots, a_n)^t.
\end{align*}
In particular,
\begin{align*}
(1,1,..1)\Psi(1, 1\ldots,1)^t \geq (-1, 1, 1, 1, \ldots, 1)\Psi (-1, 1, 1, 1, \ldots, 1)^t,
\end{align*}
which means $\psi_{12} + \psi_{13} + \ldots + \psi_{1n} = \sum_{i=1}^n \psi_{1i} - \psi_{11} \geq 0$. 
Similarly, we can show for any j we have $\sum_{i=1}^n \psi_{ji} - \psi_{jj} \geq 0$.
Therefore
\begin{align*}
&(1,1,..1)\Psi(1, 1\ldots,1)^t\\
&= \sum_i \sum_j \psi_{ij}\\
&= \sum_j \psi_{jj} + \sum_j (\sum_i \psi_{ji} - \psi_{jj} )\\
&\geq \sum_j \psi_{jj}\\
&= \tr(\Psi)\\
&= \tr(\Q_\B^t\inv{(\D_\A^t\D_\A)}\Q_\B)\\
&= \tr( \Q_\B\Q_\B'^t\inv{(\D_\A^t\D_\A)} )\\
& =\tr( \inv{(\D_\A^t\D_\A)} )
\end{align*}

Since $\D_\A$ is fixed, the minimize can be achieved in case that $\Phi$ is a diagonal matrix, which indicates that $\Q_\B=\I$. Therefore,
\begin{align*}
\B &= \D_\A\P_\A^t\\
\sens{\phi_\B} &= \sqrt{(1,1,..1)\inv{\Psi}(1, 1\ldots,1)^t}=\sqrt{(1,1,..1)\D_\A^t\D_\A(1, 1\ldots,1)^t}=\sqrt{\delta_1^2+\delta_2^2+\ldots+\delta_n^2},
\end{align*}
where $\delta_1, \delta_2, \ldots, \delta_n$ are the singular values of $\A$.

Moreover, notice the fact that the sum of square of $L_2$ norm of all the columns of $\B$ is same as the sum of square of $L_2$ norm of all the rows of $\B=\D_\A\P_\A^t$. Since $\P^\A$ is a rotation matrix it does not change the $L_2$ norm of rows in $\D_\A$, which is $\delta_1^2+\delta_2^2+\ldots+\delta_n^2$. Notice $\B$ has $n $ columns in total, there exists a column of $\B$ whose $L_2$ norm is at least $\sqrt{\frac{\delta_1^2+\delta_2^2+\ldots+\delta_n^2}{n}}$ so that $\sens{\phi_\B}\leq\sqrt{n}\Ltwo{\B}$. Since $\Ltwo{\B}=\Ltwo{\A}$, $\Ltwo{\A}\leq\sens{\A}$, we know $\sens{\phi_\B}\leq\sqrt{n}\sens{\A}$.
\end{proof}

\EigenApprox*

\begin{proof}
Recall the total error with the singular value approximation:
\[(\delta_1^2+\delta_2^2+\ldots+\delta_n^2)\tr(\W\inv{(\A^t\A)}\W^t),\]
where $\delta_1, \delta_2, \ldots, \delta_n$ are the singular values of $\A$. Notice 
\begin{align*}
\tr(\W\inv{(\A^t\A)}\W^t) &=\tr(\inv{(\A^t\A)}\W^t\W)\\
&= \tr(\P_\A^t\inv{(\D_\A^t\D_\A)}\P_\A\P^t_\W\D_\W^t\D_\W\P_\W)\\
&= \tr(\D_\W^t(\P_\A\P_\W^t)^t\inv{(\D_\A^t\D_\A)}(\P_\A\P_\W^t)\D_\W^t\D_\W),
\end{align*}
and $\P_\A$ does not influence the singular value approximation to the sensitivity. $\P_\A$ can be arbitrary orthogonal matrix and then $\P_\A\P_\W^t$ can be arbitrary orthogonal matrix as well. Then the best $\P_\A\P_\W^t$ is set to be the one that minimizes the error on estimating $\D_\W$ with given $\D_\A$. Since $\D_\W$ is actually a set of queries over individual buckets, the best strategy to estimate it is also queries over individual buckets, which means $\P_\A\P_\W^t=\I$. Since $\P_\W$ is an orthogonal matrix, $\P_\A=\P_\W$.  Then, the total error with singular value approximation is:
\begin{align*}
(\delta_1^2+\delta_2^2+\ldots+\delta_n^2)\tr(\inv{(\D_\A^t\D_\A)}\D_\W^t\D_\W)
&=(\delta_1^2+\delta_2^2+\ldots+\delta_n^2)(\frac{\delta_1'^2}{\delta_1^2}+\frac{\delta_2'^2}{\delta_2^2}+\ldots+\frac{\delta_n'^2}{\delta_n^2})\\
&=(\delta_1'+\delta_2'+\ldots+\delta_n')^2
\end{align*}
To achieve the lower bound given by the equality above, it requires that for each $i$, the ratio between $\frac{\delta_i'}{\delta_i}$ and $\delta_i$ to be constant, which means $\delta_i=c\sqrt{\delta'_i}$ for some constant $c$. Since a constant multiple does not change the strategy, we let $c=1$ and then have the theorem proved. 
\end{proof}

\section{Completing deficient columns} \label{app:deficient}

Here we complete the proof of the theorem about augmenting the deficient columns in Section~\ref{sec:aug}.
\begin{proposition}\label{prop:eigenproperty}
For any square matrix $\A$, if $v$ is an eigenvector of $\A$ with eigenvalue $\lambda$, $v$ is an eigenvector of $k\I+\A$ with eigenvalue $k+\lambda$. If $\A$ is invertible, $v$ is an eigenvector of $\A^{-1}$ with eigenvector $\frac{1}{\lambda}$.
\end{proposition}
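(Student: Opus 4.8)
The plan is to verify each of the two claims directly from the definition of an eigenvector, since both reduce to one-line algebraic manipulations. Throughout, write the hypothesis as $\A v = \lambda v$ for some nonzero vector $v$, and exploit only the linearity of the matrix-vector product together with the basic properties of $\inv{\A}$.

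For the first claim, I would apply linearity of the product against $k\I + \A$. Computing $(k\I + \A)v = k\I v + \A v = kv + \lambda v = (k+\lambda)v$ shows immediately that $v$ is an eigenvector of $k\I + \A$ with eigenvalue $k + \lambda$, so nothing further is needed.

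For the second claim, the one point requiring care is to first observe that $\lambda \neq 0$ whenever $\A$ is invertible: if $\lambda$ were $0$ then $\A v = 0$ with $v \neq 0$, so $\A$ would have a nontrivial kernel and could not be invertible. With $\lambda \neq 0$ established, I would left-multiply the hypothesis $\A v = \lambda v$ by $\inv{\A}$ to obtain $v = \lambda\, \inv{\A} v$, and then divide by the nonzero scalar $\lambda$ to conclude $\inv{\A} v = \frac{1}{\lambda} v$, exhibiting $v$ as an eigenvector of $\inv{\A}$ with eigenvalue $1/\lambda$.

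There is no real obstacle here: the statement is a textbook fact and the argument is entirely elementary. The only subtlety worth flagging is the nonvanishing of $\lambda$ under invertibility, which is precisely what makes $1/\lambda$ well defined and is the single place the invertibility hypothesis is actually used in the second part.
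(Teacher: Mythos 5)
Your proof is correct and follows essentially the same route as the paper's: the same one-line linearity computation for $k\I+\A$, the same observation that invertibility forces $\lambda \neq 0$ (via the nontrivial kernel argument), and the same left-multiplication by $\inv{\A}$ to obtain $\inv{\A}v = \frac{1}{\lambda}v$. Nothing is missing; your explicit flagging of why $\lambda \neq 0$ matters is exactly the one subtlety the paper's proof also addresses.
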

\begin{proof}
Since $(k\I+\A)v=k\I \v+\A v=k\v+\lambda \v=(k+\lambda)\v$, we know $v$ is an eigenvector of $k\I+\A$ with eigenvalue $k+\lambda$. If $\A$ is invertible, we know $\lambda\neq 0$ otherwise there exists a non-zero vector $v$ such that $\A \v=0$, which contradicts with the fact that $\A$ is invertible. Moreover, since $\A \v=\lambda \v$, $\A^{-1}\v=\frac{1}{\lambda}\v$, $\v$ is an eigenvector of $\A^{-1}$ with eigenvector $\frac{1}{\lambda}$.
\end{proof}

\augment*
\begin{proof}
Since $\A$ is a query plan with full column rank, there exists an invertible square matrix $\Q$ such that $\A^t\A=\Q^t\Q$. Moreover, notice that $\A'^t\A=\A^t\A+\B^t\B$, the theorem we are going to prove is equivalent to the following statement: the matrix $(\Q^t\Q)^{-1} - (\Q^t\Q+\B^t\B)^{-1}$ is positive semi-definite. Since $\Q$ is an invertible matrix, for any query $\w$, 
\begin{align*}
&\w^t((\Q^t\Q)^{-1} - (\Q^t\Q+\B^t\B)^{-1})\w\\	
=&((\Q^t)^{-1}\w)^t(\I-\Q(\Q^t\Q+\B^t\B)^{-1}\Q^t)((\Q^t)^{-1}\w).
\end{align*}
Therefore it is enough to show $\I-\Q(\Q^t\Q+\B^t\B)^{-1}\Q^t$ is positive semi-definite, which is equivalent to prove that all eigenvalues of $\Q(\Q^t\Q+\B^t\B)^{-1}\Q^t$ are less than or equal to $1$ according to Proposition~\ref{prop:eigenproperty}. Furthermore, since $\Q(\Q^t\Q+\B^t\B)^{-1}\Q^t$ is invertible, according to Proposition~\ref{prop:eigenproperty}, the statement that all eigenvalues of $\Q(\Q^t\Q+\B^t\B)^{-1}\Q^t$ are less than or equal to $1$ is equivalent to the  statement that all the eigenvalue of its inverse matrix, $(\Q^t)^{-1}(\Q^t\Q+\B^t\B)\Q^{-1}$, is larger than or equal to $1$. Notice $(\Q^t)^{-1}(\Q^t\Q+\B^t\B)\Q^{-1}=\I+(\B\Q^{-1})^t(\B\Q^{-1})$, we can apply Proposition~\ref{prop:eigenproperty} again and to prove the eigenvalues of $(\B\Q^{-1})^t(\B\Q^{-1})$ are non-negative. Since for any vector $\v$, $\v^t(\B\Q^{-1})^t(\B\Q^{-1})\v\geq 0$, we know is semi-positive definite, hence all its eigenvalues are non-negative.

Moreover, according to Proposition~\ref{prop:eigenproperty}, $(\Q^t)^{-1}\w$ is an eigenvector of $\I-\Q(\Q^t\Q+\B^t\B)^{-1}\Q^t$ with eigenvalue $0$ if and only if it is an eigenvector of $\Q(\Q^t\Q+\B^t\B)^{-1}\Q^t=(\I+ (\B\Q^{-1})^t(\B\Q^{-1}))^{-t}$ with eigenvalue $1$, which is equivalent with the fact that $(\Q^t)^{-1}\w$ is an eigenvector of $(\B\Q^{-1})^t(\B\Q^{-1})$ with eigenvalue $0$. Furthermore, notice the fact that $\A^t\A$ is an invertible matrix and $(\B\Q^{-1})^t(\B\Q^{-1})(\Q^t)^{-1}\w=0$ is equivalent to $(\B\Q^{-1})(\Q^t)^{-1}\w=\B(\Q^t\Q)^{-1}\w=\B(\A^t\A)^{-1}\w=0$. Therefore $(\B\Q^{-1})^t(\B\Q^{-1})$ has eigenvalue $0$ if and only if $\B$ does not have full column rank. When $\B$ does not have full column rank, the set $\w_{\B}=\{\w|\B\w=0\}$ is not empty, and the set of all non-zero queries $\w$ such that $\w^t(\A'^t\A')^{-1}\w=\w^t(\A^t\A)^{-1}\w$ can be represented as $\{\A^t\A\w | \w\in \w_{\B}\}=\{\A^t\A\w | \w\in \w\B=0\}$.
\end{proof}


\section{Representing the Haar wavelet technique}
\label{app:haar}

\def\estHaar{\estx_{Haar}}
\def\estW{\estx_{\Wav_n}}

The representation of Haar wavelet queries in Section~\ref{sec:apps} is different from their original presentation in Xiao et al.~\cite{xiao2010differential}. The following theorem shows the equivalence of both representations.

\begin{proposition}[Equivalence of Haar wavelet representations]
	Let $\estHaar$ denote the estimate derived from the Haar wavelet approach of Xiao et al.~\cite{xiao2010differential}.  Let $\estW$ denote the estimate from asking query $\W_n$.  Then $\estHaar$ and $\estW$ are equal in distribution, i.e., $Pr[ \estHaar \leq x] = Pr[ \estW \leq x]$ for any vector $x$.
\end{proposition}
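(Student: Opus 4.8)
The plan is to put both estimators into the common form ``apply an invertible transform, add independent Laplace noise, then invert the transform,'' and to reduce the claimed distributional equality to a coordinatewise matching of Laplace scales. The matrix-mechanism estimator is
$$\estW = \Wav_n^{-1}\LM(\Wav_n,\x) = \x + \frac{\sens{\Wav_n}}{\epsilon}\,\Wav_n^{-1}\b,$$
where $\b$ is a vector of i.i.d.\ $\Lap(1)$ variables and $\sens{\Wav_n}=\log_2 n + 1$. First I would rewrite the construction of Xiao et al.\ in the same shape: their transform is the \emph{normalized (average-based)} Haar matrix $\mathbf{\Phi}$, to which they add an independent Laplace vector $\mathbf{e}$ whose scale depends on the level of each coefficient, followed by inversion, so that $\estHaar = \mathbf{\Phi}^{-1}(\mathbf{\Phi}\x+\mathbf{e})=\x+\mathbf{\Phi}^{-1}\mathbf{e}$. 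Thus the two estimators differ only in their additive noise terms, and it suffices to prove $\mathbf{\Phi}^{-1}\mathbf{e} \stackrel{d}{=} (\sens{\Wav_n}/\epsilon)\,\Wav_n^{-1}\b$.

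The two sources of discrepancy are exactly the two ``superficial'' differences flagged in the footnote of Section~\ref{sec:apps}: averages versus sums, and per-level versus uniform noise. I would capture the first by observing that $\mathbf{\Phi}=\D\,\Wav_n$ for a fixed diagonal matrix $\D=diag(d_1,\dots,d_n)$, where $d_i$ records the normalization applied in row $i$ (a per-query scalar rescaling of $\Wav_n$). Then $\mathbf{\Phi}^{-1}\mathbf{e}=\Wav_n^{-1}\D^{-1}\mathbf{e}$, so the goal reduces to $\D^{-1}\mathbf{e} \stackrel{d}{=} (\sens{\Wav_n}/\epsilon)\,\b$. Since both sides are vectors of \emph{independent} scaled Laplace variables, equality in distribution holds if and only if the scales agree in every coordinate, i.e.\ $\eta_i / d_i = \sens{\Wav_n}/\epsilon$ for all $i$, where $\eta_i$ denotes the scale of the $i$-th entry of $\mathbf{e}$.

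Equivalently, this is a direct application of the two scaling results proved earlier in the appendix: the \emph{Error equivalence under scalar multiplication} proposition lets me treat the averaging factor $d_i$ as a free scalar on each strategy row, and the \emph{Simulating unequal noise strategies} proposition lets me replace the level-dependent noise $\mathbf{e}$ by uniform $\Lap(b)$ noise on the row-rescaled strategy $\R\,\mathbf{\Phi}$. Combining the two rescalings collapses the Xiao et al.\ mechanism to an equal-noise matrix mechanism whose strategy is a scalar multiple of $\Wav_n$, and the scalar-multiplication proposition removes the remaining constant. Either route leaves the same single arithmetic check.

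The main obstacle --- and the only content beyond bookkeeping --- is verifying that check: one must substitute the actual noise magnitudes $\eta_i$ used by Xiao et al.\ (which vary with the level in the Haar tree and are calibrated to guarantee $\epsilon$-differential privacy) together with the corresponding normalization factors $d_i$, and confirm that the ratio $\eta_i/d_i$ equals the \emph{same} constant $(\log_2 n+1)/\epsilon$ across all levels. This is precisely where the level-dependent scaling of the noise cancels the level-dependent normalization of the coefficients; the cancellation is a finite computation over the $\log_2 n + 1$ distinct levels, but it is the step that must be done carefully, since it is what forces the effective additive noise $\mathbf{\Phi}^{-1}\mathbf{e}$ to coincide in distribution with the uniform-sensitivity noise $(\sens{\Wav_n}/\epsilon)\,\Wav_n^{-1}\b$, which then yields the equality of the joint distributions claimed in the statement.
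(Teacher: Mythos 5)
Your proposal is correct and takes essentially the same route as the paper's own proof: the paper likewise factors Xiao et al.'s average-based Haar matrix through a diagonal rescaling (it writes $\Wav_n = \R\A$ with $r_{ii}=\mathcal{W}(c_i)$, i.e.\ your $\mathbf{\Phi}=\D\Wav_n$ with $\D=\inv{\R}$) and then performs exactly the scale check you defer, confirming that Xiao et al.'s level-dependent scales $b_i=(1+\log n)/(\epsilon\,\mathcal{W}(c_i))$ satisfy $\eta_i/d_i=(1+\log n)/\epsilon=\sens{\Wav_n}/\epsilon$ in every coordinate. The only difference is that the paper carries this cancellation out explicitly, obtaining both estimates as the identical expression $\x+\inv{(\R\A)}\left(\frac{1+\log n}{\epsilon}\right)\b$ in the same noise vector $\b$ (a coupling, which immediately yields the claimed equality in distribution), whereas you correctly identify the check as the crux but leave the substitution unperformed.
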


\begin{proof}
	Given vector $\x$, the Haar wavelet is defined in terms of a binary tree over $\x$ such that the leaves of the tree are $\x$.  
	
Each node in the tree is associated with a coefficient.
Coefficient $c_i$ is defined as $c_i = (a_L - a_R)/2$ where $a_L$ ($a_R$) is the average of the leaves in the left (right) subtree of $c_i$.  Each $c_i$ is associated with a weight $\mathcal{W}(c_i)$ which is equal to the number of leaves in subtree rooted at $c_i$.  (In addition, there is a coefficient $c_0$ that is the equal to the average of $\x$ and $\mathcal{W}(c_0) = n$).

An equivalent definition for $c_i$ is $c_i = \sum_{j=1}^{n} x_j z_i(j)$ where for $i > 0$,
	\begin{eqnarray*}
	z_i(j) & = &
	\left\{
	\begin{array}{l}
	1/\mathcal{W}(c_i),~~\mbox{if $j$ is in the left subtree of $c_i$}\\
	-1/\mathcal{W}(c_i),~~\mbox{if $j$ is in the right subtree of $c_i$}\\
	0,~~\mbox{otherwise}\\
	\end{array}
	\right.
	\end{eqnarray*}
For $i = 0$, then $z_i(j)$ is equal to $1/\mathcal{W}(c_0)$ for all $j$.

Let $\A$ be a matrix where $a_{ij} = z_i(j)$.  The $i^{th}$ row of $\A$ corresponds to coefficient $c_i$.  Since there are $n$ coefficients, $\A$ is an $n \times n$ matrix.

The approach of \cite{xiao2010differential} computes the following $\y_{Harr} = \A \x + \E$ where $\E$ is an $n \times 1$ vector such that each $\E_i$ is an independent sample from a Laplace distribution with scale $b_i = \frac{1 + \log n}{\epsilon \mathcal{W}(c_i)}$.  Observe that $\E$ can be equivalently represented as:
\[
\E = \inv{\R} \left(\frac{1 + \log n}{\epsilon}\right) \b
\]
where $\R$ is an $n \times n$ diagonal matrix with $r_{ii} = \mathcal{W}(c_i)$.
The estimate for $\x$ is then equal to:
\begin{align*}
	\estHaar &= \inv{\A} \y_{Harr} 
	= \x + \inv{\A} \E \\
	&= \x + \inv{\A} \inv{\R} \left(\frac{1 + \log n}{\epsilon}\right) \b \\
	&= \x + \inv{(\R\A)} \left(\frac{1 + \log n}{\epsilon}\right) \b
\end{align*}
We now describe an equivalent approach based on the matrix $\Wav_n$.  Observe that $\Wav_n = \R \A$.  The sensitivity of $\Wav_n$ is $\sens{\Wav_n} = 1 + \log n$.  Using the matrix mechanism, the estimate $\estW$ is:
\begin{align*}
	\estW &= \inv{\Wav_n} \left( \Wav_n \x + (\frac{\sens{\Wav_n}}{\epsilon})\b) \right) \\
	&= \x + \inv{\Wav_n} \frac{\sens{\Wav_n}}{\epsilon}\b \\
	&= \x + \inv{(\R\A)} \left(\frac{1 + \log n}{\epsilon}\right) \b
\end{align*}
\end{proof}

\section{Analysis of the hierarchical and wavelet strategies}\label{app:ana}
In Section~\ref{sec:apps} we demonstrated the results of applying the matrix mechanism to analyze hierarchical and wavelet scheme. In this section, we show the detailed analysis to those results.

\subsection{Eigen-decomposition of $\H_n$ and $\Wav_n$} \label{app:eigen}

The following eigen-decomposition shows the similarity between $\H_n$ and $\Wav_n$.
 
\begin{theorem}
Let $n$ be a power of $2$, so that $n=2^k$. The eigenvalues and their corrsponding eigienvectors of $\H_n^t\H_n$ and $\Wav_n^t\Wav_n$ are shown as Table~\ref{tab:eigens}. ($\mathbf{0}_{a\times b}$ and $\mathbf{1}_{a\times b}$ are the $a\times b$ matrices whose entries are all $0$ and $1$, respectively).

\begin{table*}[ht]
\[\begin{array}{c|c|c|c}
\mbox{eigenvalue of }\H_n & \mbox{eigenvalue of }\Wav_n & \mbox{order} & \mbox{eigenvector}\\
\hline 
\multirow{4}{*}{$1$} & \multirow{4}{*}{$2$} & \multirow{4}{*}{$2^{k-1}$} & [1, -1,\mathbf{0}_{1\times (2^k-2)}]\\
& & & [0, 0, 1, -1, \mathbf{0}_{1\times (2^k-4)}]\\
&  & & \cdots\\
& & & [\mathbf{0}_{1\times (2^k-2)}, 1, -1]\\
\hline
\multirow{4}{*}{$3$} & \multirow{4}{*}{$4$} & \multirow{4}{*}{$2^{m-2}$} & [\mathbf{1}_{1\times 2}, -\mathbf{1}_{1\times 2}, \mathbf{0}_{1\times (2^k-4)}]\\
& & & [\mathbf{0}_{1\times 4}, \mathbf{1}_{1\times 2}, -\mathbf{1}_{1\times 2}, \mathbf{0}_{1\times (2^k-8)}]\\
&  & & \cdots\\
& & & [\mathbf{0}_{1\times (2^k-4)}, \mathbf{1}_{1\times 2}, -\mathbf{1}_{1\times 2},]\\
\hline
\cdots & \cdots & \cdots & \cdots \\
\hline
\multirow{4}{*}{$2^k-1$} & \multirow{4}{*}{$2^k$} & \multirow{4}{*}{$2^{m-k}$} & [\mathbf{1}_{1\times 2^{k-1}}, -\mathbf{1}_{1\times 2^{k-1}}, \mathbf{0}_{1\times (2^k-2^k)}] \\
& & & [ \mathbf{0}_{1\times 2^k}, \mathbf{1}_{1\times 2^{k-1}}, -\mathbf{1}_{1\times 2^{k-1}}, \mathbf{0}_{1\times (2^k-2^{k+1})}]\\
& & & \cdots\\
& & & [ \mathbf{0}_{1\times (2^k-2^k)}, \mathbf{1}_{1\times 2^{k-1}}, -\mathbf{1}_{1\times 2^{k-1}}] \\
\hline
\cdots & \cdots & \cdots & \cdots \\
\hline
2^k -1& 2^k & 1 & [ \mathbf{1}_{1\times 2^{k-1}}, -\mathbf{1}_{1\times 2^{k-1}}]\\
\hline
2^{k+1}-1 & 2^k & 1 & \mathbf{1}_{1\times 2^{k}}\\
\hline
\end{array}
\]
\caption{\label{tab:eigens} Eigenvalues and eigenvectors for $\H_n^t\H_n$ and $\Wav_n^t\Wav_n$.}
\end{table*}
\end{theorem}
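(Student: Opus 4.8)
The plan is to show that the $n$ unnormalized Haar vectors listed in Table~\ref{tab:eigens} form a complete orthogonal eigenbasis for \emph{both} $\H_n^t\H_n$ and $\Wav_n^t\Wav_n$, and then simply to read off the eigenvalue attached to each. For a dyadic interval $J$ of length $2^j$ (with $1\le j\le k$), let $w_J$ denote the vector that is $+1$ on the left half of $J$, $-1$ on the right half, and $0$ elsewhere, and let $w_0=\mathbf{1}_{1\times n}$ be the all-ones vector. First I would check that these $n$ vectors are mutually orthogonal: two of them with disjoint supports are trivially orthogonal, while for nested supports $J'\subsetneq J$ the smaller vector lies entirely in one half of $J$, on which $w_J$ is constant, so their inner product is a constant times $\sum_a (w_{J'})_a=0$; and $w_0$ is orthogonal to every $w_J$ since each $w_J$ sums to zero. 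Counting gives $2^{k-j}$ vectors at each length $2^j$ plus the single $w_0$, for a total of $\sum_{j=1}^{k}2^{k-j}+1=2^k=n$ linearly independent vectors, so once each is shown to be an eigenvector we will have the entire spectrum.

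For $\Wav_n$ I would use the fact that its rows are exactly the vectors $w_0,\{w_J\}$. Writing $\Wav_n^t\Wav_n=\sum_r r^t r$ as a sum of outer products over the rows $r$, and invoking row-orthogonality ($r\,w^t=0$ unless $r=w$), each basis vector $w$ satisfies $\Wav_n^t\Wav_n\,w^t=\|w\|^2\,w^t$. Since $\|w_J\|^2=2^j$ and $\|w_0\|^2=n=2^k$, this reproduces the $\Wav_n$ column of the table immediately.

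The substantive step is $\H_n$. Here I would write $\H_n^t\H_n=\sum_I \mathbf{1}_I^t\mathbf{1}_I$, summed over the indicator row vectors $\mathbf{1}_I$ of all dyadic intervals $I$ at levels $0,\dots,k$, so that $\H_n^t\H_n\,w^t=\sum_I(\mathbf{1}_I w^t)\,\mathbf{1}_I^t$ with $\mathbf{1}_I w^t=\sum_{a\in I}w_a$. The engine of the argument is the nesting property: any dyadic $I$ is either disjoint from $J$, contains $J$, or is contained in $J$. In the first two cases $\mathbf{1}_I w_J^t=0$ (disjoint support, or $w_J$ summing to zero over $J\subseteq I$), so only intervals strictly inside $J$ survive, and each such $I$ lies wholly in one half of $J$, contributing $\pm|I|$. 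Evaluating the coordinate at a point $a$ in the left half, the surviving intervals containing $a$ form the nested chain of lengths $2^{j-1},2^{j-2},\dots,2^0$, so that coordinate equals $2^{j-1}+\dots+2^0=2^j-1$; the right half is symmetric and coordinates outside $J$ vanish. Hence $\H_n^t\H_n\,w_J^t=(2^j-1)\,w_J^t$. For $w_0$ the same bookkeeping over the chain of lengths $2^k,2^{k-1},\dots,2^0$ through each coordinate gives eigenvalue $2^{k+1}-1$.

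The main obstacle is precisely this last computation for $\H_n$: the careful case analysis on dyadic nesting together with the telescoping geometric sum. Everything else—orthogonality, the $\Wav_n$ outer-product argument, and the multiplicity count—is routine once the nesting structure is in place. As a sanity check the resulting eigenvalues match the table: $\H_n$ yields $2^j-1$ against $\Wav_n$'s $2^j$ at each length (smaller by exactly one), while the all-ones direction is the single exception, giving $2^{k+1}-1$ for $\H_n$ versus $2^k$ for $\Wav_n$, exactly as the text describes.
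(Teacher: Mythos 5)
Your proof is correct, but it takes a genuinely different route from the paper's. The paper proceeds by induction on $k$, exploiting the recursive block structure
\[
\H_{2^k}=\left[\begin{array}{cc}\mathbf{1}_{1\times 2^{k-1}} & \mathbf{1}_{1\times 2^{k-1}}\\ \H_{2^{k-1}} & 0\\ 0 & \H_{2^{k-1}}\end{array}\right],
\qquad
\H^t_{2^k}\H_{2^k}=\mathbf{1}_{2^k\times 2^k}+\left[\begin{array}{cc}\H^t_{2^{k-1}}\H_{2^{k-1}} & 0\\ 0 & \H^t_{2^{k-1}}\H_{2^{k-1}}\end{array}\right],
\]
(and the analogous recursion for $\Wav_{2^k}$), then verifies each candidate eigenvector against this sum using the inductive hypothesis, finishing with the same orthogonality-and-counting step you use to conclude completeness. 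You instead argue directly and non-inductively: you expand $\A^t\A=\sum_r r^t r$ over the rows, and for $\H_n$ you run a case analysis on dyadic nesting (disjoint / containing / strictly contained), reducing each coordinate to the telescoping chain sum $2^{j-1}+\cdots+2^0=2^j-1$. Both arguments are sound and carry the same burden of checking orthogonality and counting $\sum_{j=1}^k 2^{k-j}+1=n$ vectors; what differs is what they illuminate. The paper's induction mirrors the recursive construction of the two strategies and dispatches the verification mechanically, but leaves the eigenvalue pattern looking like a coincidence of bookkeeping. Your version makes the pattern transparent: the $\Wav_n$ eigenvalues are just squared norms $\Vert w_J\Vert^2=2^j$ of mutually orthogonal rows, and the $\H_n$ eigenvalue at the same level is $\sum_{i<j}2^i=2^j-1$, smaller by exactly one because the geometric sum stops one term short — with the all-ones vector as the lone exception, picking up the top-level interval to give $2^{k+1}-1$. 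This directly explains the ``smaller by exactly one'' phenomenon the paper remarks on in Section 6.1, and your chain-sum argument would also adapt more readily to hierarchies with branching factor $b>2$, where the paper's binary block recursion would need restructuring.
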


\begin{proof}
We will prove it by induction on $m$. When $k=1$,
\[\H_2^t\H_2=\left[\begin{array}{cc}2 & 1\\1 & 2\end{array}\right],\]
whose eigenvalues are $3$ and $1$ with eigenvector $[1, 1]$ and $[1, -1]$ respectively.
Suppose the conclusion is right for $\H_{2^{k-1}}$. Notice the fact that
\[\H_{2^k}=\left[\begin{array}{cc}
\mathbf{1}_{1\times 2^{k-1}} & \mathbf{1}_{1\times 2^{k-1}}\\
\H_{2^{k-1}} & 0 \\
0 & \H_{2^{k-1}}
\end{array}\right],\]
where $\mathbf{1}_{a\times b}$ is the $a\times b$ matrix whose entries are all $1$. Therefore we have
\begin{align*}
\H^t_{2^k}\H_{2^k}&=\left[\begin{array}{ccc}
\mathbf{1}_{2^{k-1}\times 1} & \H^t_{2^{k-1}} & 0 \\
\mathbf{1}_{2^{k-1}\times 1} & 0 & \H^t_{2^{k-1}}
\end{array}\right]\left[\begin{array}{cc}
\mathbf{1}_{1\times 2^{k-1}} & \mathbf{1}_{1\times 2^{k-1}}\\
\H_{2^{k-1}} & 0 \\
0 & \H_{2^{k-1}}
\end{array}\right] \\
&=\mathbf{1}_{2^{k}\times 1}\mathbf{1}_{1\times 2^{k}} + 
\left[\begin{array}{cc}
\H^t_{2^{k-1}} & 0 \\
0 & \H^t_{2^{k-1}}
\end{array}\right]\left[\begin{array}{cc}
\H_{2^{k-1}} & 0 \\
0 & \H_{2^{k-1}}
\end{array}\right]\\
&=\mathbf{1}_{2^{k}\times 2^k} + 
\left[\begin{array}{cc}
\H^t_{2^{k-1}}\H_{2^{k-1}} & 0 \\
0 & \H^t_{2^{k-1}}\H_{2^{k-1}}
\end{array}\right]
\end{align*}
Notice the case that $\mathbf{1}_{1\times 2^{k-1}}$ is a eigenvector of $\H^t_{2^{k-1}}\H_{2^{k-1}}$ with eigenvalue $2^k-1$, we have:
\begin{align*}
\H^t_{2^k}\H_{2^k}\mathbf{1}_{1\times 2^{k}}&=\mathbf{1}_{2^{k}\times 2^k}\mathbf{1}_{2^{k}\times 1} + 
\left[\begin{array}{cc}
\H^t_{2^{k-1}}\H_{2^{k-1}} & 0 \\
0 & \H^t_{2^{k-1}}\H_{2^{k-1}}
\end{array}\right]\mathbf{1}_{2^{k}\times 1}\\
&=\mathbf{1}_{2^{k}\times 2^k}\mathbf{1}_{2^{k}\times 1} + 
\left[\begin{array}{cc}
\H^t_{2^{k-1}}\H_{2^{k-1}}\mathbf{1}_{2^{k-1}\times 1} & 0 \\
0 & \H^t_{2^{k-1}}\H_{2^{k-1}}\mathbf{1}_{2^{k-1}\times 1}
\end{array}\right]\\
&=2^k\mathbf{1}_{2^{k}\times 1}+(2^k-1)\mathbf{1}_{2^{k}\times 1}=(2^{m+1}-1)\mathbf{1}_{2^{k}\times 1}
\end{align*}
\begin{align*}
\H^t_{2^k}\H_{2^k}\left[\begin{array}{c}\mathbf{1}_{2^{k-1}\times 1} \\ -\mathbf{1}_{2^{k-1}\times 1}\end{array}\right]&=\mathbf{1}_{2^{k}\times 2^k}\left[\begin{array}{c}\mathbf{1}_{2^{k-1}\times 1} \\ -\mathbf{1}_{2^{k-1}\times 1}\end{array}\right] + 
\left[\begin{array}{cc}
\H^t_{2^{k-1}}\H_{2^{k-1}} & 0 \\
0 & \H^t_{2^{k-1}}\H_{2^{k-1}}
\end{array}\right]\left[\begin{array}{c}\mathbf{1}_{2^{k-1}\times 1} \\ -\mathbf{1}_{2^{k-1}\times 1}\end{array}\right]\\
&=\left[\begin{array}{cc}
\H^t_{2^{k-1}}\H_{2^{k-1}}\mathbf{1}_{2^{k-1}\times 1} & 0 \\
0 & -\H^t_{2^{k-1}}\H_{2^{k-1}}\mathbf{1}_{2^{k-1}\times 1}
\end{array}\right]\\
&=(2^{k}-1)\left[\begin{array}{c}\mathbf{1}_{2^{k-1}\times 1} \\ -\mathbf{1}_{2^{k-1}\times 1}\end{array}\right]
\end{align*}
Moreover, for any eigenvector $\mathit{v}$ of $\H_{2^{k-1}}^t\H_{2^{k-1}}$ in Table~\ref{tab:eigens} other than $\mathbf{1}_{1\times 2^{k-1}}$, notice the fact that the sum of all entries in $\mathit{v}$ is $0$, denote the eigenvalue of $\mathit{v}$ as $\mu_{\mathit{v}}$, 
\begin{align*}
\H^t_{2^k}\H_{2^k}\left[\begin{array}{c}\mathit{v} \\ \mathbf{0}_{2^{k-1}\times 1}\end{array}\right]&=\mathbf{1}_{2^{k}\times 2^k}\left[\begin{array}{c}\mathit{v} \\ \mathbf{0}_{2^{k-1}\times 1}\end{array}\right] + 
\left[\begin{array}{cc}
\H^t_{2^{k-1}}\H_{2^{k-1}} & 0 \\
0 & \H^t_{2^{k-1}}\H_{2^{k-1}}
\end{array}\right]\left[\begin{array}{c}\mathit{v} \\ \mathbf{0}_{2^{k-1}\times 1}\end{array}\right]\\
&=\left[\begin{array}{cc}
\H^t_{2^{k-1}}\H_{2^{k-1}}\mathit{v} & 0 \\
0 & 0
\end{array}\right]=\mu_{\mathit{v}}\left[\begin{array}{c}\mathit{v} \\ \mathbf{0}_{2^{k-1}\times 1}\end{array}\right]
\end{align*}
Therefore $[\mathit{v}, 0]$ is a eigenvector of $\H_{2^k}^t\H_{2^k}$ with eigenvalue $\mu_{\mathit{v}}$, and we can show $[0, \mathit{v}]$ is a eigenvector of $\H_{2^k}^t\H_{2^k}$ with eigenvalue $\mu_{\mathit{v}}$ by a similar process. Above all, we proved that the vectors in Table~\ref{tab:eigens} are all eigenvectors of $\H_{2^k}^t\H_{2^k}$ with eigenvalues shown in the Table. Moreover, since any pair of eigenvectors from Table~\ref{tab:eigens} are orthogonal to each other and there are $2^k$ vectors in the Table, we know the table contains all eigenvalues of $\H_{2^k}^t\H_{2^k}$.

Similar as the case of $\H_{2^k}^t\H_{2^k}$, the eigenvectors and eigenvalues of $\Wav_{2^k}^t\Wav_{2^k}$ can be proved by induction on $n$ as well. When $k=1$,
\[\Wav_2^t\Wav_2=\left[\begin{array}{cc}2 & 0\\0 & 2\end{array}\right],\]
whose eigenvector $[1, 1]$ and $[1, -1]$ and both of them have eigenvalue $2$. Suppose Table~\ref{tab:eigens} gives eigenvalues and eigenvectors of $\Wav_{2^{k-1}}^t\Wav_{2^{k-1}}$. Notice the fact that
\begin{align*}
&\Wav_{2^k}=\left[\begin{array}{cc} \raisebox{-10pt}{$\Wav_{2^{k-1}}$} & \mathbf{1}_{1\times 2^{k-1}} \\
 & 0 \\
-\mathbf{1}_{1\times 2^{k-1}} & \raisebox{-10pt}{$\Wav_{2^{k-1}}$}\\
0 & 
\end{array}\right]\\
&\Wav^t_{2^k}\left[\begin{array}{c} \mathbf{1}_{1\times 2^k}\\ 0\end{array}\right]=\mathbf{1}_{2^k\times 2^k}
\end{align*}
Therefore,
\begin{align*}
\Wav^t_{2^k}\Wav_{2^k}&=\left[\begin{array}{cccc}\multicolumn{2}{c}{\Wav_{2^{k-1}}^t} & -\mathbf{1}_{2^{k-1}\times 1} & 0 \\ \mathbf{1}_{2^{k-1}\times 1} & 0 & \multicolumn{2}{c}{\Wav_{2^{k-1}}^t}\end{array}\right]\left[\begin{array}{cc} \raisebox{-10pt}{$\Wav_{2^{k-1}}$} & \mathbf{1}_{1\times 2^{k-1}} \\
 & 0 \\
-\mathbf{1}_{1\times 2^{k-1}} & \raisebox{-10pt}{$\Wav_{2^{k-1}}$}\\
0 & 
\end{array}\right]\\
&=\left[\begin{array}{cc}
\mathbf{1}_{2^{k-1}\times 2^{k-1}} & 0 \\
0 & \mathbf{1}_{2^{k-1}\times 2^{k-1}}
\end{array}\right] + \left[\begin{array}{cc}
\Wav^t_{2^{k-1}}\Wav_{2^{k-1}} & 0 \\
0 & \Wav^t_{2^{k-1}}\Wav_{2^{k-1}}
\end{array}\right]
\end{align*}
The rest of proof follows a similar process of the proof of correctness of eigenvalues and eigenvecotrs of $\H_{2^k}^t\H_{2^k}$.
\end{proof}

Let $\D_{\H_S}$ be the diagonal matrix whose entries are square roots of eigenvalues in Table~\ref{tab:eigens} and $\P_{\H}$ be the matrix whose row vectors are the normalization of eigenvectors in Table~\ref{tab:eigens}. Then $\D_{\H_S}\P_{\H}$ is an eigendecomposition of $\H_n^t\H_n$. Now let us compute the $L_1$ norm of $i$-th column of $\D_{\H_S}\P_{\H}$. Notice the fact that for each eigenvalue $2^j-1$, $1\leq j\leq k$, there exists exactly one eigenvector $v_j$ in Table~\ref{tab:eigens} which is corresponding to this eigenvalue and has non-zero $i$-th entry. Moreover, since there are $2^j$ entries in $v_j$ that are $\pm 1$ and all other entries in $v_j$ are $0$, the nonzero entries in the normalization of $v_j$ are $\pm 2^{-\frac{j}{2}}$. Since the entries in the normalized eigenvector that correspond to eigenvalue $2^{k+1}-1$ are $\pm 2^{-\frac{k}{2}}$, the $L_1$ norm of $i$-th column is:
\[\sum_{j=1}^k 2^{-\frac{j}{2}}\sqrt{2^j-1} + 2^{-\frac{k}{2}}\sqrt{2^{k+1}-1} = \sum_{j=1}^k \sqrt{1-\frac{1}{2^j}} + \sqrt{2-\frac{1}{2^k}}.\]
Therefore the sensitivity of $\D_{\H_S}\P_{\H}$ is $\sum_{j=1}^k \sqrt{1-\frac{1}{2^j}} + \sqrt{2-\frac{1}{2^k}}$. Consider function $f(k)=k+1-\sqrt{1-\frac{1}{2^k}} + \sqrt{2-\frac{1}{2^k}}$, easy computation will show that
\[f(k+1)-f(k)=1+\sqrt{2-\frac{1}{2^k}}-\sqrt{1-\frac{1}{2^{k+1}}} -\sqrt{2-\frac{1}{2^{k+1}}}>0\]
for any positive integer $k$. Since $f(1)\approx  0.06815>0$, we know $f(k)>0$, which means the sensitivity of $\D_{\H_S}\P_{\H}$ is always smaller than the sensitivity of $\H_n$.

\eat{
\begin{proof}
Similar as the case of $\H_n^t\H_n$, let us prove it by induction on $n$. When $n=1$,
\[\W_1=\left[\begin{array}{cc}2 & 0\\0 & 2\end{array}\right],\]
whose eigenvector $[1, 1]$ and $[1, -1]$ and both of them have eigenvalue $2$. Suppose Table~\ref{tab:eigens} gives eigenvalues and eigenvectors of $\W_{n-1}^t\W_{n-1}$. Notice the fact that
\begin{align*}
&\W_n=\left[\begin{array}{cc} \raisebox{-10pt}{$\W_{n-1}$} & \mathbf{1}_{1\times 2^{n-1}} \\
 & 0 \\
-\mathbf{1}_{1\times 2^{n-1}} & \raisebox{-10pt}{$\W_{n-1}$}\\
0 & 
\end{array}\right]\\
&\W^t_{n}\left[\begin{array}{c} \mathbf{1}_{1\times 2^n}\\ 0\end{array}\right]=\mathbf{1}_{2^n\times 2^n}
\end{align*}
Therefore,
\begin{align*}
\W^t_n\W_n&=\left[\begin{array}{cccc}\multicolumn{2}{c}{\W_{n-1}^t} & -\mathbf{1}_{2^{n-1}\times 1} & 0 \\ \mathbf{1}_{2^{n-1}\times 1} & 0 & \multicolumn{2}{c}{\W_{n-1}^t}\end{array}\right]\left[\begin{array}{cc} \raisebox{-10pt}{$\W_{n-1}$} & \mathbf{1}_{1\times 2^{n-1}} \\
 & 0 \\
-\mathbf{1}_{1\times 2^{n-1}} & \raisebox{-10pt}{$\W_{n-1}$}\\
0 & 
\end{array}\right]\\
&=\left[\begin{array}{cc}
\mathbf{1}_{2^{n-1}\times 2^{n-1}} & 0 \\
0 & \mathbf{1}_{2^{n-1}\times 2^{n-1}}
\end{array}\right] + \left[\begin{array}{cc}
\W^t_{n-1}\W_{n-1} & 0 \\
0 & \W^t_{n-1}\W_{n-1}
\end{array}\right]
\end{align*}
Since the eigenvectors in Table~\ref{tab:eigenvectors_of_wQ} are exactly the same as the eigenvectors in Table~\ref{tab:eigenvectors_of_hQ}, follow a similar process of the proof of correctness of Table~\ref{tab:eigenvectors_of_hQ}, we know Table~\ref{tab:eigenvectors_of_wQ} gives all eigenvalues of $\Wav_n^t\Wav_n$.
\end{proof}
}

Notice the fact that both $\mathbf{1}_{1\times 2^{k}}$ and $[ \mathbf{1}_{1\times 2^{k-1}}, -\mathbf{1}_{1\times 2^{k-1}}]$ are eigenvectors of $\Wav_n^t\Wav_n$ corresponding to eigenvalue $2^k$, we know $[ \mathbf{1}_{1\times 2^{k-1}}, \mathbf{0}_{1\times 2^{k-1}}]$ and $[ \mathbf{0}_{1\times 2^{k-1}}, \mathbf{1}_{1\times 2^{k-1}}]$ are also eigenvectors of $\Wav_n^t\Wav_n$ corresponding to eigenvalue $2^k$.
Let $\D_{\Wav_S}$ be the diagonal matrix whose entries are square roots of eigenvalues in Table~\ref{tab:eigens} and $\P_{\Wav}$ be the matrix whose first $n-2$ row vectors are normalization of eigenvectors in Table~\ref{tab:eigens} and last two row vecotrs are normalization of  $[ \mathbf{1}_{1\times 2^{k-1}}, \mathbf{0}_{1\times 2^{k-1}}]$ and $[ \mathbf{0}_{1\times 2^{k-1}}, \mathbf{1}_{1\times 2^{k-1}}]$. Then $\D_{\Wav_S}\P_{\Wav}$ is an eigendecomposition of $\Wav_n^t\Wav_n$. Similar as $\D_{\H_S}\P_{\H}$, the $L_1$ norm of $i$-th column of $\D_{\Wav_S}\P_{\Wav}$ is:
\[\sum_{j=1}^{k-1} 2^{-\frac{j}{2}}\sqrt{2^j} + 2^{-\frac{k}{2}}\sqrt{2^{k-1}} = k+\sqrt{2}-1.\]
Therefore the sensitivity of $\D_{\Wav_S}\P_{\Wav}$ is $k+\sqrt{2}-1$.

\subsection{Error analysis} \label{app:error}

Based on the eigen-decomposition in the previous section. We now can formally analyze the error of $\H_n$ and $\Wav_n$. 
\Compareerrorwh*
\begin{proof}
Recall that the error for any given query $\w$ and strategy $\A$ is $\frac{2}{\epsilon^2}\Delta_A^2\w^t(\A^t\A)^{-1}\w$. Since $\H_n$ and $\Wav_n$ have the same sensitivity, we need only compare the profile term $\w^t(\A^t\A)^{-1}\w$.  Let $\P_\M\D_\M\P_\M^t$ be the spectral decomposition of $(\A^t\A)^{-1}$. Notice that:
\begin{align*}
\tr(\w^t(\A^t\A)^{-1}\w)=&\tr(\w^t\P_\M\D_\M\P_\M^t\w)\\
=&\tr(\P^t_\M \w\w^t\P_\M\D_\M).
\end{align*}
Since $\H_n$ and $\Wav_n$ have the same eigenvectors, the only difference in error is due to the difference in eigenvalues.  From Table~\ref{tab:eigens} we know ratio between their corresponding eigenvalues is in range $[\frac{1}{2}, 2]$, and that all eigenvalues are positive. Therefore, the ratio between their errors of answering $\w$ is in $[\frac{1}{2}, 2]$.
\end{proof}

\MaximumError*

\begin{proof}
Since $\W_n$ and $\H_n$ are asymptotically equivalent, we can derive the error bounds for either.  We analyze the error of $\W_n$.  Let $n=2^k$, consider the range query $[2^k-\frac{1}{3}(4^{\lfloor\frac{k-1}{2}\rfloor+1}-1), 2^k+\frac{1}{3}(4^{\lfloor\frac{k-1}{2}\rfloor+1}-1)]$.  The error of this query is $\Theta(\log^3 n)$, which follows from algebraic manipulation of Equation~\ref{eq:var}, facilitated by knowing the eigen decomposition of $\ep{\W_n}$. Since Xiao et al.~\cite{xiao2010differential} have already shown that the worst case error of $\W_n$ is $O(\log^3 n)$, we know the maximum error of answering any query in $\Wrang$ is $\Theta(\log^3 n)$.

Moreover, it follows from algebraic manipulation that the error of answering any query $\w$ where the number of non-zero entries is 1 is $O(\log^2n)$. Therefore the error of  any 0-1 query is $O(n\log^2n)$. Consider the query $(0,1,0,1,\ldots, 0,1)$: it can can be shown to have error $\Theta(n\log^2 n)$. Therefore the maximum error of answering any query in $\Wbool$ is $\Theta(n\log^2 n)$.
 
 Recall 
 \[\totalerror{\A}{\W} = \frac{2}{\epsilon^2}\Delta_{\A}\tr(\WW(\A^t\A)^{-1}\WW^t).
\]
Total error of workloads $\Wrang$, $\Wbool$ can be computed by applying the equation above to strategies $\H_n, \W_n$ and $\I_n$.
\end{proof}

\section{The geometry of a strategy} \label{app:geometry}
Finding the optimal strategy for a given workload will require considering both the shape of profile and the sensitivity, as discussed in Section~\ref{sec:optimizing}. We use an example to demonstrate the geometry of the error profile in Sec.~\ref{sec:app:profile}. In Sec.~\ref{sec:app:sensitivity}, we look at the geometry of sensitivity and how the $\Q_\A$ matrix of the decomposition of $\A$ affects the sensitivity of $\A$.

\subsection{The geometry of the error profile}\label{sec:app:profile}

\eat{As discussed in the proof of Theorem~\ref{thm:svd}, for any strategy $\A$, the error profile $\M=\ep{\A}$ is a positive definite matrix, that is, one for which $\w\M\w^t > 0$ whenever $\w \neq 0$.  This matches our intuition since the error for any query must be positive.}
As discussed in Sec.~\ref{sec:analysis}, for any strategy $\A$, the error profile $\M=\ep{\A}$ is a positive definite matrix, that is, one for which $\w\M\w^t > 0$ whenever $\w \neq 0$. The following example gives the geometry of two error profiles.

\begin{example}
\label{ex:ellipse}
Figure~\ref{fig:ellipse} shows the ellipses corresponding to two error profiles 
$\M_1 = [ \begin{smallmatrix}
1 & 0 \\ 0 & 1
\end{smallmatrix} ]	$ 
and 
$\M_2 = [\begin{smallmatrix}
2 & \mbox{-}1.5 \\	\mbox{-}1.5  & 2
\end{smallmatrix}]
$.  
Each point in the x-y plane corresponds to a query $\w = [c_1, c_2]$.  Those points on each ellipse correspond to queries such that $\w \M \w^{t} = 1$.  $\M_1$ is a circle whereas $\M_2$ is a stretched and rotated ellipse.  The figure shows that $\w \M_2 \w^{t} < \w \M_1 \w^{t}$ for queries near and along the line $y = x$.

The profile $\M_1$ has eigenvalues $(1,1)$ and eigenvectors $\PM = [\begin{smallmatrix}
1 & 0\\	0 & 1
\end{smallmatrix}]$, indicating no stretching or rotation.  The profile $\M_2$ has eigenvalues $(7/2, 1/2)$ and its eigenvectors correspond to a $45^{\circ}$ rotation, indicating that the major axis is stretched (by a $\sqrt{7}$ ratio to the minor axis) and rotated to align with $y=x$.

\end{example}

	\begin{figure}[t]
	\centering
	\includegraphics[scale=0.4]{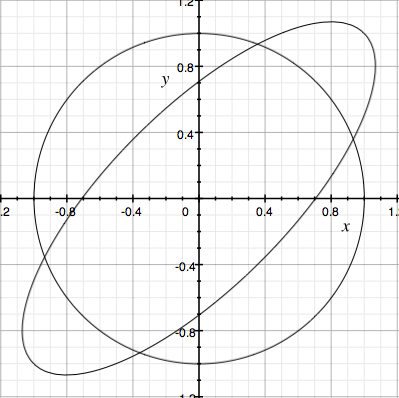}
	\caption{\label{fig:ellipse} For the error profiles $\M_1$ and $\M_2$ described in Example~\ref{ex:ellipse}, this figure shows the ellipses defined by $\w\M_1\w^t=1$, a circle, and $\w\M_2\w^t=1$, an ellipse rotated $45^{\circ}$.  The profile term $f = \w\M_1\w$ is an elliptic paraboloid coming out of the page, centered around the z axis.}
	\end{figure}

	The decomposition can also guide the design of new strategies.  We can design the error profile by choosing values for the diagonal, and choosing a rotation.  Stretching and rotating in a direction makes queries in that direction relatively more accurate than queries in the other directions. 
	
\subsection{The geometry of sensitivity} \label{sec:app:sensitivity}
While we can design a strategy to obtain a desired profile, the error depends not only on the profile, but also on the sensitivity. For example, we can apply Theorem~\ref{thm:profile-strategy} to the previous example.

\begin{example}
	Applying Theorem~\ref{thm:profile-strategy}, we can obtain query strategies $\A_1$ and $\A_2$ that achieve profiles $\M_1$ and $\M_2$ respectively.  %
	$\A_1 = [\begin{smallmatrix}
	1 & 0 \\ 0 & 1
	\end{smallmatrix}]
	$ and %
	$\A_2 = [ \begin{smallmatrix}
	\sqrt{2/7} & 0\\
	0 & \sqrt{2}
	\end{smallmatrix} ]
	[\begin{smallmatrix}
	cos(\pi/4) & -sin(\pi/4)\\
	sin(\pi/4) & cos(\pi/4)
	\end{smallmatrix}]
	= [\begin{smallmatrix}
	1/\sqrt{7} & -1/\sqrt{7}\\
	1 & 1
	\end{smallmatrix}]$.
\end{example}

As it is shown above, $\A_2$ has higher sensitivity than $\A_1$, so while it is more accurate for queries along $y=x$, the difference is less pronounced than Figure~\ref{fig:ellipse} might suggest.  

The sensitivity of a query strategy is determined by its columns.  If $\A$ is decomposed as $\A = \Q_\A\D_\A\P^t_\A$ then the columns of $\P^t_\A$ are orthogonal vectors, but $\D_\A$ stretches the axes so that the columns are no longer necessarily orthogonal. The matrix $\Q_\A$ then rotates the column vectors of $\D_\A\P^t_\A$, but we know that any such rotation will not impact the error profile.  The rotation does impact the sensitivity, because each rotation changes the column vectors and therefore changes the maximum absolute sum of the column vectors.  Since sensitivity is measured by the $L_1$ norm of the column vectors, we can think of an $L_1$ ``ball'' (it is actually diamond shaped) which consists of all points with $L_1$ norm equal to a constant $c$.  If we view the column vectors of $\A$ as points in $n$ dimensional space, the sensitivity is the smallest $L_1$ ball that contains the points.  Minimizing the sensitivity of a given profile (Problem~\ref{problem:minSens}) is therefore equivalent to finding the rotation of the columns in $\A$ that permits them to be contained in the smallest $L_1$ ball.

\end{document}